\documentclass[12pt]{article}
\usepackage{natbib}
\usepackage{undertilde,graphicx,bm,multirow}
\usepackage{amsthm,amsfonts,amssymb,color,comment,pdfpages,csvsimple,colortbl}
\usepackage[tbtags]{amsmath}
\usepackage{comment}
\usepackage[hidelinks]{hyperref} 
\usepackage[top=1in,bottom=1in,left=1in,right=1in]{geometry}
\usepackage{setspace}
\usepackage{algorithm}
\usepackage{algpseudocode}
\usepackage{enumerate}
\usepackage{subcaption}
\newtheorem{thm}{Theorem}
\newtheorem{lem}{Lemma}

\newtheorem*{fact}{Fact}
\newtheorem{Def}{Definition}

\numberwithin{equation}{section}

\definecolor{Red}{rgb}{.9,0,0}

\begin{document}
\title{Bayesian inference for Gaussian graphical models beyond decomposable graphs}
\author{ Kshitij Khare, \emph{University of Florida, USA}\\
  Bala Rajaratnam, \emph{Stanford University, USA}\\
  Abhishek Saha, \emph{University of Florida, USA}\\  
  } \date{}

\maketitle

\onehalfspacing

\begin{abstract}
Bayesian inference for graphical models has received much attention in the literature in recent years. It is well known that when the graph $G$ is decomposable, Bayesian inference is significantly more tractable than in the general non-decomposable setting. Penalized likelihood inference on the other hand has made tremendous gains in the past few years in terms of scalability and tractability. Bayesian inference, however, has not had the same level of success, though a scalable Bayesian approach has its respective strengths, especially in terms of quantifying uncertainty.  To address this gap, we propose a scalable and flexible novel Bayesian approach for estimation and model selection in Gaussian undirected graphical models. We first develop a class of generalized $G$-Wishart distributions with multiple shape parameters for an arbitrary underlying graph. This class  contains the $G$-Wishart distribution as a special case. We then introduce the class of Generalized Bartlett (GB) graphs, and derive an efficient Gibbs sampling algorithm to obtain posterior draws from generalized $G$-Wishart distributions corresponding to a GB graph. The class of Generalized Bartlett graphs contains the class of decomposable graphs as a special case, but is substantially larger than the class of decomposable graphs. We proceed to derive theoretical properties of the proposed Gibbs sampler. We then demonstrate that the proposed Gibbs sampler is scalable to significantly higher dimensional problems as compared to using an accept-reject or a Metropolis-Hasting algorithm. Finally, we show the efficacy of the proposed approach on simulated and real data.

 \vspace{0.5cm}
  \noindent {\bf Keywords:} Gaussian graphical models, Gibbs sampler, Generalized Bartlett graph, Generalized G-Wishart distribution, Scalable Bayesian inference
\end{abstract}

\section{Introduction}

\noindent
Gaussian graphical models have found widespread use in many application areas. Besides standard penalized likelihood based approaches (see \cite{kharesangraja2013} and references therein), Bayesian methods have also been proposed in the literature for analyzing undirected Gaussian graphical models \citep[see][to name just a few]{ascipiccioni2007, dawidlauritzen1993, letacmassam, mitsakakis2011, rajamassamcarv, roverato2000, roverato2002, carvalhowang}. Bayesian methods have the distinct and inherent advantage that they can incorporate prior information and yield a full posterior for the purposes of uncertainty quantification (and not just a point estimate),  whereas standard frequentist approaches for uncertainty quantification (such as the bootstrap) may be computationally burdensome and/or break down in high dimensional settings. However, it is well known that Bayesian methods for graphical models in high dimensional settings lag severely behind their regularized likelihood based counterparts, in the sense that they are not scalable except under restrictive assumptions on the underlying sparsity pattern (such as for decomposable graphs). Hence a scalable and more general approach to graphical models, with theoretical and computational safeguards, is critical to leveraging the advantages of posterior inference. 

To outline the issues with current Bayesian methods more clearly, consider {\it i.i.d.} vectors ${\bf Y}_1, {\bf Y}_2, \cdots, {\bf Y}_n$ drawn from a $p$-variate normal distribution with mean vector ${\bf 0}$ and a sparse inverse covariance matrix $\Omega$. The sparsity pattern in $\Omega$ can be encoded in terms of a graph $G$ on the set of variables as follows. If the variables $i$ and $j$ do not share an edge in $G$, then $\Omega_{ij} = 0$. Hence, an undirected (or concentration) graphical model corresponding to $G$ restricts the inverse covariance matrix $\Omega$ to a submanifold of the cone of positive definite matrices (referred to as $\mathbb{P}_G$). A Bayesian statistical analysis of these models requires specification of a prior distribution (supported on $\mathbb{P}_G$) for $\Omega$. \cite{dawidlauritzen1993} introduced a class of prior distributions for $\Sigma = \Omega^{-1}$ called the Hyper Inverse Wishart (HIW) distributions. The induced class of prior distributions for $\Omega$ (supported on $\mathbb{P}_G$) is known as the class of $G$-Wishart distributions (see  \cite{roverato2000}). This class of prior distributions is quite useful and popular, and has several desirable properties, including the fact that it corresponds to the Diaconis-Ylvisaker class of conjugate priors for the concentration graph model corresponding to the graph $G$.

Closed form computations of relevant quantities corresponding to the $G$-Wishart distribution, such as expected value of the precision matrix and quantiles, are in general available only if the underlying graph $G$ is decomposable, i.e., $G$ does not have any induced cycle of length greater than or equal to 4. A variety of approaches have been developed in the literature to generate samples from the $G$-Wishart distribution corresponding to a general non-decomposable graph. \cite{ascipiccioni2007}  have developed a maximal clique based Markov Chain Monte Carlo (MCMC) approach to sample from the $G$-Wishart distribution corresponding to a general graph $G$. \cite{lenkoski2013} develops a direct sampler for $G$-Wishart distributions corresponding to a general graph $G$. This approach uses an iterative algorithm to minimize an objective function over the space of positive definite matrices with appropriate sparsity constraints. \cite{carvalhowang} have developed an accept-reject algorithm to generate direct samples from the $G$-Wishart distribution corresponding to a general graph $G$. \cite{mitsakakis2011} have developed a Metropolis-Hastings based MCMC approach for the same. 

While the $G$-Wishart prior is clearly very useful for Bayesian inference in graphical models, it has an  important drawback. In particular, the $G$-Wishart distribution has only one shape parameter, which makes it potentially inflexible and restrictive in terms of prior specification. \cite{letacmassam} address this issue by constructing the so-called $W_{\mathbb{P}_G}$ and $W_{Q_G}$ families of distributions which are flexible in the sense that they have multiple shape parameters. These distributions include the $G$-Wishart as a special case, and form a standard conjugate family of prior distributions for undirected decomposable graphical models. The construction of the Letac and Massam distributions uses the structure associated with decomposable graphs. It would thus be useful to develop a class of prior distributions which is flexible (multiple shape parameters) and leads to tractable Bayesian inference for non-decomposable graphs.

In this paper, we aim to develop a scalable and flexible Bayesian approach for estimation and model selection in Gaussian undirected graphical models for general graphs. Our approach preserves the attractive properties of previous approaches, while overcoming their drawbacks. We first develop a class of generalized $G$-Wishart distributions (for an arbitrary underlying graph), which has multiple shape parameters and contains the $G$-Wishart distributions as a special case. These distributions form a family of standard conjugate prior distributions for Gaussian concentration graph models. Developing methods for efficient posterior draws from generalized $G$-Wishart distributions is crucial for scalable Bayesian inference. We proceed to introduce the class of Generalized Bartlett (GB) graphs, and derive an efficient Gibbs sampling algorithm (with Gaussian or GIG conditionals) to simulate from generalized $G$-Wishart distributions corresponding to a GB graph. The class of Generalized Bartlett graphs contains decomposable graphs as a special case, but is substantially larger than the class of decomposable graphs. For example, any cycle of length greater than $3$ is Generalized Bartlett, but is not decomposable. Our approach has the flexibility of using multiple shape parameters (as opposed to the single parameter $G$-Wishart), but goes beyond the class of decomposable graphs without losing tractability. 

For the generalized $G$-Wishart case, the conditional densities for any maximal clique of $\Omega$ are intractable to sample from. Hence, the sampling approaches in \citep{ascipiccioni2007, lenkoski2013} for $G$-Wisharts on a general graph do not extend to the generalized $G$-Wishart. On the other hand, we show that the accept-reject and Metropolis-Hastings based methods in \cite{carvalhowang} and \cite{mitsakakis2011} can be easily extended to the generalized $G$-Wishart case. We compare the performance and scalability of these two approaches with our Gibbs sampler in Section \ref{accept-reject} and Section \ref{metropolis-hastings}.

The rest of the paper is organized as follows. Section \ref{preliminaries} contains a brief overview of relevant concepts from graph theory and matrix theory. In Section \ref{gengwishart} and Section \ref{genbargraph}, we define generalized $G$-Wishart distributions and GB graphs respectively, and establish some basic properties. In Section \ref{gibbssampler}, we derive a tractable Gibbs sampling algorithm to simulate from the generalized $G$-Wishart distribution corresponding to a GB graph.  Section \ref{examplegenbar} provides additional examples and properties of GB graphs. Section \ref{illusandapp} contains a comprehensive simulation and real data analysis study for the Bayesian approach developed in the paper.  The proofs of most of the technical results in the paper and additional numerical work are provided in the Supplemental Document.

\section{Preliminaries} \label{preliminaries}

\subsection{Graph theoretic preliminaries}
\label{graphintro}
\noindent
For any positive integer $p$, let $\mathbb{N}_p := \{1,2, \cdots, p\}$. Let $G = (V, E)$ denote an undirected graph, where $V$ represents the finite vertex set and $E \subseteq V \times V$ denotes the corresponding edge set. A function $\sigma$ is defined to be an ordering of $V$ if $\sigma$ is a bijection from $V$ to $\mathbb{N}_{|V|}$. An undirected graph $G = (V,E)$ and an ordering $\sigma$ of $V$ can be used to construct an ordered graph $G_\sigma = (V,\sigma, E_\sigma)$, where $(i,j) \in E_\sigma$ if and only if $(\sigma^{-1} (i), \sigma^{-1} (j)) \in E$. 

\begin{Def}
An undirected graph $G=(V,E)$ is called \textbf{decomposable} if it 
does not have a cycle of length greater than or 
equal to $4$ as an induced subgraph. 
\end{Def}

\noindent
Such graphs are also called \textbf{triangulated}, or \textbf{chordal} 
graphs. A useful concept associated to decomposable graphs is that of 
a perfect elimination ordering (see \cite{lauritzen}). 
\begin{Def}
An ordering $\sigma$ for an undirected graph $G = (V,E)$ is defined to be a {\bf perfect elimination ordering} if 
for each $j \in N_{|V|}$, the set $\{j\} \cup \{i: i > j, (i,j) \in E_\sigma\}$ forms a clique. 
\end{Def}

\noindent
In fact, an undirected graph $G$ is decomposable if and only if it has a perfect elimination ordering (see \cite{paulsenetal1989}). 
\begin{Def}
For a given undirected graph $G=(V,E)$, $\widetilde{G}=(V,\widetilde{E})$ is called a \textit{decomposable cover} 
of $G$ if $\widetilde{G}$ is decomposable and $E \subset \widetilde{E}$. 
\end{Def}

\noindent
Decomposable covers are also known as {\it triangulations} in graph theory literature (see \cite{andreas}). 

\subsection{Matrix theoretic preliminaries}
\label{matrixtheory}
\noindent
We denote the set of $p \times p$ symmetric matrices by $\mathbb{M}_p$, and the space of $p \times p$ positive definite 
symmetric matrices by $\mathbb{M}_p^+$. Given an ordered graph $G_\sigma$, we define 
$$
\mathbb{P}_{G_\sigma} = \{\Omega \in \mathbb{M}_{|V|}^+: \; \Omega_{ij} = 0 \mbox{ if } (i,j) \notin E_\sigma\}, 
$$

\noindent
and 
$$
\mathcal{L}_{G_\sigma} = \{L \in \mathbb{M}_{|V|}: \; L_{ii} = 1, \; L_{ij} = 0 \mbox{ for } i < j \mbox{ or } 
(i,j) \notin E_\sigma\}. 
$$

\noindent
The space $\mathbb{P}_{G_\sigma}$ is a submanifold of the space of $|V| \times |V|$ positive definite matrices, 
where the elements are restricted to be zero whenever the corresponding edge is missing from $E_\sigma$. 
Similarly the space $\mathcal{L}_{G_\sigma}$ is a subspace of lower triangular matrices with diagonal entries 
equal to $1$, such that the elements in the lower triangle are restricted to be zero whenever the corresponding 
edge is missing from $E_\sigma$. 

A positive definite matrix {\it $\Omega$} can be uniquely expressed as $\Omega = LDL^T$, where $L$ is a lower triangular matrix with diagonal entries equal to $1$, and $D$ is a diagonal matrix with positive diagonal entries. Such a decomposition is known as the {\it modified Cholesky decomposition} of $\Omega$ (see for example \cite{danielspourahmadi}). \cite{paulsenetal1989} showed that if $\Omega \in \mathbb{P}_{G_\sigma}$, then $L \in \mathcal{L}_{G_\sigma}$ if and only if $G$ is decomposable and $\sigma$ is a perfect elimination ordering. If either of these two conditions is violated, then the sparsity pattern in $L$ is a strict subset of the sparsity pattern in $\Omega$. The entries $(i,j) 
\notin E_\sigma$ (with $i > j$) such that $L_{ij}$ is not (functionally) zero, are known as ``fill-in" entries. The problem of finding an ordering which minimizes the number of fill-in entries is well-known and well-studied in numerical analysis and in computer science/discrete mathematics. Although this problem is NP-hard, several effective greedy algorithms for reducing the number of fill-in entries have been developed and implemented in standard software such as MATLAB and R (see \cite{Davis} for instance). 

In subsequent sections, we will consider a reparametrization from (the inverse covariance) matrix $\Omega$ to its modified Cholesky decomposition. Such a reparametrization inherently assumes an ordering of the variables. In many applications (such as longitudinal data), a natural ordering is available. In the absence of a natural ordering, one can choose a fill-reducing ordering using one of the available fill-reducing algorithms mentioned previously. We will see that a fill-reducing ordering will help in reducing the computational complexity of proposed Markov chain Monte Carlo procedures.

\subsection{Undirected graphical models and $G$-Wishart distribution}

\noindent
Let $G = (V,E)$ be an undirected graph with $|V| = p$, and $\sigma$ be an ordering of $V$. The undirected graphical model corresponding to the the ordered graph $G_\sigma$ is the family of distributions 
$$
\mathcal{J} = \{MVN_p ({\bf 0}, \Omega^{-1}): \Omega \in \mathbb{P}_{G_\sigma}\}. 
$$
\noindent
Let $\bm{Y_1},\bm{Y_2},\ldots,\bm{Y_n}$ be {\it i.i.d.} observations from a distribution in $\mathcal{J}$. Note that the joint density of $\bm{Y_1},\bm{Y_2},\ldots,\bm{Y_n}$ given $\Omega$ is given by
\[\frac{|\Omega|^{\frac{n}{2}}}{(\sqrt{2\pi})^{np}} \exp{\left(-\frac{n}{2} tr(\Omega S)\right)}.\]

\noindent
The $G$-Wishart distribution on $\mathbb{P}_{G_\sigma}$ is a natural choice of prior for $\Omega$ (see \cite{dawidlauritzen1993} and \cite{roverato2000}). The density of the $G$-Wishart distribution with parameters $\delta > 0$ and $U \in \mathbb{M}^+_p$ is proportional to
\[|\Omega|^{\frac{\delta}{2}}\exp{\left(-\frac{1}{2}tr(\Omega U)\right)}.\]
\noindent
Thus the posterior density of $\Omega$ given $\bm{Y_1},\bm{Y_2},\ldots,\bm{Y_n}$ is proportional to
\[|\Omega|^{\frac{n+\delta}{2}}\exp{\left(-\frac{1}{2}tr(\Omega(nS+U))\right)},\]
\noindent
and corresponds to a $G$-Wishart distribution with parameters $(n+\delta)$ and $(U+nS)$, which implies that the family of $G$-Wishart priors are conjugate for the family of distributions $\mathcal{J}$.

\section{Generalized G-Wishart distributions}
\label{gengwishart}
\noindent
In this section we propose a generalization of the $G$-Wishart distribution that is endowed with multiple shape parameters, and contains the $G$-Wishart family as a 
special case. We shall show in later sections that the flexibility offered by the multiple shape parameters is very useful in high dimensional settings. 

\subsection{Definition}
\noindent
We now define a multiple shape parameter generalization of the $G$-Wishart distribution for a general graph $G$. To do this, we transform the matrix $\Omega$ to its Cholesky decomposition. Consider the modified Cholesky decomposition $ \Omega = LDL^T$, where $L$ is a lower triangular matrix with diagonal entries equal to $1$, and $D$ is a diagonal matrix with positive diagonal entries. The (unnormalized) density of the {\it generalized $G$-Wishart distribution} with parameters 
$\boldsymbol{\delta} = (\delta_1, \delta_2, \cdots, \delta_p) \in \mathbb{R}_+^p$ and $U \in 
\mathbb{M}_p^+$ is given by 
\begin{equation} \label{defggwshrt}
\pi_{U, \boldsymbol{\delta}}^* (\Omega) = \left( \prod_{i=1}^p D_{ii} (\Omega)^{\frac{\delta_i}{2}} \right) 
\exp{\left(-\frac{1}{2}tr(\Omega U)\right)}. 
\end{equation}

\noindent
We note that other generalizations of the Wishart have also been considered in \citet{bendavidrajaratnam,danielspourahmadi, dawidlauritzen1993, khareraja,letacmassam}. It is clear that the $G$-Wishart density arises as a special case of the generalized $G$-Wishart (by considering all the $\delta_i$'s to be equal and noting that $|\Omega| = \prod_{i=1}^p D_{ii}$), and that the family of generalized $G$-Wishart distributions defined above is a conjugate family of prior distributions for undirected graphical models. In fact, the posterior density of $\Omega$ corresponds to a generalized $G$-Wishart distribution with parameters $n+\delta_1, \cdots, n + \delta_p$ and $(U+nS)$. 

\subsection{Some properties of the generalized G-Wishart distribution}
\label{propofgengwish}
We now proceed to derive properties of the generalized $G$-Wishart distribution. To do so, we transform $\Omega$ to its modified Cholesky decomposition $\Omega=LDL^T$ as defined in Section \ref{matrixtheory}.

We define $L_I = \{L_{ij}| i>j\mbox{ and }(i,j) \in E_{\sigma}\}$ to be the set of functionally independent elements of $L$. Then the transformation $\Omega \rightarrow (L_I,D)$ is a bijection from $\mathbb{P}_{G_{\sigma}}$ to $\mathbb{R}^{\frac{|E_{\sigma}|}{2}} \times \mathbb{R}_+^p$ with Jacobian equal to $\prod_{j=1}^p D_j^{\nu_j}$, where $\nu_j := |\{i:i>j,(i,j) \in E_{\sigma}\}|$ for $j=1,2,\ldots,p$. Then the (unnormalized) generalized $G$-Wishart density for $(L_I,D)$ is given by 
\begin{equation}
\pi_{U, \boldsymbol{\delta}}^* (L_I, D) = \left(\prod_{j=1}^p D_j^{\frac{\delta_j  +2\nu_j}{2}}\right) \times \exp{\left(-\frac{1}{2}\sum_{i=1}^p\sum_{j=1}^p
U_{ij} \sum_{k=1}^{\min(i,j)} L_{ik}L_{jk}D_k\right)}
\label{defpistar} 
\end{equation}

\noindent
We first establish sufficient conditions for the density $\pi_{U, \boldsymbol{\delta}}^*$ to be proper. 
\begin{thm} \label{thm1}
If $U$ is positive definite and $\delta_i>0\,\, \forall i=1,\ldots,p$, 
\[\int \pi_{U,\boldsymbol{\delta}}^*(L_I,D) d(L_I,D) < \infty.\]
Also under these conditions, $E\left[\Omega_{ij}\right] < \infty$, $\forall i,j$.
\end{thm}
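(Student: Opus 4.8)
The plan is to prove the integral is finite by integrating out $L_I$ and $D$ in stages, using the quadratic structure of the exponent in $L$ and the resulting Gamma-type integrals in $D$. First I would rewrite the exponent in \eqref{defpistar} by completing the square in the off-diagonal Cholesky parameters. Fix the diagonal matrix $D$. Writing $\Omega = LDL^T$, the exponent $-\tfrac12 tr(\Omega U) = -\tfrac12 \sum_k D_k (L^T U L)_{kk}$ is, for each fixed $k$, a quadratic form in the $k$-th column of $L$, i.e.\ in the variables $\{L_{ik} : i > k, (i,k) \in E_\sigma\}$. Because $U$ is positive definite, each such quadratic form is (up to the fixed leading term $U_{kk}$) positive definite in those free variables, so completing the square gives a Gaussian kernel in $L_I$ whose integral over $\mathbb{R}^{|E_\sigma|/2}$ equals a constant times $\prod_k (D_k)^{-\nu_k/2} / \sqrt{\det(\text{something positive definite in the remaining } D\text{'s})}$, times $\exp(-\tfrac12 \sum_k D_k c_k)$ for some $c_k \ge 0$ depending on $D_{k+1},\dots,D_p$ through the Schur-complement structure. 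The cleaner way to organize this is to process the columns of $L$ from the last one ($k=p$, no free variables) backward to $k=1$: at stage $k$, integrating out the $k$-th column of $L$ produces, by positive definiteness of $U$, a factor proportional to $D_k^{-\nu_k/2}$ times $\exp(-\tfrac12 D_k \tilde U_{kk})$ where $\tilde U_{kk} > 0$ is the relevant Schur complement entry, and crucially does not reintroduce any $D_j$ with $j<k$ in a problematic way.

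After integrating out all of $L_I$, the integrand in $D$ is bounded above (up to a constant) by $\prod_{j=1}^p D_j^{(\delta_j + 2\nu_j)/2} \cdot \prod_{j=1}^p D_j^{-\nu_j/2} \cdot \exp(-\tfrac12 \sum_j D_j \tilde U_{jj}) = \prod_{j=1}^p D_j^{(\delta_j + \nu_j)/2} \exp(-\tfrac12 D_j \tilde U_{jj})$, with each $\tilde U_{jj}>0$. This is a product of univariate integrals, each of which is a Gamma integral $\int_0^\infty D_j^{(\delta_j+\nu_j)/2} e^{-\tfrac12 D_j \tilde U_{jj}}\, dD_j < \infty$ precisely because $\delta_j > 0$ (so the exponent $(\delta_j+\nu_j)/2 > -1$) and $\tilde U_{jj} > 0$. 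This yields finiteness of the normalizing constant. For the second claim, $E[\Omega_{ij}] = E[(LDL^T)_{ij}] = E[\sum_{k \le \min(i,j)} L_{ik} L_{jk} D_k]$; each summand is a polynomial in the Cholesky parameters, and the same column-by-column Gaussian integration shows that conditionally on $D$ the moments $E[L_{ik}L_{jk} \mid D]$ are bounded by a polynomial in $1/D_k$ and $D_{k+1},\dots,D_p$ of controlled degree, so after integrating out $D$ we again land on convergent Gamma integrals — here we may need the slightly sharper bookkeeping that the polynomial growth in the remaining $D$'s is dominated by the exponential decay $e^{-\tfrac12 D_j \tilde U_{jj}}$, which holds since $\tilde U_{jj}>0$.

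The main obstacle, and the step requiring the most care, is controlling how integrating out one column of $L$ affects the coefficients $\tilde U_{jj}$ and the $D$-dependence for the remaining columns: one must verify that the ``effective'' quadratic forms stay positive definite at every stage (this is where positive definiteness of $U$ is used essentially, via a Schur-complement/Cholesky-of-$U$ argument) and that no negative power of a not-yet-integrated $D_j$ with a bad exponent is created. A convenient device to make this rigorous is to bound the exponential factor $\exp(-\tfrac12 tr(\Omega U))$ from below-irrelevantly and instead bound everything above by dropping cross terms after completing squares, reducing to the clean product bound above; alternatively one can diagonalize using the Cholesky factor of $U$ itself to decouple the columns exactly. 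Either route turns the multivariate integral into a finite product of elementary Gaussian and Gamma integrals, and the positivity of each $\delta_j$ is exactly the condition that makes each Gamma integral converge at $0$.
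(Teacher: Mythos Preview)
Your column-by-column Gaussian integration works when $G$ is decomposable with $\sigma$ a perfect elimination ordering, because then every column of $L$ consists only of the diagonal $1$ and independent parameters. For a general graph, however, the column $L_{\cdot j}$ also contains \emph{dependent} entries $L_{ij}$ with $(i,j)\notin E_\sigma$, determined by $\Omega_{ij}=0$ as $L_{ij}=-D_j^{-1}\sum_{k<j}L_{ik}L_{jk}D_k$. These are polynomials, typically of degree greater than one, in the independent entries from columns strictly less than $j$. Consequently, while the single term $D_k L_{\cdot k}^T U L_{\cdot k}$ is quadratic in the independent entries of column $k$, the terms $D_j L_{\cdot j}^T U L_{\cdot j}$ for $j>k$ are in general polynomials of degree exceeding two in those same variables (through the dependent entries in column $j$). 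The full exponent is therefore \emph{not} quadratic in $L_I$ for non-decomposable $G$, and completing the square is not available; neither processing order avoids this. A concrete obstruction: if $(p,p-1)\notin E_\sigma$ but $(p,p-2),(p-1,p-2)\in E_\sigma$, then $L_{p,p-1}=-L_{p,p-2}L_{p-1,p-2}D_{p-2}/D_{p-1}$, and $D_{p-1}L_{\cdot,p-1}^T U L_{\cdot,p-1}$ contributes a term of degree four in $(L_{p,p-2},L_{p-1,p-2})$.

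The paper sidesteps this with a one-line eigenvalue bound: with $\lambda>0$ the smallest eigenvalue of $U$,
\[
tr(\Omega U)=\sum_j D_j\, L_{\cdot j}^T U L_{\cdot j}\ \ge\ \lambda\sum_j D_j\|L_{\cdot j}\|^2\ \ge\ \lambda\sum_j D_j\bigl(1+\|L_{I_j}\|^2\bigr),
\]
the last inequality simply \emph{discarding} the nonnegative squared dependent entries. This yields a factorized upper bound on $\pi^*_{U,\boldsymbol{\delta}}$ in which the columns are decoupled regardless of the graph, and the $L_{I_j}$ and $D_j$ integrals are then elementary Gaussian and Gamma integrals (the latter convergent exactly because $\delta_j>0$). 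For $E[\Omega_{ij}]<\infty$ the paper writes $\Omega_{ij}=\sum_{k\le j}L_{ik}L_{jk}D_k$, splits the bounding exponential as $e^{-\lambda D_k\|L_{\cdot k}\|^2/4}\cdot e^{-\lambda D_k\|L_{\cdot k}\|^2/4}$, and uses one half to absorb $|L_{ik}L_{jk}D_k|$ via the uniform boundedness of $x\mapsto x^2 e^{-\lambda x^2/4}$ and $x\mapsto |x| e^{-\lambda x^2/4}$. Your fallback suggestion of ``dropping cross terms after completing squares'' is in the right spirit but misdiagnoses the difficulty: the obstruction is not cross terms but the nonlinearity of the map $L_I\mapsto L$; the eigenvalue bound followed by dropping the dependent-entry squares is what forces the bound to be genuinely quadratic in $L_I$.
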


\noindent
The proof of Theorem \ref{thm1} is provided in Supplemental Section \ref{proofthm1}. Under the conditions in Theorem \ref{thm1}, we will refer to the normalized version of 
$\pi_{U,\boldsymbol{\delta}}^*$ as $\pi_{U,\boldsymbol{\delta}}$. 

From \cite{roverato2000}, if $\Omega$ follows $G$-Wishart with parameters $(U,
\delta)$ then for $(i,j) \in E$ or $i=j$,
\[E((\Omega^{-1})_{ij}) = \frac{U_{ij}}{\delta}\]

\noindent
We now provide an extension of this result for the case of generalized $G$-Wishart 
distributions. 
\begin{thm} \label{thm2}
Let $\Omega=LDL^T \in \mathbb{P}_{G_\sigma}$ be generalized $G$-Wishart with parameters $(U,\boldsymbol{\delta})$ for some $U \in M_P^+,\boldsymbol{\delta} \in \mathbb{R}_+^p$. Denote  $\Omega_k$ as the $k \times k$ principal submatrix of $\Omega$, and let $[\Omega_k^{-1}]^0$ denote the $p \times p$ matrix with $\Omega_k^{-1}$ as its appropriate $k \times k$ principal submatrix and rest of the elements equal to $0$. Define the matrix $U_{G_\sigma}$ as $(U_{G_\sigma})_{ij}=U_{ij}\times1_{\{(i,j) \in E_\sigma\}}$. If $\delta_k > 4, \forall k$, then
\[E \left[ \sum_{k \leq p} \left(\delta_k-\delta_{k+1} \right) [\Omega_k^{-1}]^0 
\right] = U_{G_\sigma}. 
\]
\end{thm}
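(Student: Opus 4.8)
\noindent The plan is to recognize the matrix $\sum_k(\delta_k-\delta_{k+1})[\Omega_k^{-1}]^0$ as the gradient of the logarithm of the ``deterministic'' factor of the generalized $G$-Wishart density, and then extract the identity from an integration-by-parts (score) argument. The first step is a telescoping rewrite: since the leading principal minors of $\Omega$ satisfy $\det\Omega_k=\prod_{i\le k}D_{ii}(\Omega)$,
\[
\prod_{i=1}^{p}D_{ii}(\Omega)^{\delta_i/2}=\prod_{k=1}^{p}\big(\det\Omega_k\big)^{(\delta_k-\delta_{k+1})/2},\qquad \delta_{p+1}:=0 .
\]
Changing variables from $(L_I,D)$ back to the free entries $\{\Omega_{ij}:(i,j)\in E_\sigma\text{ or }i=j\}$ absorbs the Jacobian $\prod_j D_j^{\nu_j}$ of (\ref{defpistar}), so with respect to Lebesgue measure on those entries the generalized $G$-Wishart density is proportional to $g(\Omega):=\prod_{k=1}^{p}(\det\Omega_k)^{(\delta_k-\delta_{k+1})/2}\,e^{-\frac12 tr(\Omega U)}$.

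Now observe that $\partial\log\det\Omega_k/\partial\Omega_{ij}=(\Omega_k^{-1})_{ij}$ for $i,j\le k$ and $0$ otherwise, i.e.\ the matrix gradient of $\log\det\Omega_k$ is exactly $[\Omega_k^{-1}]^0$. Hence $\sum_k(\delta_k-\delta_{k+1})[\Omega_k^{-1}]^0=\nabla\log\big(\prod_k(\det\Omega_k)^{(\delta_k-\delta_{k+1})/2}\big)$, and for an edge $(a,b)\in E_\sigma$ (with $a\neq b$),
\[
\frac{\partial}{\partial\Omega_{ab}}\log g(\Omega)=\Big(\textstyle\sum_k(\delta_k-\delta_{k+1})[\Omega_k^{-1}]^0\Big)_{ab}-U_{ab},
\]
with the analogous identity, carrying an overall factor $\tfrac12$, at a diagonal position. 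Integrating by parts in each free coordinate, $\int(\partial g/\partial\Omega_{ab})\,d\Omega_{\mathrm{free}}=0$, and dividing by the normalizing constant, yields $E\big[(\sum_k(\delta_k-\delta_{k+1})[\Omega_k^{-1}]^0)_{ab}\big]=U_{ab}$ at every edge and every diagonal position, i.e.\ at each entry on the support of $U_{G_\sigma}$, which is the content of the stated identity.

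The part I expect to be delicate is justifying the integration by parts and the finiteness of the expectations. Holding the other free entries fixed, the set of $\Omega_{ab}$ keeping $\Omega\succ 0$ is a bounded open interval (because $\det\Omega_a$ is a downward quadratic in $\Omega_{ab}$); at each endpoint some leading principal minor $\det\Omega_{k_0}$ vanishes, the pivot $D_{k_0}(\Omega)\to 0$, and since $\delta_{k_0}>0$ the factor $\prod_i D_{ii}^{\delta_i/2}$ — hence $g$ — tends to $0$, so the boundary term drops. One also needs $\partial g/\partial\Omega_{ab}$ to be integrable over the free space and $E|(\Omega_k^{-1})_{ij}|<\infty$, and this is where the hypothesis $\delta_k>4$ (stronger than the $\delta_i>0$ of Theorem~\ref{thm1}) enters: controlling the moments of the entries of the inverse principal submatrices requires enough finite inverse moments of the pivots $D_1,\dots,D_p$. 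A clean way to organize this bookkeeping is to induct on $p$, conditioning on $\Omega_{p-1}$: conditionally, the new Cholesky row is Gaussian and the new pivot $D_p$ an independent Gamma variable, so the moment bounds and the boundary behaviour in the coordinate $\Omega_{pp}$ become transparent, and the inner expectations reduce, via the bordering identity $\Omega_p^{-1}-[\Omega_{p-1}^{-1}]^0=D_p^{-1}(\Omega_{p-1}^{-1}b)(\Omega_{p-1}^{-1}b)^T$ with $b=\Omega_{1:p-1,p}$, to quantities already controlled by Theorem~\ref{thm1} and the induction hypothesis. In short, the algebra — the telescoping rewrite together with $\nabla\log\det\Omega_k=[\Omega_k^{-1}]^0$ — is immediate; the real obstacle is the analytic control (integrability near the positive-definiteness boundary, finiteness of inverse moments of the pivots) needed to make the score identity rigorous under $\delta_k>4$.
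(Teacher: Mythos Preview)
Your proposal is correct and follows essentially the same route as the paper: the telescoping rewrite $\prod_i D_{ii}^{\delta_i/2}=\prod_k(\det\Omega_k)^{(\delta_k-\delta_{k+1})/2}$, the gradient identity $\partial\log\det\Omega_k/\partial\Omega_{ij}=[\Omega_k^{-1}]^0_{ij}$, and the score argument $\int\partial_{\Omega_{ij}}\pi\,d\Omega=0$ are exactly what the paper does. The only difference is in how the analytic justification is organized: the paper shows continuity of $\pi$ on the closure of $\mathbb{P}_{G_\sigma}$, verifies that first and second partial derivatives exist and vanish on the boundary (first derivatives under $\delta_k>2$, second under $\delta_k>4$), and then uses a mean value bound on $\partial^2\pi/\partial\Omega_{ij}^2$ together with uniform integrability to pass the derivative through the integral defining the marginal $\pi_{ij}$; you instead sketch an inductive scheme conditioning on $\Omega_{p-1}$. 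Both are legitimate ways to supply the missing rigor, and your identification of where $\delta_k>4$ is actually used matches the paper's.
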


\noindent
The proof of Theorem \ref{thm2} is quite detailed and technical and is thus provided in Supplemental Section \ref{proofthm2}. Theorem \ref{thm2} 
provides a useful tool to monitor convergence of any Markov chain Monte Carlo 
method for sampling from $\pi_{U, \boldsymbol{\delta}}$ (and particularly the Gibbs sampler introduced in Section \ref{gibbssampler}) . 

We also undertake a comparison between generalized $G$-Wishart distribution and the useful priors introduced by \cite{letacmassam} for the case of decomposable graphs. A careful analysis demonstrates that the generalized $G$-Wishart and Letac-Massam priors are quite different for decomposable graphs. The generalized $G$-Wishart coincides with the Letac-Massam Type II Wishart in the special case when $G$ is homogeneous. See Supplemental Section \ref{completacmassam} for details.

\section{Generalized Bartlett graphs}
\label{genbargraph}
As discussed earlier, the class of decomposable graphs is endowed with many properties helpful for closed form computation of posterior quantities. The assumption of decomposability can be rather restrictive in higher dimensions, as they constitute a very small fraction of all graphs see (Figure \ref{prop-propq}). We develop in this section a class of graphs, which is substantially larger than the class of decomposable graphs. We will show in later sections that for this class of graphs, we can generate posterior samples from the generalized G-Wishart, using a tractable Gibbs sampling algorithm.

\subsection{Preliminaries and Definitions}
\noindent
We now provide the definition of Generalized Bartlett graphs. First consider the following procedure to obtain a decomposable cover (see Section \ref{graphintro}) of an arbitrary ordered graph $G_{\sigma}=(V,E_\sigma)$.

\begin{algorithm}
\caption{Triangulation Algorithm for an unordered graph $G$}
\begin{algorithmic}[0]
   \State Denote $G_0^{\sigma}:=G$.
   \While{$i \leq p-2$} 
      \State $E_i^{\sigma} = E_{i-1}^{\sigma} \cup \{(u,v)|\sigma(u) > \sigma(v) > i \in \mathbb{N}_{|V|} \mbox{ \& } (u,\sigma^{-1}(i)),(v,\sigma^{-1}(i)) \in E_{i-1}^{\sigma} \}$
      \State $G_i^{\sigma} = (V, E_i^{\sigma})$
   \EndWhile
   \label{triangulation}
\end{algorithmic}
\end{algorithm}

\noindent
We use the above algorithm to construct a decomposable cover for $G$ as follows. Let $D^\sigma (G) := G_{p-2}^{\sigma}$, and let $D^\sigma (E)$ denote the edge set of $D^\sigma (G)$. It follows by construction that the ordering $\sigma$ is a perfect vertex elimination scheme for $D^\sigma (G)$. Hence, $D^{\sigma}(G)$ is a decomposable cover for $G$. Note that, two different orderings may give rise to different decomposable covers. We now define Generalized Bartlett graphs. 

\begin{Def}
An undirected graph $G=(V,E)$ is said to be a Generalized Bartlett graph if there exists an ordering $\sigma$, with the property that there does not exist vertices $u,v,w \in V$ satisfying $(u,v),(v,w),(u,w) \notin E$ and $(u,v),(v,w),(u,w) \in D^\sigma (E)$. 
\end{Def}
\noindent
In such a case (i.e., when this property is satisfied), $\sigma$ is called a \textit{Generalized Bartlett 
ordering} of $G$. When it exists, the Generalized Bartlett ordering may not be unique. The following lemma helps in proving an alternate characterization of Generalized Bartlett graphs, one that does not depend on any ordering of the vertices. The lemma shows that Algorithm \ref{triangulation} leads to a collection of minimal decomposable covers, in the sense that the edge set of any decomposable cover of $G$ has to contain $D^\sigma (E)$ for some ordering $\sigma$. 

\begin{lem} \label{lem1}
For any undirected graph $G=(V,E)$ and a decomposable cover $\widetilde{G}=(V,\widetilde{E})$ of $G$, $\exists$ an ordering $\sigma$ of $V$ s.t., $D^{\sigma}(E) \subset \widetilde{E}$.
\end{lem}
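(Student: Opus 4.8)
The plan is to exploit the fact that a decomposable cover, being chordal, admits a perfect elimination ordering, together with the observation that the triangulation procedure of Algorithm~\ref{triangulation} is monotone in its input graph. Concretely, I would let $\sigma$ be a perfect elimination ordering of $\widetilde G$; such an ordering exists by the characterization recalled in Section~\ref{graphintro} (an undirected graph is decomposable if and only if it has a perfect elimination ordering). This $\sigma$ will be the ordering whose existence the lemma claims, so it remains to prove $D^\sigma(E) \subseteq \widetilde E$ (I read ``$\subset$'' here, as in the definition of a decomposable cover, to mean ``$\subseteq$'').

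For a fixed ordering $\sigma$, write $E_i^\sigma(H)$ for the edge set produced after $i$ rounds of Algorithm~\ref{triangulation} on the input graph $H = (V, E_H)$, so that $E_0^\sigma(H) = E_H$ and $D^\sigma(H)$ has edge set $E_{p-2}^\sigma(H)$. The first ingredient is that $\sigma$ produces no fill-in on $\widetilde G$: if $\sigma$ is a perfect elimination ordering of a graph $H$, then $E_i^\sigma(H) = E_H$ for all $i$, and hence $D^\sigma(\widetilde E) = \widetilde E$. This follows by induction on $i$: assuming $E_{i-1}^\sigma(H) = E_H$, any pair $(u,v)$ that round $i$ would add satisfies $\sigma(u) > \sigma(v) > i$ with $u$ and $v$ both adjacent to $\sigma^{-1}(i)$ in $E_{i-1}^\sigma(H) = E_H$; since $u$ and $v$ are then neighbours of $\sigma^{-1}(i)$ of higher $\sigma$-index, the perfect elimination property applied at $\sigma^{-1}(i)$ already forces $(u,v) \in E_H$, so nothing new is added.

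The second ingredient is monotonicity of Algorithm~\ref{triangulation}: if $E_H \subseteq E_{H'}$ on the common vertex set $V$, then $E_i^\sigma(H) \subseteq E_i^\sigma(H')$ for every $i$. The base case is the hypothesis. For the inductive step, each edge added to $E_{i-1}^\sigma(H)$ in round $i$ is witnessed by two edges of $E_{i-1}^\sigma(H)$ incident to $\sigma^{-1}(i)$; by the inductive hypothesis both of those edges lie in $E_{i-1}^\sigma(H')$, so the same edge is either added to $E_i^\sigma(H')$ or already present in it.

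Combining the two ingredients finishes the proof: applying monotonicity with $H = G$ and $H' = \widetilde G$ (valid since $E \subseteq \widetilde E$) and then the no-fill fact for $\widetilde G$ gives $D^\sigma(E) = E_{p-2}^\sigma(G) \subseteq E_{p-2}^\sigma(\widetilde G) = \widetilde E$. The only point that needs care is the no-fill step: one should note that it suffices to invoke the perfect elimination property of the \emph{original} graph $\widetilde G$ at each round, which is legitimate precisely because the induction shows the working edge set never changes, so there is no need to argue separately that a perfect elimination ordering restricts to one on the not-yet-eliminated vertices. Beyond keeping the original vertex labels and their $\sigma$-images straight, I do not anticipate any real obstacle.
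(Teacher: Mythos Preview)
Your proof is correct and follows essentially the same approach as the paper: both choose $\sigma$ to be a perfect elimination ordering of $\widetilde G$ and then argue by induction on the rounds of Algorithm~\ref{triangulation}. The only difference is packaging: the paper runs a single induction showing directly that $E_i^\sigma \subseteq \widetilde E$ (using the PEO property of $\widetilde G$ at vertex $\sigma^{-1}(i)$ in the inductive step), whereas you factor the same argument into a monotonicity lemma plus a no-fill lemma and then compose them.
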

\begin{proof}
Let $\widetilde{G}=(V,\widetilde{E})$ be a decomposable cover of $G=(V,E)$. Since $\widetilde{G}$ is decomposable let $\sigma$ be the perfect elimination ordering of it. We shall prove inductively that for $i$ in $\{0,1,\ldots,p-2\}$, $E_i^{\sigma} \subset \widetilde{E}$. Since $D^{\sigma}(E)=E^{\sigma}_{p-2}$, that will prove this lemma. It is trivial to note that $E=E^{\sigma}_0 \subset \widetilde{E}$. Let us assume that the claim holds for $E^{\sigma}_{i-1}$. Consider any $r>s>i$ such that $(\sigma^{-1}(r),\sigma^{-1}(s)) \in E^{\sigma}_i \setminus E^{\sigma}_{i-1}$. Thus $(\sigma^{-1}(r),\sigma^{-1}(i)),(\sigma^{-1}(s),\sigma^{-1}(i)) \in E^{\sigma}_{i-1} \subset \widetilde{E}$. Since $\sigma$ is the perfect elimination ordering for $\widetilde{G}$, $(\sigma^{-1}(r),\sigma^{-1}(s)) \in \widetilde{E}$. Thus $E^{\sigma}_i \subset \widetilde{E}$ which completes the induction step and proves the lemma. 
\end{proof}
\noindent
We now provide a second definition of \textit{Generalized Bartlett graphs}. 
\begin{lem} \label{lem2}
An undirected graph $G$ satisfies the Generalized Bartlett property if and only if $G = (V,E)$ has a decomposable cover $\widetilde{G} = (V, \widetilde{E})$ such that every triangle in $\widetilde{E}$ contains an edge from $E$. That is for any $u,v,w \in V$ such that $(u,v),(v,w),(u,w) \in \widetilde{E}$, at least one of $(u,v),(v,w),(u,w)$ belongs to $E$.
\end{lem}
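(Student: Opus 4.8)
The plan is to prove both directions using Lemma \ref{lem1} as the main engine, together with the fact (stated in the excerpt) that a graph is decomposable if and only if it admits a perfect elimination ordering, and that for any ordering $\sigma$, the graph $D^\sigma(G)$ is a decomposable cover of $G$ for which $\sigma$ is a perfect elimination ordering.

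For the forward direction, suppose $G$ satisfies the Generalized Bartlett property, so there is an ordering $\sigma$ such that no three vertices $u,v,w$ have all of $(u,v),(v,w),(u,w) \notin E$ while all of $(u,v),(v,w),(u,w) \in D^\sigma(E)$. I would simply take $\widetilde{G} = D^\sigma(G)$. This is a decomposable cover of $G$ by construction, and the Generalized Bartlett condition says precisely that every triangle in $D^\sigma(E)$ contains at least one edge of $E$. So this direction is essentially a restatement — the only thing to check carefully is that ``triangle'' in the second formulation means an induced triple of edges among three distinct vertices, which matches the $u,v,w$ quantification in the definition.

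For the reverse direction, suppose $G$ has \emph{some} decomposable cover $\widetilde{G} = (V,\widetilde{E})$ in which every triangle contains an edge of $E$. By Lemma \ref{lem1}, there exists an ordering $\sigma$ of $V$ with $D^\sigma(E) \subseteq \widetilde{E}$. Now I claim $\sigma$ is a Generalized Bartlett ordering of $G$: if there were vertices $u,v,w$ with $(u,v),(v,w),(u,w) \notin E$ but all three in $D^\sigma(E)$, then since $D^\sigma(E) \subseteq \widetilde{E}$, these three edges all lie in $\widetilde{E}$, forming a triangle in $\widetilde{G}$ none of whose edges is in $E$ — contradicting the assumed property of $\widetilde{G}$. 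Hence no such triple exists and $G$ is Generalized Bartlett.

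The argument is short because Lemma \ref{lem1} does the heavy lifting: it guarantees that $D^\sigma(E)$ is contained in any decomposable cover, so a ``good'' decomposable cover (few triangles disjoint from $E$) forces the canonical cover $D^\sigma(E)$ to be good as well. The main obstacle, such as it is, is purely a matter of bookkeeping with the edge sets and making sure the two phrasings of the triangle condition (the $u,v,w$-quantifier form in the Definition versus the ``every triangle'' form here) are literally the same statement; there is no combinatorial depth beyond invoking Lemma \ref{lem1}. One should also note in passing that the ordering $\sigma$ produced in the reverse direction need not be a perfect elimination ordering of $\widetilde{G}$ itself — it is only the perfect elimination ordering of $D^\sigma(G)$ — but this does not matter, since the Generalized Bartlett property only refers to $D^\sigma(E)$ and $E$.
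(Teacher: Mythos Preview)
Your proof is correct and follows essentially the same route as the paper: the forward direction takes $\widetilde{G} = D^\sigma(G)$ directly, and the reverse direction invokes Lemma~\ref{lem1} to obtain $D^\sigma(E)\subseteq\widetilde{E}$ and then observes that any offending triangle in $D^\sigma(E)$ would also be an offending triangle in $\widetilde{E}$. One small remark: the ordering $\sigma$ furnished by Lemma~\ref{lem1} is in fact chosen there to be a perfect elimination ordering of $\widetilde{G}$, so your parenthetical caveat, while harmless to the argument, is not actually needed.
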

\begin{proof}

If $G$ satisfies the Generalized Bartlett property then by definition $\exists$ an ordering $\sigma$ of $V$ such that any triangle in $D^{\sigma}(E)$ contains an edge from $E$. In that case we can simply take $\widetilde{E}=D^{\sigma}(E)$. On the other hand, let $\widetilde{G} = (V, \widetilde{E})$ be a decomposable cover of $G$ such that every triangle in $\widetilde{E}$ contains an edge from $E$. By Lemma \ref{lem1}, $\exists$ an ordering $\sigma$ of $V$, such that $D^{\sigma}(E) \subset \widetilde{E}$. Thus any triangle in $D^{\sigma}(E)$ is also a triangle in $\widetilde{E}$ and hence has an edge in $E$. This makes $\sigma$ the Generalized Bartlett ordering and $G$ an Generalized Bartlett graph.

\end{proof}
In the subsequent arguments we will also refer to Generalized Bartlett graphs as satisfying the Generalized Bartlett property. Some common Generalized Bartlett graphs are: 
\begin{enumerate}
\item All decomposable graphs (follows by using $\widetilde{G} = G$ in Lemma \ref{lem2}). 
\item Any cycle (see Section \ref{cycleexmpl} for a proof). This is the simplest example of a non-decomposable graph which satisfies the Generalized Bartlett property. Also note that $4$ cycle is the simplest non-decomposable graph.
\item Any lattice with less than $4$ rows or less than $4$ columns (see Section \ref{gridsexmpl} for a proof). Such graphs are useful in spatial applications (see Section \ref{climate}). 
\end{enumerate}

\begin{figure}[h]
\includegraphics[scale=1]{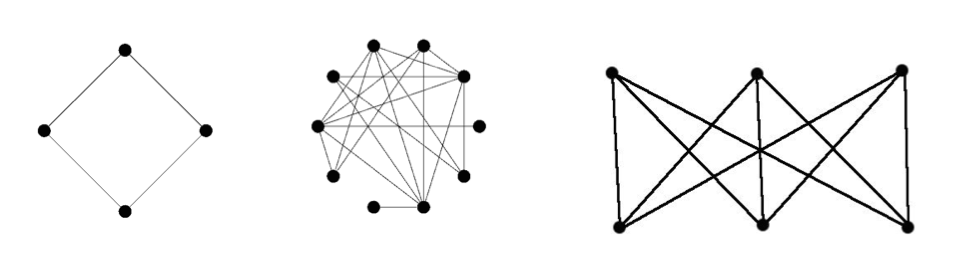}
\caption{(left and middle) Examples of Generalized Bartlett graphs. (right) The bipartite graph $K_{3,3}$: the only 6-vertex connected graph which is not Generalized Bartlett.}
\label{6bipartite}
\end{figure}

It is a natural question to ask how much larger the class of Generalized Bartlett graphs is compared to the class of decomposable graphs. It is quite difficult to obtain a closed form expression for the exact (or approximate) number of connected decomposable graphs (or Generalized Bartlett graphs) with a given number of vertices. However, a list of all possible connected non-isomorphic graphs having at most $10$ vertices is available at \url{http://cs.anu.edu.au/~bdm/data/graphs.html}. Using this list, we computed the number of decomposable and Generalized Bartlett graphs with at most $10$ vertices. Figure \ref{prop-propq} provides a graphical comparison of these proportions, and Table \ref{table} gives the actual values of these proportions. It is quite clear that the proportion of Generalized Bartlett graphs is much larger than the proportion of decomposable graphs. As expected, the proportions of both classes of graphs decreases as the total number of vertices increases. However, the rate of decrease in the proportions is much larger for decomposable graphs. For example, less than $0.02 \%$ of connected isomorphic graphs with $10$ vertices is decomposable, but more than $85 \%$ of connected isomorphic graphs with $10$ vertices is Generalized Bartlett. In this case the number of Generalized Bartlett graphs are approximately $100$ times larger.

\begin{figure}[h] 
\centering
\begin{subfigure}{.4\textwidth}
\includegraphics[scale=0.4]{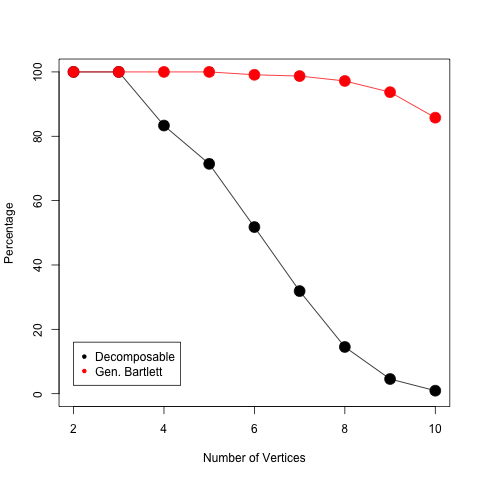}
\caption{Plot comparing the percentage\\ of Generalized Bartlett graphs with \\decomposable graphs}
\label{prop-propq}
\end{subfigure} \hspace{1cm}
\begin{subfigure}{.4\textwidth}
\captionsetup{justification=raggedleft,
singlelinecheck=false
}
\includegraphics[scale=0.5]{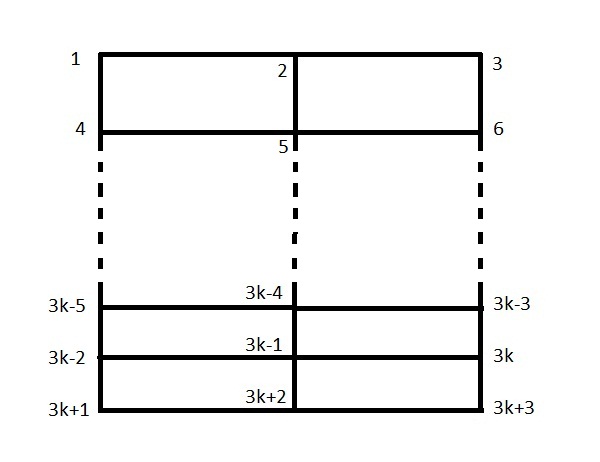}
\caption{A $(k+1) \times 3$ grid which is a \hspace{0.2cm} \\ Generalized Bartlett graph.}
\label{3kby3}
\end{subfigure}
\caption{ }
\end{figure}

\begin{figure}[h]
\begin{subfigure}{0.5\textwidth}
\begin{tabular}{|l|p{1.5cm}|p{1cm}|p{1cm}|p{1cm}|}
\hline
 Order & No. of graphs & \multicolumn{2}{|c|}{Percentage} & Ratio \\ \hline
 & & Decom posable & Gen. Bart. &  \\ \hline
2 & 1 & 100 & 100 & 100\% \\
3 & 2 & 100 & 100 & 100\% \\
4 & 6 & 83 & 100 & 83\% \\
5 & 21 & 71 & 100 & 71\% \\
6 & 112 & 52 & 99 & 53\% \\
7 & 853 & 32 & 98 & 32\% \\
8 & 11117 & 15 & 97 & 15\% \\
9 & 261080 & 4.5 & 94 & 5\% \\ 
10 & 11716571 & 0.9 & 86 & 1\% \\ \hline
\end{tabular}
\caption{Percentages for Generalized Bartlett graphs and decomposable graphs among connected 
non-isomorphic graphs with at most $10$ vertices.}
\label{table}
\end{subfigure}\hfill%
\begin{subfigure}{0.4\textwidth}
\centering
\includegraphics[scale=0.3]{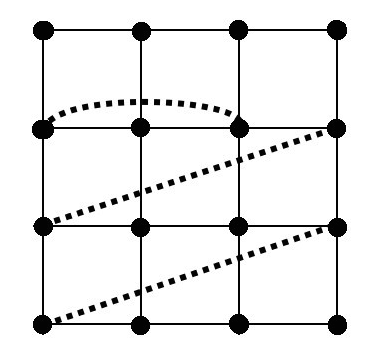}
\caption{A $4 \times 4$ grid where the dotted lines represent the extra edges in its Generalized Bartlett cover.}
\label{4by4}
\end{subfigure}
\caption{ }
\end{figure}

\subsection{Clique Sum Property of Generalized Bartlett graphs}
An unordered graph $G=(V,E)$ is said to have a decomposition into components $G_1=(V_1,E_1)$ and $G_2=(V_2,E_2)$ if the vertex set $V$ can be decomposed as $V=V_1 \cup V_2$ where $(V_1-V_2) \cup (V_2-V_1) \neq \emptyset$ such that the induced subgraph on $V_1 \cap V_2 \neq \emptyset$ is complete, and $V_1 \cap V_2$ separates $V_1-V_2$ from $V_2-V_1$ (i.e., if $u \in V_1-V_2$ and $w \in V_2-V_1$ then $(u,w) \notin E$). If $G$ cannot be decomposed in the above manner it is called a \textit{prime graph}. Hence any graph $G$ is either prime or can be broken down into several prime components by repeated application of the above procedure. It is well known that a graph $G$ is decomposable iff all of its prime components are complete. The following lemma provides a similar characterization for Generalized Bartlett graphs.
\begin{lem}
If all the prime components of a graph are Generalized Bartlett then the graph is also Generalized Bartlett.
\end{lem}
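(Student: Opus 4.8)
The plan is to induct on the number of prime components of $G$, reducing everything to a single clique-sum and invoking the ordering-free characterization in Lemma \ref{lem2}. If $G$ is prime it coincides with its unique prime component, which is Generalized Bartlett by hypothesis, so the base case (which also covers graphs on one or two vertices) is immediate. If $G$ is not prime, write $G = G_1 \cup G_2$ for a nontrivial decomposition along a complete separator $S = V_1 \cap V_2$, that is, with $V_1 \setminus V_2$ and $V_2 \setminus V_1$ both nonempty, where $G_i$ denotes the subgraph of $G$ induced on $V_i$; such a decomposition exists precisely because $G$ is not prime, and it makes $|V_1|, |V_2| < |V|$. By the standard structure of clique-separator (prime) decompositions, every prime component of $G_1$ and of $G_2$ is a prime component of $G$ (or, in degenerate cases, a complete subgraph of $S$), hence is Generalized Bartlett; the induction hypothesis, applied to the strictly smaller graphs $G_1$ and $G_2$, then shows that $G_1$ and $G_2$ are themselves Generalized Bartlett. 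It therefore suffices to prove the single statement: the clique-sum of two Generalized Bartlett graphs along a complete separator is Generalized Bartlett.

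For this, I would use Lemma \ref{lem2} to pick decomposable covers $\widetilde{G_1} = (V_1, \widetilde{E_1})$ and $\widetilde{G_2} = (V_2, \widetilde{E_2})$ of $G_1$ and $G_2$, so that every triangle of $\widetilde{E_i}$ contains an edge of $E_i$, the edge set of $G_i$. I would then take $\widetilde{G} := (V, \widetilde{E_1} \cup \widetilde{E_2})$ as the candidate decomposable cover of $G$. It covers $G$, since $S$ separates $V_1 \setminus V_2$ from $V_2 \setminus V_1$, so every edge of $G$ has both endpoints in $V_1$ or both in $V_2$ and hence lies in $E_1 \subseteq \widetilde{E_1}$ or in $E_2 \subseteq \widetilde{E_2}$. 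Moreover, because $S$ is complete in $G$ (so the complete graph on $S$ already sits inside both $\widetilde{E_1}$ and $\widetilde{E_2}$) and because no edge between $V_1 \setminus V_2$ and $V_2 \setminus V_1$ is added, the subgraph of $\widetilde{G}$ induced on $V_i$ is exactly $\widetilde{G_i}$ and $S$ remains a complete separator of $\widetilde{G}$; thus $(\widetilde{G_1}, \widetilde{G_2})$ is a clique-sum decomposition of $\widetilde{G}$, and applying the same standard fact to $\widetilde{G}$ its prime components are those of $\widetilde{G_1}$ and $\widetilde{G_2}$, all complete since $\widetilde{G_1}$ and $\widetilde{G_2}$ are decomposable, so $\widetilde{G}$ is decomposable.

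The crux is verifying the triangle property of Lemma \ref{lem2} for $\widetilde{G}$. Given a triangle $\{u,v,w\}$ in $\widetilde{E} := \widetilde{E_1} \cup \widetilde{E_2}$, I would argue it cannot straddle both sides: if some vertex lay in $V_1 \setminus V_2$ and another in $V_2 \setminus V_1$, the edge joining them would have to lie in $\widetilde{E_1} \cup \widetilde{E_2}$, contradicting $\widetilde{E_1} \subseteq V_1 \times V_1$ and $\widetilde{E_2} \subseteq V_2 \times V_2$. Hence all three vertices lie in $V_1$ or all three lie in $V_2$; say the former. Then all three edges of the triangle lie in $\widetilde{E} \cap (V_1 \times V_1) = \widetilde{E_1}$ (the equality because the only edges of $\widetilde{E_2}$ inside $V_1$ are those within the complete set $S$, already present in $\widetilde{E_1}$), so the triangle is a triangle of $\widetilde{G_1}$ and therefore contains an edge of $E_1 \subseteq E$. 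By Lemma \ref{lem2}, $\widetilde{G}$ then witnesses that $G$ is Generalized Bartlett, which completes the single clique-sum case and hence the induction. I expect the main obstacle to be not the triangle argument itself (which is short) but the bookkeeping around the separator $S$ — checking that the two covers glue into a genuine decomposable cover of $G$ and that restricting $\widetilde{E}$ to $V_i$ recovers exactly $\widetilde{E_i}$ — every piece of which rests on $S$ being complete; beyond that there is no conceptual difficulty.
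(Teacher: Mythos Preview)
Your proposal is correct and follows essentially the same approach as the paper: reduce to a single clique-sum, take the union of the two Lemma~\ref{lem2} covers, and verify both decomposability and the triangle property for that union. The only cosmetic difference is that you establish decomposability of $\widetilde{G}$ by invoking the clique-sum/prime-component characterization, whereas the paper argues directly that any induced cycle of length $\geq 4$ in $\widetilde{G}$ must acquire a chord through the complete separator $S$.
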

\begin{proof}
Note that, it is enough to prove the theorem, for a graph with two prime components. Let $G=(V,E)$ be an 
undirected graph with $V=V_1 \cup V_2$ such that induced subgraph on $V_1 \cap V_2 \neq \emptyset$ is complete, and 
$V_1 \cap V_2$ separates $V_1-V_2$ and $V_2-V_1$. Let $G_1=(V_1,E_1)$ and $G_2=(V_2,E_2)$ be the corresponding induced 
subgraphs which are Generalized Bartlett. 
Then by Lemma 2, we can construct decomposable covers $\widetilde{G}_1=(V_1,\widetilde{E}_1)$ and $\widetilde{G}_2=(V_2,
\widetilde{E}_2)$ of $G_1$ and $G_2$ respectively, such that every triangle in $\widetilde{E}_i$ contains an edge in 
$E_i$ for $i = 1,2$. Define $\widetilde{E} = \widetilde{E}_1 \cup \widetilde{E}_2$ and $\widetilde{G}=(V,\widetilde{E})$. 
Note that $\widetilde{G}$ is a cover for $G$. We claim that $\widetilde{G}$ is decomposable. Suppose to the contrary, 
that for for some $n \geq 4$ there exists an induced cycle in $\widetilde{G}$ on a set of $n$ vertices, say $u_1,u_2,
\ldots,u_n$. Since $\widetilde{G}_1$ and $\widetilde{G}_2$ are both decomposable, all of $\{u_1,u_2,\ldots,u_n\}$  cannot 
belong to exclusively in $V_1$ or in $V_2$. Hence, there exist $1 \leq i,j \leq n$ such that $u_i \in V_1-V_2$ and $u_j 
\in V_2-V_1$. Since $V_1-V_2$ and $V_2-V_1$ are separated by $V_1 \cap V_2$, $(u_i,u_j) \notin \widetilde{E}$. Thus the 
subgraph induced by $\widetilde{G}$ on the set of vertices $\{u_1, u_2, \cdots, u_n\}$ contains two paths, both arising from $u_i$ and ending in $u_j$ and intersecting no where in between. Let $\{u_i,u_{i_1},\ldots,u_{i_p},u_j\} $ be one of those paths. If  $\{u_i,u_{i_1},\ldots,u_{i_p},u_j\} \subset (V_1-V_2)\cup V_2-V_1$, then there exists points in $V_1-V_2$ and $V_2-V_1$ connected to each other in $\widetilde{G}$, which is not possible. Thus $\exists \, u_k \in \{u_{i_1},\ldots,u_{i_p}\}$ s.t. $u_k \in V_1 \cap V_2$. Similarly $\exists \, u_{k'} \in V_1 \cap V_2$ corresponding to the second path. Since $V_1 \cap V_2$ is complete $(u_k,u_{k'}) \in \widetilde{E}$. Hence $(u_k,u_{k'})$ is a chord in the cycle $u_1,u_2,\ldots,u_n$ giving us a contradiction. Hence, $\widetilde{G}$ is a decomposable cover of $G$.
 
To prove that $G$ is a Generalized Bartlett graph we shall prove that every triangle in the decomposable cover $\widetilde{G}$ contains an edge in $G$. Let us assume to the contrary that for $u,v,w \in V$, $(u,v),(u,w),(v,w) \in \widetilde{E}$ but $(u,v),(u,w),(v,w) \notin E$. Since every triangle in $\widetilde{G}_i$ has at least one edge 
from $G_i$ for $i = 1,2$, it follows that $u,v,w$ all cannot belong exclusively to $V_1$ or $V_2$. Without loss of generality, let $u \in V_1-V_2$ and $v \in V_2-V_1$. This implies $(u,v) \notin \widetilde{E}$ giving us a contradiction. Thus $G$ is Generalized Bartlett.
\end{proof}

\subsection{Constructing Generalized Bartlett covers}
Given an ordered graph $G_\sigma =(V,E_\sigma)$, Algorithm \ref{genbarcover} below provides a Generalized Bartlett graph $\underline{G}_\sigma = (V, \underline{E}_\sigma)$ such that $\underline{E}_\sigma \supset E_\sigma$. Such a graph is referred to as a Generalized Bartlett cover of $G_\sigma$.

\begin{algorithm}[h]
\caption{Construction of Generalized Bartlett cover}
\begin{algorithmic}[0]
   \State Set $\tilde{E}:=E$.
   \While{$\exists \, i>j,(\sigma^{-1}(i),\sigma^{-1}(j)) \notin \tilde{E}$ such that $L_{ij}$ when expressed as a polynomial violates Property A or B} 
      \State $\tilde{E}=\tilde{E} \cup \{(\sigma^{-1}(i),\sigma^{-1}(j))\}$
   \EndWhile   
\end{algorithmic}
\label{genbarcover}
\end{algorithm}

Recall from Section \ref{gridsexmpl} that a $4 \times 4$ grid is the smallest example of a grid which is not a Generalized Bartlett graph. The Generalized Bartlett cover for a $4 \times 4$ grid using Algorithm \ref{genbarcover} is provided in Figure \ref{4by4}. Note that this cover has only three extra edges.

\section{A tractable Gibbs sampler for generalized $G$-Wisharts}
\label{gibbssampler}

\noindent
In Section \ref{genbargraph} we studied the graph theoretic properties of GB graphs. In this section we shall investigate the statistical/properties of GB graphs. In particular, we develop a tractable Gibbs sampling algorithm to sample from the generalized $G$-Wishart distribution when the underlying graph is Generalized Bartlett. The first step in achieving this goal requires considering a further transformation of the Cholesky parameter $(L_I, D)$ from Section \ref{propofgengwish}.

\subsection{A reparametrization of the Cholesky parameter}

\noindent
Let $G = (V, E)$  be an undirected graph with $|V| = p$, and $\sigma$ an ordering for $G$. Let $\Omega = LDL^T$ be the modified Cholesky decomposition of $\Omega \in \mathbb{P}_{G_\sigma}$. To facilitate our analysis, we consider a one-to-one transformation of $(L_I, D)$ defined as follows:
\[(D_1,D_2,\ldots,D_p) \rightarrow (\widetilde{D}_1,\widetilde{D}_2,\ldots,\widetilde{D}_p)\]
where $\widetilde{D}_1 = D_1$ and $\widetilde{D}_k = \frac{D_k}{D_{k-1}}$ for $2 \leq k \leq p$. The 
following lemma shows that terms of the form $L_{ik} L_{jk} D_k$ can be expressed as a polynomial in the entries of $L_I$ and $\widetilde{D}$ (with negative powers allowed for entries of $\widetilde{D}$). 
\begin{lem} \label{lem:reparametrization}
For $1 \leq k \leq j \leq i \leq p$, terms of the form $L_{ik} L_{jk} D_k$, which appear in the modified Cholesky expansion of $\Omega$, are either functionally zero, or can be expressed as a sum of terms, where each term has the following form:
\begin{equation}
\pm \left(\prod_{\{r>s,(r,s) \in E, r\leq i, s<j\}}L_{rs}^{c'_{rs}}\right) \times \left(\prod_{k=1}^p \widetilde{D}_k^{d'_k}\right) 
\label{term2}
\end{equation}
\end{lem}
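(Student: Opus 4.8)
The plan is to establish the claim by induction on $k$, exploiting the recursive structure of the modified Cholesky factor $L$ when $\Omega$ is restricted to lie in $\mathbb{P}_{G_\sigma}$. First I would recall how the entries $L_{ij}$ (for $i > j$) are obtained from the constraint $\Omega \in \mathbb{P}_{G_\sigma}$: writing $\Omega = LDL^T$, the vanishing of $\Omega_{ij}$ for $(i,j) \notin E_\sigma$ forces each such $L_{ij}$ to be a rational function of the free entries $L_I$ and of $D_1, \dots, D_{j}$, obtained by back-substitution. Concretely, for fixed $j$, solving the linear system coming from $\Omega_{ij} = 0$ across $i > j$ with $(i,j)\notin E_\sigma$ expresses these $L_{ij}$ in terms of the $L_{i'j}$ with $(i',j)\in E_\sigma$ (the free entries in column $j$) and lower-index quantities. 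The key point is to track the structure of the resulting expression: I claim that every $L_{ij}$ (whether free or a fill-in entry) can be written as a sum of signed monomials of the form
\[
\pm \left(\prod_{\{r>s,\,(r,s)\in E,\; r\le i,\; s\le j\}} L_{rs}^{c_{rs}}\right)\times\left(\prod_{m=1}^p \widetilde{D}_m^{e_m}\right),
\]
i.e. involving only free $L$-entries from columns $\le j$ (and rows $\le i$) together with a Laurent monomial in $\widetilde D$. Granting this, the product $L_{ik}L_{jk}D_k$ is then a sum of products of three such expressions; since $D_k = \prod_{m=1}^k \widetilde D_m$, the $D_k$ factor only contributes nonnegative powers of $\widetilde D_1,\dots,\widetilde D_k$, and multiplying three monomials of the displayed type yields again a sum of monomials of the displayed type, with the row/column index constraints $r\le i$, $s<j$ inherited (the strict inequality $s<j$ comes from the fact that in $L_{ik}$ and $L_{jk}$ and $L_{kk}=1$ all relevant second indices are $\le k \le j$, and a short case check on whether $s=j$ can occur — it cannot since $k\le j$ and equality $s=k=j$ forces $i=j=k$ with $L_{kk}=1$ contributing nothing).

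The core of the argument is therefore the claim about the structure of individual $L_{ij}$, which I would prove by strong induction on the column index $j$ (and within a column, by back-substitution on the row index $i$, descending). For $j=1$: the column-$1$ entries $L_{i1}$ satisfy $\Omega_{i1} = \sum_{m} L_{im}L_{1m}D_m = L_{i1}D_1$ (since $L_{1m}=0$ for $m>1$ and $L_{11}=1$), so the free entries are literally $L_{i1} = \Omega_{i1}/D_1$, but more to the point, when $(i,1)\notin E_\sigma$ we get $L_{i1}=0$ identically; so column $1$ is trivial. For the inductive step at column $j$, I would use the equations $\Omega_{ij} = \sum_{m=1}^{j} L_{im}L_{jm}D_m = 0$ for each $i>j$ with $(i,j)\notin E_\sigma$, isolate $L_{ij}D_j = -\sum_{m=1}^{j-1} L_{im}L_{jm}D_m$, hence
\[
L_{ij} = -\,\widetilde D_j^{-1}\,D_{j-1}^{-1}\sum_{m=1}^{j-1} L_{im}L_{jm}D_m \;=\; -\sum_{m=1}^{j-1} L_{im}L_{jm}\,\widetilde D_j^{-1}\prod_{\ell=m}^{j-1}\widetilde D_\ell^{-1}\cdot\!\!\prod_{\ell=1}^{m}\widetilde D_\ell,
\]
wait—more cleanly: $L_{ij} = -D_j^{-1}\sum_{m=1}^{j-1} L_{im}L_{jm}D_m$, and $D_m/D_j = \prod_{\ell=m+1}^{j}\widetilde D_\ell^{-1}$ is a Laurent monomial in $\widetilde D$. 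By the inductive hypothesis each $L_{im}$ and $L_{jm}$ (with $m<j$) is a sum of monomials of the required type with column indices $\le m < j$ and row indices $\le i$, $\le j$ respectively; multiplying and summing preserves the form, the row constraint for the product being $\le i$ (since $j\le i$), and the column constraint being $\le m \le j-1 < j$. This closes the induction. The free entries $L_{ij}$ with $(i,j)\in E_\sigma$ are themselves trivially monomials of the required type (a single $L_{ij}^1$), providing the base case within each column.

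The main obstacle I anticipate is bookkeeping the index ranges precisely enough to get the exact exponent-set conditions stated in \eqref{term2} — in particular verifying that the second index $s$ of every surviving $L_{rs}$ factor satisfies $s < j$ (strictly) rather than $s\le j$, and that $r \le i$, through all the substitutions. This requires care because the back-substitution for a fill-in entry in column $j$ can invoke free entries in column $j$ itself (namely $L_{jm}$ with $m<j$, and $L_{im}$), so one must be sure the recursion always decreases the relevant index; the resolution is that the sum defining $L_{ij}$ runs only over $m \le j-1$, so every $L$-factor that appears ultimately has second index at most $j-1$. A secondary, purely cosmetic, point is handling the signs and the fact that a given monomial can appear multiple times with different signs (possibly cancelling); since the statement only asserts that $L_{ik}L_{jk}D_k$ \emph{is} such a sum (not that the representation is unique or reduced), no cancellation analysis is needed. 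I would also remark that the case where $L_{ik}L_{jk}D_k$ is "functionally zero" is exactly the case where at least one of $(i,k)$, $(j,k)$ fails to produce a nonzero $L$-entry, which the induction detects automatically.
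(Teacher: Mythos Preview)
Your approach is essentially the same as the paper's: both derive the recursion $L_{ij}=-\sum_{m<j}L_{im}L_{jm}D_m/D_j$ for dependent entries, rewrite $D_m/D_j=\prod_{\ell=m+1}^{j}\widetilde D_\ell^{-1}$, and then iterate (you organize this as a clean induction on the column index, the paper just says ``making repeated substitutions''); the claim for $L_{ik}L_{jk}D_k$ then follows by multiplication. One small slip in your case check for the strict inequality $s<j$: the equality $s=k=j$ does \emph{not} force $i=j=k$, since one can have $k=j<i$ with $(i,j)\in E_\sigma$, in which case $L_{ik}L_{jk}D_k=L_{ij}D_j$ and the free factor $L_{ij}$ has second index exactly $j$. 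The paper's proof has the same edge case (it derives the constraint $s<j$ only for the expansion of a \emph{dependent} $L_{ij}$ and then asserts the lemma ``follows from (\ref{term1})''), and the downstream uses only need that $L_{ik}L_{jk}D_k$ is a polynomial in $L_I$ and a Laurent polynomial in $\widetilde D$, so this is cosmetic rather than a genuine gap.
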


\noindent
{\it Proof}: Note that \[D_i = \prod_{k=1}^i \widetilde{D}_k\] for all $i$, and the Jacobian of this 
transformation is $\prod_{k=1}^{p-1} \widetilde{D}_k ^{p-k}$. Hence, the posterior density of $(L_I, 
\widetilde{D})$ is proportional to 
\begin{equation} \label{posterior1}
\left(\prod_{j=1}^p \widetilde{D}_j^{\alpha_j}\right) \times \exp{\left(-\frac{1}{2} \sum_{i=1}^p 
\sum_{j=1}^p(nS_{ij}+U_{ij})\sum_{k=1}^{\min(i,j)} L_{ik}L_{jk} \prod_{l=1}^k \widetilde{D}_l \right)}, 
\end{equation}

\noindent
where $\alpha_k = (p-k)+ \sum_{l=k}^p\frac{n+\delta_l+2n_l}{2}$ for $1 \leq k \leq p$. Note that if $i>j$ and $(i,j) 
\notin E$, then $\Omega_{ij} = \sum_{k=1}^j L_{ik} L_{jk} D_k = 0$, which implies 
$$
L_{ij} = -\frac{\sum_{k=1}^{j-1} L_{ik} L_{jk} D_k}{D_j} = -\sum_{k=1}^{j-1} L_{ik} L_{jk} \prod_{l=k+1}^j 
\widetilde{D}_l^{-1}. 
$$

\noindent
Making repeated substitutions in the RHS of the above equation, it follows that the entry $L_{ij}$ is either functionally zero, or can be expressed as a sum of terms, where each term looks like
\begin{equation}
\pm \left(\prod_{\{r>s,(r,s) \in E_\sigma\}}L_{rs}^{c_{rs}}\right) \times \left(\prod_{k=1}^p 
\widetilde{D}_k^{d_k}\right) 
\label{term1}
\end{equation}

\noindent
for suitable non-negative integers $c_{rs}$, and non-positive integers $d_k$. It is easy to see that 
\begin{eqnarray} \label{powerexpan}
c_{rs} &=& 0 \mbox{ for } (r,s) \in E \mbox{ with } r > i \mbox{ or } s \geq j, \nonumber\\
d_j &=& -1 \mbox{ and } d_k = 0 \mbox{ for } k > j. \nonumber 
\end{eqnarray}

\noindent
Hence, $L_{ij}$ can be expressed as a polynomial in entries of $L_I$ and $\widetilde{D}^{-1}$. The 
results now follows from $\eqref{term1}$. \hfill$\Box$ 

\medskip

\noindent
Note that for every $i > j$ with $(i,j) \notin E_\sigma$, the functionally dependent entry $L_{ij}$ can be expressed in terms of $L_I$ and $\widetilde{D}$ as in (\ref{term1}). The above analysis indicates that, in general, the posterior is a complicated function of $(L_I, \widetilde{D})$. However, we will show that if $G$ is a Generalized Bartlett graph, and $\sigma$ is a Generalized Bartlett ordering for $G$, then the full conditional posterior distributions of all individual entries of $(L_I, \widetilde{D})$ are either Gaussian or Generalized Inverse Gaussian distributions (and therefore easy to sample from). This property will then be used to derive a Gibbs sampling algorithm to sample from the posterior density in (\ref{posterior1}). 

\subsection{The Gibbs sampler}

\noindent
We now derive a Gibbs sampling algorithm to sample from the posterior density in (\ref{posterior1}). We start by defining two properties which will be crucial to the development of the Gibbs sampler. 

\begin{Def}
Let $G_\sigma=(V,\sigma,E_\sigma)$ be an ordered graph, and for $\Omega \in \mathbb{P}_{G_\sigma}$, $\Omega = LDL^T$ be the modified Cholesky decomposition. 
\begin{enumerate}
\item The ordered graph $G_\sigma$ is defined to have \textbf{Property-A} if for every $i > j$
such that $(i,j) \notin E_\sigma$, the following holds: for every $r > s$ with $(r,s) \in E_\sigma$, $L_{ij}$ is a linear 
function of $L_{rs}$ (keeping other entries of $L_I$ and $\widetilde{D}$ fixed). In other words, in $\eqref{term1}$,  
$c_{rs}$ can only be $0$ or $1$ for every $r > s$ with $(r,s) \in E_\sigma$. 
\item The ordered graph $G_\sigma$ is defined to have \textbf{Property-B} if for every $i > j$ 
such that $(i,j) \notin E_\sigma$, the following holds: for every $1 \leq k \leq p$, $L_{ij}$ is a linear function of $\widetilde{D}_k^{-1}$ (keeping other entries of $L_I$ and $\widetilde{D}$ fixed). In other words, in $\eqref{term1}$, $d_k$ can only be $0$ or $-1$ for every $1 \leq k \leq p$. 
\end{enumerate}
\end{Def}

\noindent
We now state three lemmas which will be useful in our analysis. The first lemma provides an equivalent 
formulation of Property-B. The proofs of these lemmas are provided in Supplemental Section \ref{proof4.0}, \ref{proofnocutting} and \ref{proofotherone} respectively. 
\begin{lem} \label{lem4.0}
The following statements are equivalent. 
\begin{enumerate}[(a)]
\item The ordered graph $G_\sigma$ satisfies Property-B. 
\item For every $1 \leq k \leq j \leq i \leq p$, $L_{ik} L_{jk} D_k$ can be expressed as a polynomial in entries of $L_I$ 
and $\widetilde{D}$ (with negative powers allowed for entries of $\widetilde{D}$). Furthermore for every term in the expansion of $L_{ik} L_{jk} D_k$, the power of any entry of $\widetilde{D}$ is either $0,1$ or $-1$. 
\end{enumerate}
\end{lem}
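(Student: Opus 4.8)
The plan is to prove the two implications separately, working throughout in the ring $R := \mathbb{R}[L_I][\widetilde{D}_1^{\pm 1},\ldots,\widetilde{D}_p^{\pm 1}]$ of Laurent polynomials in the entries of $\widetilde{D}$ with coefficients that are ordinary polynomials in the functionally independent entries $L_I$; since $\mathbb{R}[L_I]$ is an integral domain, so is $R$, a fact used crucially below. For a nonzero element of $R$ and an index $m$, let $e_m$ denote the exponent of $\widetilde{D}_m$ in one of its monomials. I will use repeatedly the following facts, all established \emph{unconditionally} in the proof of Lemma~\ref{lem:reparametrization}: every monomial in the expansion \eqref{term1} of a dependent entry $L_{ab}$ (i.e.\ $a>b$ with $(a,b)\notin E_\sigma$) has $e_b=-1$, has $e_m=0$ for $m>b$, and has $e_m\le 0$ for every $m$; a free entry (either $a=b$, or $a>b$ with $(a,b)\in E_\sigma$) involves no $\widetilde{D}$ at all; and $D_k=\prod_{m\le k}\widetilde{D}_m$.

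For (a) $\Rightarrow$ (b): fix a triple $1\le k\le j\le i\le p$. That $L_{ik}L_{jk}D_k$ belongs to $R$ is exactly the content of Lemma~\ref{lem:reparametrization}, so only the exponent bound $e_m(L_{ik}L_{jk}D_k)\in\{-1,0,1\}$ remains. Write a generic monomial of $L_{ik}L_{jk}D_k$ as the product of a monomial of $L_{ik}$, a monomial of $L_{jk}$, and $D_k$, and bound its $\widetilde{D}_m$-exponent factor by factor: for $m>k$ every factor contributes $0$; for $m=k$ each of the two $L$-factors contributes $0$ or $-1$ (by the pivot facts) while $D_k$ contributes $+1$; for $m<k$ each $L$-factor contributes $0$ or $-1$ — this is the one place where the Property-B condition for the possibly-dependent entries $L_{ik}$ and $L_{jk}$ is invoked — while $D_k$ contributes $+1$. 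In each case the sum lies in $\{-1,0,1\}$, so this direction is a short finite case check.

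For (b) $\Rightarrow$ (a): fix a dependent entry $L_{ij}$, which we may assume is not functionally zero. By the unconditional facts, $e_j(L_{ij})=-1$, $e_m(L_{ij})=0$ for $m>j$, and $e_m(L_{ij})\le 0$ for all $m$, so the Property-B condition for $L_{ij}$ reduces to showing $e_l(L_{ij})\ge -1$ for every $l<j$. Suppose this fails for some such $l$; grouping the terms of $L_{ij}$ of lowest order in $\widetilde{D}_l$, write $L_{ij}=A\,\widetilde{D}_l^{-e}+(\text{terms of strictly higher order in }\widetilde{D}_l)$ with $e\ge 2$ and $A\ne 0$ not involving $\widetilde{D}_l$. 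Because $R$ is a domain, $A^2\ne 0$, so $L_{ij}^2$ has monomials with $e_l=-2e$; multiplying by $D_j$, whose $\widetilde{D}_l$-exponent equals $+1$ since $l<j$, yields monomials of $L_{ij}^2D_j$ with $e_l=-2e+1\le -3$. But $L_{ij}^2D_j$ equals the quantity $L_{i'k'}L_{j'k'}D_{k'}$ from statement (b) for the choice $(i',j',k')=(i,i,j)$, which is an admissible triple since $1\le j\le i\le i\le p$; hence (b) guarantees every $\widetilde{D}$-exponent of $L_{ij}^2D_j$ lies in $\{-1,0,1\}$, contradicting the previous sentence. Therefore $e_l(L_{ij})\ge -1$, which completes this direction.

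I expect the converse direction to be the main obstacle. The naive route via the Cholesky recursion $L_{ij}D_j=-\sum_{k<j}L_{ik}L_{jk}D_k$ only gives the weaker lower bound $e_l(L_{ij})\ge -2$ and does not exclude the value $-2$. The decisive observation is that $L_{ij}^2D_j$ is itself one of the quantities controlled by (b) — via the triple $(i,i,j)$ — combined with the fact that, over an integral domain, squaring a nonzero Laurent polynomial doubles (rather than cancels) its lowest $\widetilde{D}_l$-degree; this amplifies any would-be violation $e_l=-e$ with $e\ge 2$ into the violation $e_l=-2e+1\le -3$ that (b) forbids. The rest — the pivot facts, the unconditional non-positivity of the $\widetilde{D}$-exponents, and the forward-direction case analysis — is routine bookkeeping.
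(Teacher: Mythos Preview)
Your proof is correct and follows the same route as the paper's: for (a)$\Rightarrow$(b) you bound the $\widetilde{D}_m$-exponent of each factor and add, and for (b)$\Rightarrow$(a) you specialise (b) to the triple $(i,i,j)$ so that a putative exponent $-e\le -2$ in $L_{ij}$ amplifies to $-2e+1\le -3$ in $L_{ij}^2D_j$. The one place where you are more careful than the paper is the integral-domain observation ensuring that the lowest-order term in $\widetilde{D}_l$ really does survive squaring; the paper asserts this without justification.
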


\noindent
Let $i > j$ with $(i,j) \notin E_\sigma$. As noted above, for $k' < k < j$, both $L_{ik} L_{jk} D_k$ and 
$L_{ik'} L_{jk'} D_{k'}$ can be expressed as polynomials in entries of $L_I$ and $\widetilde{D}$ (with negative 
powers allowed for entries of $\widetilde{D}$). 
\begin{lem} \label{nocutting}
Both $L_{ik} L_{jk} D_k$ and $L_{ik'} L_{jk'} D_{k'}$ are either functionally zero, or any term in the expansion of $L_{ik} L_{jk} D_k$ cannot be the exact negative (functionally) of any term in the expansion of $L_{ik'} L_{jk'} 
D_{k'}$. 
\end{lem}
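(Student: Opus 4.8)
The plan is to track, for a fixed pair $i > j$ with $(i,j) \notin E_\sigma$ and for each $k < j$, exactly which terms of the form \eqref{term1} can appear in the polynomial expansion of $L_{ik}L_{jk}D_k$, and to show that the ``signature'' of any such term (which entries of $L_I$ and $\widetilde{D}$ it involves, together with the parity of its sign) already determines the index $k$. If I can exhibit a feature of each term that is monotone (or otherwise injective) in $k$, then two terms coming from different values $k' < k$ cannot be functional negatives of one another, because a functional negative must be the same monomial up to sign. First I would set up the bookkeeping: using the recursion $L_{ij} = -\sum_{l=1}^{j-1} L_{il} L_{jl} \prod_{m=l+1}^{j} \widetilde{D}_m^{-1}$ from the proof of Lemma \ref{lem:reparametrization}, expand $L_{ik}$ and $L_{jk}$ (when they are fill-in entries) into sums of terms of the form \eqref{term1}, multiply by $D_k = \prod_{l=1}^k \widetilde{D}_l$, and record for a generic resulting term the vector of exponents $(c_{rs})$ on the $L$-variables and $(d_l)$ on the $\widetilde{D}$-variables.

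The key step is to identify the invariant. From the exponent bookkeeping in Lemma \ref{lem:reparametrization} (the displayed relations on $c_{rs}$ and $d_l$), a term in the expansion of $L_{ik}L_{jk}D_k$ has the property that the largest index $l$ with $d_l \neq 0$ is controlled by $k$: multiplying by $D_k = \prod_{l=1}^k \widetilde{D}_l$ shifts the $\widetilde{D}$-exponents by $+1$ for all $l \leq k$, and the sub-terms $L_{ik}, L_{jk}$ only carry $\widetilde{D}$-exponents with indices $\leq k$. So I expect to show that the sum $\sum_{l} d_l$ (or, more robustly, the exponent $d_k$ of $\widetilde{D}_k$ itself, or the smallest index appearing with a nonzero $\widetilde{D}$-exponent) records $k$: terms from $L_{ik}L_{jk}D_k$ will have $d_k \geq 0$ with $\widetilde{D}_k$ genuinely present via $D_k$, whereas terms from $L_{ik'}L_{jk'}D_{k'}$ with $k' < k$ have all their $\widetilde{D}$-exponents supported on indices $\leq k' < k$, so $\widetilde{D}_k$ simply does not occur. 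Hence the two monomials differ (they involve different variables), and in particular one cannot be the negative of the other. I would phrase this carefully because $L_{ik}$ itself, when further expanded, introduces negative powers of $\widetilde{D}_m$ for $m \leq k$, and I must make sure $D_k$ cancels or dominates these appropriately so that the ``highest $\widetilde{D}$ index present'' statement is clean; handling the functionally-zero case is immediate since the lemma explicitly allows it.

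The main obstacle I anticipate is the cancellation/collision issue: a priori a term produced from $L_{ik}L_{jk}D_k$ could, after all substitutions, coincide as a monomial with a term from $L_{ik'}L_{jk'}D_{k'}$ if the $\widetilde{D}$-exponents happen to align due to the telescoping products $\prod_{m=l+1}^{j}\widetilde{D}_m^{-1}$. So the crux is proving that the ranges of $\widetilde{D}$-indices occurring in the two expansions are genuinely separated (or at least that some coordinate of the exponent vector strictly distinguishes them), which is where I would invoke Property-B via Lemma \ref{lem4.0}: since every $\widetilde{D}$-exponent in any term of $L_{ik}L_{jk}D_k$ is $0$, $1$, or $-1$, the exponent vectors live in $\{-1,0,1\}^p$, and I only need to locate one coordinate — I expect it to be coordinate $k$ — where the two terms carry different values. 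Once that separating coordinate is pinned down, the conclusion that no term is the exact functional negative of another is immediate, since functional negatives must agree as monomials.
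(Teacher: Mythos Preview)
Your separating-coordinate idea is exactly the right instinct, and for the case where both $(i,k)\notin E_\sigma$ and $(j,k)\notin E_\sigma$ it works verbatim: by the exponent bookkeeping in Lemma~\ref{lem:reparametrization}, every term of a fill-in entry $L_{ik}$ carries $\widetilde{D}_k$ with exponent exactly $-1$, so every term of $L_{ik}L_{jk}D_k$ carries $\widetilde{D}_k$ with exponent $-1-1+1=-1\neq 0$, while no term of $L_{ik'}L_{jk'}D_{k'}$ for $k'<k$ involves $\widetilde{D}_k$ at all. This is precisely the paper's Case~2.

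The gap is in the complementary case. If exactly one of $(i,k),(j,k)$ lies in $E_\sigma$, say $(i,k)\in E_\sigma$, then $L_{ik}$ is an independent entry and contributes no power of $\widetilde{D}_k$, while the fill-in $L_{jk}$ contributes $-1$ and $D_k$ contributes $+1$; the net exponent of $\widetilde{D}_k$ is $0$. Your proposed invariant ``the exponent of $\widetilde{D}_k$'' is therefore blind in this case, and the alternatives you float ($\sum_l d_l$, or the smallest nonzero $\widetilde{D}$-index) do not obviously separate the two products either. The paper's fix is to change the separating variable: in this situation the \emph{independent} entry $L_{ik}$ (respectively $L_{jk}$) itself appears in every term of $L_{ik}L_{jk}D_k$, whereas by the constraint $c_{rs}=0$ for $s\geq k'$ in~(\ref{term1}) it cannot appear in any term of $L_{ik'}L_{jk'}D_{k'}$ with $k'<k$. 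So the case split ``at least one of $(i,k),(j,k)$ is an edge'' versus ``neither is an edge'' is not cosmetic; the two cases genuinely need different witnesses.

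A second, smaller issue: you invoke Property~B (via Lemma~\ref{lem4.0}) to bound the $\widetilde{D}$-exponents by $\{-1,0,1\}$, but Lemma~\ref{nocutting} is stated and used for an arbitrary ordered graph, with no Generalized Bartlett or Property~B hypothesis. The argument must go through without that restriction, and indeed it does once you use the right separating variable in each case.
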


\noindent
The next lemma shows that Generalized Bartlett graphs satisfies Properties A and B.
\begin{lem}
\label{genbarimpliesab}
Let $G=(V,E)$ be a Generalized Bartlett graph, and $\sigma$ be a Generalized Bartlett ordering for $G$. Then, the ordered graph $G_\sigma$ satisfies both Property A and B.
\end{lem}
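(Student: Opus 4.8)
The plan is to analyze the recursive expansion of a fill-in entry $L_{ij}$ (with $(i,j)\notin E_\sigma$, $i>j$) obtained by repeatedly substituting the relation $L_{ij} = -\sum_{k=1}^{j-1} L_{ik}L_{jk}\prod_{l=k+1}^j \widetilde D_l^{-1}$, as in the proof of Lemma \ref{lem:reparametrization}, and to show that the Generalized Bartlett property forces each monomial produced to be linear (exponent $0$ or $1$) in every $L_{rs}$ with $(r,s)\in E_\sigma$ (Property A) and linear (exponent $0$ or $-1$) in every $\widetilde D_k^{-1}$ (Property B). I would set this up as a double induction: an outer induction on $j$ (the column index of the fill-in entry being expanded) and, for fixed $j$, an induction following the recursive substitution process. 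The base case is when $j$ is small enough that $\{k : k<j\}$ contains no fill-in entries, so the expansion of $L_{ij}$ already has the desired form with all $c_{rs}\in\{0,1\}$ and all $d_k\in\{0,-1\}$; in general I would instead induct on the structure of $D^\sigma(E)$, using the fact that $\sigma$ is a perfect elimination ordering for $D^\sigma(G)$.

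The key step is the following combinatorial observation. When we expand $L_{ij}$, each term that survives is a product $\prod L_{rs}^{c_{rs}}\prod \widetilde D_k^{d_k}$ where every factor $L_{rs}$ that appears with $c_{rs}\ge 1$ has the property that $(r,s)$ lies in $D^\sigma(E)$ --- indeed each such $(r,s)$ arises because $r$ and $s$ were both neighbors of some eliminated vertex $\sigma^{-1}(k)$, which is precisely how edges are added in Algorithm \ref{triangulation}. Now suppose, toward a contradiction, that some monomial contains $L_{rs}$ to a power $\ge 2$ (violating Property A), or contains $\widetilde D_k^{-1}$ to a power $\le -2$ (violating Property B). Tracing back through the substitution history, a power $\ge 2$ on $L_{rs}$ (or on $\widetilde D_k^{-1}$, which I would handle via the equivalent formulation in Lemma \ref{lem4.0}) can only occur if at some stage two distinct factors $L_{i'k'}$ and $L_{j'k'}$ (sharing the eliminated index $k'$) both, upon further expansion, contribute a copy of $L_{rs}$; that in turn requires a configuration of three vertices $u,v,w$ that are pairwise non-adjacent in $E$ but pairwise adjacent in $D^\sigma(E)$. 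This is exactly the forbidden triangle in the Generalized Bartlett definition, so no Generalized Bartlett ordering can produce such a monomial. Lemma \ref{nocutting} (no term of $L_{ik}L_{jk}D_k$ is the exact negative of a term of $L_{ik'}L_{jk'}D_{k'}$) is used here to guarantee that cancellations do not conspire to hide such a bad monomial --- i.e. that the absence of a violating monomial in the fully reduced polynomial really does follow from its absence term-by-term.

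The main obstacle I anticipate is bookkeeping the provenance of each monomial through an arbitrarily deep chain of substitutions and making precise the claim "a power $\ge 2$ forces a forbidden triangle.'' Concretely, I would introduce an auxiliary directed structure recording, for each fill-in edge $(i,j)\in D^\sigma(E)\setminus E$, which pair of edges $(i,k)$, $(j,k)$ of $D^\sigma(E)$ (with $\sigma^{-1}(k)$ the relevant eliminated vertex) it was created from, and then argue that any monomial of the fully expanded $L_{ij}$ corresponds to a walk/tree in this structure whose leaves are genuine edges of $E$; a repeated leaf edge $L_{rs}$, or a repeated $\widetilde D_k$, yields two internally-disjoint branches meeting at a common vertex, and reading off three mutually $D^\sigma(E)$-adjacent, mutually $E$-non-adjacent vertices from the shape of this tree is the crux. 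Once that correspondence is nailed down, the Generalized Bartlett hypothesis closes the argument immediately, and Property A and Property B follow simultaneously (Property B via Lemma \ref{lem4.0}(b)).
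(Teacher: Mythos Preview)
Your overall strategy---show that a violation of Property~A or~B forces a triangle in $D^\sigma(E)$ with no edge in $E$---is exactly right, and you correctly flag Lemma~\ref{nocutting} as the tool that rules out hidden cancellations. However, you are making the argument much harder than necessary by trying to locate the forbidden triangle at the \emph{leaf} level of the recursion (near the squared variable $L_{rs}$), which is what drives you toward the tree bookkeeping you yourself identify as the main obstacle.

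The paper avoids all of that with a minimal-counterexample argument that finds the triangle at the \emph{top} of the recursion. Suppose Property~A fails, and let $L_{ij}$ be the first dependent entry (ordered by column index~$j$) where it fails. In the one-step expansion $L_{ij}=-\sum_{k<j}L_{ik}L_{jk}D_k/D_j$, some particular summand contains a squared independent entry (Lemma~\ref{nocutting} prevents cross-cancellation). Both $L_{ik}$ and $L_{jk}$ live in column $k<j$, so by minimality each individually satisfies Property~A. Now if $(i,k)\in E_\sigma$, then $L_{ik}$ is a free variable which (by the range restriction in~\eqref{powerexpan}) cannot appear in the expansion of $L_{jk}$; hence any square in the product would already be a square in $L_{jk}$, contradicting minimality. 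Symmetrically for $(j,k)\in E_\sigma$. So $(i,k),(j,k)\notin E_\sigma$, and since $L_{ik},L_{jk}$ are nonzero (their product contributes a nonzero term) and $(i,j)\notin E_\sigma$ with $L_{ij}\neq 0$, the triple $\{i,j,k\}$ itself is a triangle in $D^\sigma(E)$ with no edge in $E$---the forbidden configuration. The argument for Property~B is identical in structure.

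So the key simplification you are missing is that you never need to chase the squared variable $L_{rs}$ down the substitution tree; the \emph{indices} $i,j,k$ at the very first violating step already form the forbidden triangle. Your proposed auxiliary directed structure would presumably work, but it is doing a global analysis where a single local step suffices.
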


\noindent
We are now ready to state and prove the main result of this section, which provides a Gibbs sampler for the posterior density in (\ref{posterior1}). 
\begin{thm}
Let $G=(V,E)$ be a Generalized Bartlett Graph, and $\sigma$ be a Generalized Bartlett ordering for $G$. Suppose $\Omega \in \mathbb{P}_{G_\sigma}$ follows a generalized $G$-Wishart distribution with parameters $U$ and $\boldsymbol{\delta}$ for some positive definite matrix $U$ and $\boldsymbol{\delta} >0$. If $\Omega = LDL^T$ is the modified Cholesky decomposition of $\Omega$, and we define $\widetilde{D}_1=D_1$,  $\widetilde{D}_k = \frac{D_k}{D_{k-1}}$, for $k \geq 2$, then 
\begin{itemize}
\item the conditional posterior density of the independent entry $L_{ij}$, given all other entries of $L_I$ and $\widetilde{D}$, is univariate normal,  
\item the conditional posterior density of $\widetilde{D}_k$ given $L_I$ and other entries of $\{\widetilde{D}_{k'}\}_{k' \neq k}$ is either a Generalized Inverse Gaussian or Gamma.
\end{itemize}
\label{megatheorem}
\end{thm}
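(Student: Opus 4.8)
The plan is to read each conditional off the density of the generalized $G$-Wishart written in the $(L_I,\widetilde{D})$ coordinates. By \eqref{posterior1} (with the $nS_{ij}$ term absent in the prior case) this density is proportional to $\bigl(\prod_{j=1}^{p}\widetilde{D}_j^{\alpha_j}\bigr)\exp\!\bigl(-\tfrac12\,tr(\Omega U)\bigr)$, with $\alpha_j>0$ for all $j$, where $tr(\Omega U)=\sum_{l=1}^{p}D_l\,\ell_l^{\top}U\ell_l$, $\ell_l$ denoting the $l$-th column of $L$, $D_l=\prod_{m\le l}\widetilde{D}_m>0$, and $\Omega=LDL^{\top}=\sum_l D_l\,\ell_l\ell_l^{\top}$. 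The whole argument rests on two structural facts, both furnished by Lemma~\ref{genbarimpliesab}: since $\sigma$ is a Generalized Bartlett ordering, $G_\sigma$ has Property~A, so every functionally dependent entry of $L$ — hence every column $\ell_l$ — is an \emph{affine} function of any single free entry $L_{ij}$ once all other free entries and all of $\widetilde{D}$ are held fixed; and $G_\sigma$ has Property~B (equivalently Lemma~\ref{lem4.0}), so every such entry, hence every column $\ell_l$, is an \emph{affine} function of $\widetilde{D}_k^{-1}$ once $L_I$ and $\{\widetilde{D}_{k'}\}_{k'\ne k}$ are held fixed.

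For the first bullet, fix a free entry $L_{ij}$ (so $(i,j)\in E_\sigma$, $i>j$) together with all other coordinates, and write $\ell_l=\ell_l^{(0)}+L_{ij}\,\ell_l^{(1)}$ using Property~A. Then $\Omega$ is a matrix-valued quadratic in $L_{ij}$, so $tr(\Omega U)=a\,L_{ij}^{2}+b\,L_{ij}+c$ with $a=tr(MU)$ and $M:=\sum_l D_l\,\ell_l^{(1)}\bigl(\ell_l^{(1)}\bigr)^{\top}\succeq 0$. Because $L_{ij}$ appears literally as the $(i,j)$ entry of column $j$, the vector $\ell_j^{(1)}$ has $i$-th coordinate equal to $1$; hence $M\succeq D_j\,\ell_j^{(1)}\bigl(\ell_j^{(1)}\bigr)^{\top}\ne 0$, so $a=tr(MU)>0$ (as $U\succ0$). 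The factor $\prod_j\widetilde{D}_j^{\alpha_j}$ is constant in $L_{ij}$, so the conditional density is proportional to $\exp\!\bigl(-\tfrac{a}{2}\bigl(L_{ij}+\tfrac{b}{2a}\bigr)^{2}\bigr)$, i.e.\ univariate normal with mean $-\tfrac{b}{2a}$ and variance $\tfrac1a$.

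For the second bullet, fix $L_I$ and $\{\widetilde{D}_{k'}\}_{k'\ne k}$. For $l<k$ both $D_l$ and $\ell_l$ are free of $\widetilde{D}_k$, while for $l\ge k$ we have $D_l=\widetilde{D}_k\,c_l$ with $c_l:=\prod_{m\le l,\,m\ne k}\widetilde{D}_m>0$ and, by Property~B, $\ell_l=\ell_l^{(a)}+\widetilde{D}_k^{-1}\ell_l^{(b)}$. Expanding $\Omega=\sum_l D_l\,\ell_l\ell_l^{\top}$ then yields $\Omega=\Omega_0+\widetilde{D}_k\,M_A+\widetilde{D}_k^{-1}M_C$ with $M_A:=\sum_{l\ge k}c_l\,\ell_l^{(a)}\bigl(\ell_l^{(a)}\bigr)^{\top}\succeq0$ and $M_C:=\sum_{l\ge k}c_l\,\ell_l^{(b)}\bigl(\ell_l^{(b)}\bigr)^{\top}\succeq0$, all free of $\widetilde{D}_k$ (that no $\widetilde{D}_k^{-2}$ term survives is precisely the power bound of Lemma~\ref{lem4.0}). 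Hence $tr(\Omega U)$ equals a constant (in $\widetilde{D}_k$) plus $A\,\widetilde{D}_k+C\,\widetilde{D}_k^{-1}$ with $A=tr(M_AU)\ge0$ and $C=tr(M_CU)\ge0$; moreover $\ell_k^{(a)}$ has $k$-th coordinate $1$ (the entry $L_{kk}=1$ carries no $\widetilde{D}_k$), so $M_A\ne0$ and therefore $A>0$. The conditional density is thus proportional to $\widetilde{D}_k^{\alpha_k}\exp\!\bigl(-\tfrac12(A\,\widetilde{D}_k+C\,\widetilde{D}_k^{-1})\bigr)$, which is a Generalized Inverse Gaussian with parameters $(\alpha_k+1,\,C,\,A)$ when $C>0$ and a $\mathrm{Gamma}(\alpha_k+1,\,A/2)$ when $C=0$; in both cases it is proper since $\alpha_k>0$ and $A>0$.

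The only genuinely nontrivial input is algebraic: that a Generalized Bartlett ordering forces the linear-dependence Properties~A and~B (Lemma~\ref{genbarimpliesab}) and the associated $\{-1,0,1\}$ power bound (Lemma~\ref{lem4.0}). Granting these, the rest reduces to the elementary facts that $tr(MU)\ge0$ whenever $M\succeq0$ and $U\succ0$, with equality iff $M=0$, and to spotting the single nonzero rank-one summand ($D_j\,\ell_j^{(1)}(\ell_j^{(1)})^{\top}$ in one case, $c_k\,\ell_k^{(a)}(\ell_k^{(a)})^{\top}$ in the other). I expect the main place where care is needed is the bookkeeping of exactly which columns of $L$, and which of their entries, actually involve the coordinate being updated, so that the affine decompositions and the matrices $M$, $M_A$, $M_C$ are described correctly; the positivity of $a$ and of $A$ then follows immediately from the presence of the updated free entry (respectively the unit diagonal entry $L_{kk}$) in the relevant column.
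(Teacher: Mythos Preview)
Your proof is correct and follows essentially the same route as the paper: invoke Lemma~\ref{genbarimpliesab} to obtain Properties~A and~B, then read the two conditionals directly off the density~\eqref{posterior1}. Your packaging via the column decomposition $\ell_l=\ell_l^{(0)}+L_{ij}\,\ell_l^{(1)}$ (respectively $\ell_l=\ell_l^{(a)}+\widetilde{D}_k^{-1}\ell_l^{(b)}$) and the positive-semidefinite trace argument is more explicit than the paper's, which simply asserts that the conditional of $L_{ij}$ is proportional to $\exp(-a_{ij}(L_{ij}-b_{ij})^2)$ ``for appropriate constants $a_{ij}$ and $b_{ij}$'' without verifying $a_{ij}>0$; your observation that $\ell_j^{(1)}$ has $i$-th coordinate~$1$ (and $\ell_k^{(a)}$ has $k$-th coordinate~$1$) supplies exactly this missing properness check. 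One further difference worth noting: the paper's proof opens with a self-contained argument that $L_{ij}$ is functionally nonzero if and only if $(\sigma^{-1}(i),\sigma^{-1}(j))\in D^\sigma(E)$; this characterization is not actually needed to derive the conditionals and you rightly omit it, but it is used later in the proof of Theorem~\ref{maximality}.
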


\begin{proof}
Note that for $i > j$, $L_{ij}$ can be expressed as a polynomial in the entries of $L_I$ and $\widetilde{D}^{-1}$. Recall 
that $D^\sigma (G) = (V, D^\sigma (E))$ is the decomposable cover obtained by the triangulation process described in 
Algorithm \ref{triangulation}. We first establish that for $i>j$, $L_{ij}$ is functionally non-zero iff $(\sigma^{-1}(i),
\sigma^{-1}(j)) \in D^{\sigma}(E)$. We begin by noticing that at each step in the construction of $G_1,G_2,\ldots,G_{p-1}$ we are adding some extra edges to $G_{i-1}$ to get $G_i$, but never deducting anything. So if $L_{ij}$ was an independent entry, i.e. $(\sigma^{-1}(i),\sigma^{-1}(j)) \in E_0=E$, then $(\sigma^{-1}(i),\sigma^{-1}(j)) \in E_{p-2}=D^{\sigma}(E)$.

Now lets assume to the contrary that $L_{ij}$ is the first dependent but functionally non-zero entry s.t. $(\sigma^{-1}(i),\sigma^{-1}(j)) \notin D^{\sigma}(E)$. Since $L_{ij}$ is dependent but non-zero $\exists$ $k<j$ s.t. $L_{ik}L_{jk}\frac{D_k}{D_j} \neq 0$ appears in the expansion of $L_{ij}$. Now $L_{ik}$ can be independent and hence $(\sigma^{-1}(i),\sigma^{-1}(k)) \in E\subset E_k$. Otherwise $L_{ik}$ is non-zero dependent. Since $L_{ij}$(the first non-zero dependent not in $D^{\sigma}(E)$) comes after $L_{ik}$, $(\sigma^{-1}(i),\sigma^{-1}(k)) \in D^{\sigma}(E)$. We recall that, for any $l$, while constructing $G_l$, we only join vertices higher than $l$. Thus $(\sigma^{-1}(i),\sigma^{-1}(k))$ must have been joined before construction of $G_k$, i.e. $(\sigma^{-1}(i),\sigma^{-1}(k)) \in E_{k-1}$.
By a similar argument $(\sigma^{-1}(j),\sigma^{-1}(k)) \in E_{k-1}$. Thus $(\sigma^{-1}(i),\sigma^{-1}(j)) \in E_k \subset D^{\sigma}(E)$ and we have a contradiction. Hence,
\[L_{ij} \neq 0 \implies (\sigma^{-1}(i),\sigma^{-1}(j)) \in D^{\sigma}(E)\]
\noindent
To prove the reverse implication we note that for $r>s$, $(\sigma^{-1}(r),\sigma^{-1}(s)) \in E=E_0$ implies that $L_{rs}$ is independent and hence $\neq 0$. Now we use induction and assume that the claim holds upto $E_{i-1}$. If for $r>s>i$, $(\sigma^{-1}(r),\sigma^{-1}(s)) \in E_i\setminus E_{i-1}$ then  $(\sigma^{-1}(r),\sigma^{-1}(s)) \notin E$ and hence $L_{ri}L_{si}\frac{D_i}{D_s}$ appears in the expansion of $L_{rs}$. Since $(\sigma^{-1}(r),\sigma^{-1}(s)) \in E_i\setminus E_{i-1}$, $(\sigma^{-1}(r),\sigma^{-1}(i)),(\sigma^{-1}(s),\sigma^{-1}(i)) \in E_{i-1}$ and by assumption $L_{ri} \neq 0$ and $L_{si} \neq 0$ which with the help of Lemma \ref{nocutting} implies $L_{rs}\neq 0$. Thus the assumption holds for $E_i$, which completes the induction step.

It follows by (\ref{posterior1}) and Property-A that for every $i>j,(i,j) \in E$, the conditional posterior density of $L_{ij}$ given all other entries of $L_I$ and $\widetilde{D}$ is proportional to
\[\exp{(-a_{ij}(L_{ij}-b_{ij})^2)},\]

\noindent
for appropriate constants $a_{ij}$ and $b_{ij}$. Hence the conditional posterior density of $L_{ij}$ is a Gaussian density. Similarly, it follows from (\ref{posterior1}) and Property-B that for every $1 \leq k \leq p$, the 
conditional posterior density of $\widetilde{D}_{k}$ given all entries of $L_I$ and $\{\widetilde{D}_{k'}\}_{k' \neq 
k}$ is proportional to 
\[\widetilde{D}_{k}^{\alpha_k} \exp{\left(-\widetilde{a}_k \widetilde{D}_{k}-\frac{\widetilde{b}_k}{\widetilde{D}_{k}}\right)}\]

\noindent
for appropriate constants $\widetilde{a}_k$ and $\widetilde{b}_k$. Hence the conditional posterior density of $
\widetilde{D}_1$ is Gamma, and for $k\geq 2$, the conditional posterior density of $\widetilde{D}_k$ is a Generalized 
Inverse Gaussian density. 
\end{proof}

\noindent
The results in Theorem \ref{megatheorem} can be used to construct a Gibbs sampling algorithm, where the iterations involve sequentially sampling from the conditional densities of each element of $(L_I,D)$. It is well known that the joint posterior density of $(L_I,\widetilde{D})$ is invariant for the Gibbs transition density. Since the Gaussian 
density is supported on the entire real line, and the Generalized Inverse Gaussian density is supported on the entire positive real line, it follows that the Markov transition density of the Gibbs sampler is strictly positive. Hence, the corresponding Markov chain is aperiodic and $\lambda$-irreducible where $\lambda$ is the Lebesgue measure on $\mathbb{R}^{|L_I|}\times \mathbb{R}_+^p$(\cite{meyntweedie}, Pg 87). Also, the existence of an invariant probability density together with $\lambda$-irreducibility imply that the chain is positive Harris recurrent (see \cite{asmussenglynn} for instance). We formalize the convergence of our Gibbs sampler below. The following lemma on the convergence of the Gibbs sampling Markov chain facilitates computation of expected values for generalized $G$-Wishart distributions.
\begin{lem}
Let $G = (V,E)$ be a Generalized Bartlett graph, and $\sigma$ be a Generalized Bartlett ordering for $G$. Then, the 
Markov chain corresponding to the Gibbs sampling algorithm in Theorem \ref{megatheorem} is positive Harris recurrent.
\end{lem}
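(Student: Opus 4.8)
The plan is to verify the three standard ingredients that together yield positive Harris recurrence for the systematic-scan Gibbs chain on the state space $\mathcal{X} := \mathbb{R}^{|L_I|} \times \mathbb{R}_+^p$: (i) the posterior density in (\ref{posterior1}) is a proper density and is invariant for the Gibbs transition kernel; (ii) the transition kernel of one full sweep possesses a density with respect to Lebesgue measure $\lambda$ on $\mathcal{X}$ that is strictly positive everywhere, which forces $\lambda$-irreducibility and aperiodicity; and (iii) a $\lambda$-irreducible, aperiodic chain that admits an invariant probability measure is automatically positive Harris recurrent. Step (i) is immediate: Theorem \ref{thm1} guarantees that $\pi_{U,\boldsymbol{\delta}}$ is proper when $U$ is positive definite and $\boldsymbol{\delta} > 0$, and invariance of the target under a systematic-scan Gibbs update is a textbook fact, since each coordinate update leaves the relevant full conditional --- and hence the joint --- invariant.

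The crux is step (ii). By Theorem \ref{megatheorem}, within a single sweep the update of each independent entry $L_{ij}$ draws from a univariate Gaussian, the update of $\widetilde{D}_1$ draws from a Gamma, and the update of $\widetilde{D}_k$ for $k \ge 2$ draws from a Generalized Inverse Gaussian; moreover, using positive definiteness of $U$ and positivity of $\boldsymbol{\delta}$, the parameters of these conditionals are finite, so the conditionals are genuine non-degenerate distributions. Each of these three families has a density that is strictly positive on its entire support --- all of $\mathbb{R}$ for the Gaussian, all of $\mathbb{R}_+$ for the Gamma and the GIG. Writing the one-sweep transition density as the product of the successive full-conditional densities evaluated along the sweep, this product is therefore strictly positive at every point of $\mathcal{X}$, for every starting point. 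Consequently, for any $x \in \mathcal{X}$ and any Borel set $A$ with $\lambda(A) > 0$ we have $P(x, A) > 0$; this is precisely $\lambda$-irreducibility, with $\lambda$ a maximal irreducibility measure, and since $P(x, \cdot)$ is equivalent to $\lambda$ for every $x$, there can be no cyclic class, so the chain is aperiodic (cf.\ \cite{meyntweedie}, p.\ 87).

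Finally, for step (iii) I would invoke the standard result that a $\lambda$-irreducible, aperiodic Markov chain possessing an invariant probability distribution is positive recurrent, and --- because here the chain has everywhere-positive transition densities with respect to the $\sigma$-finite measure $\lambda$ --- it is in fact positive Harris recurrent (see \cite{asmussenglynn}). Combining steps (i)--(iii) gives the claim. I expect the only delicate point to be purely bookkeeping: making sure the reference measure on the mixed space $\mathbb{R}^{|L_I|} \times \mathbb{R}_+^p$ is handled consistently (Lebesgue measure on the Euclidean factors times Lebesgue measure on the positive-orthant factors), and checking that strict positivity of the transition density upgrades recurrence to Harris recurrence; both are routine once the full-support property of the conditionals furnished by Theorem \ref{megatheorem} is in hand, so there is no substantive obstacle.
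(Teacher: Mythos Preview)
Your proposal is correct and follows essentially the same route as the paper: strict positivity of the Gaussian and GIG (and Gamma) full conditionals on their supports yields a strictly positive one-sweep transition density, hence $\lambda$-irreducibility and aperiodicity via \cite{meyntweedie}, and then the existence of the invariant probability density from Theorem~\ref{thm1} gives positive Harris recurrence via \cite{asmussenglynn}. Your write-up is slightly more detailed (explicitly checking non-degeneracy of the conditionals and the reference measure on the mixed state space), but the argument is the same.
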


\subsection{Maximality of Generalized Bartlett graphs}

\noindent
Note that the Gibbs sampling algorithm described in Theorem \ref{megatheorem} is feasible only if Property-A and Property-B hold. The following theorem shows that if a graph is not Generalized Bartlett, then at least one of Property-A and Property-B does not hold.
\begin{thm} \label{maximality}
If an ordered graph $G_\sigma$ satisfies Property-A and Property-B, then the graph $G$ is a Generalized Bartlett graph and $\sigma$ is a Generalized Bartlett ordering for $G$. 
\end{thm}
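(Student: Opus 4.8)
The plan is to prove the contrapositive: if $G$ is not a Generalized Bartlett graph, or $\sigma$ is not a Generalized Bartlett ordering for $G$, then $G_\sigma$ fails Property-A or Property-B. So I would fix an ordering $\sigma$ and assume that $G_\sigma$ violates the Generalized Bartlett property, i.e., there exist vertices $u, v, w$ with $(u,v), (v,w), (u,w) \notin E$ but $(u,v), (v,w), (u,w) \in D^\sigma(E)$. Writing the indices of $u, v, w$ under $\sigma$ and relabelling, I may assume these correspond to indices $a > b > c$ with $(a,b), (b,c), (a,c) \notin E_\sigma$ but all three lie in $D^\sigma(E)$. The goal is to show that one of the dependent Cholesky entries $L_{ab}$, $L_{bc}$, $L_{ac}$ has a monomial in its polynomial expansion \eqref{term1} in which some $L_{rs}$ (with $(r,s)\in E_\sigma$) appears to a power $\geq 2$, or some $\widetilde{D}_k$ appears to a power $\leq -2$ — thereby breaking Property-A or Property-B.

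The key technical input is the analysis already carried out in the proof of Theorem \ref{megatheorem}: for $i > j$, $L_{ij}$ is functionally non-zero if and only if $(\sigma^{-1}(i), \sigma^{-1}(j)) \in D^\sigma(E)$, and the recursive substitution $L_{ij} = -\sum_{k<j} L_{ik} L_{jk}\prod_{l=k+1}^{j}\widetilde{D}_l^{-1}$ generates the monomials. First I would track how a ``fill-in triangle'' $(a,b,c)$ is created by Algorithm \ref{triangulation}: the edge $(a,b)$ (say the last of the three to be added, at elimination step $m = \min(a,b,c)$ after relabelling appropriately, or more carefully, the step at which the triangle closes) is added because both $(a, \sigma^{-1}(m))$ and $(b, \sigma^{-1}(m))$ were already present. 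I would then chase the expansion of $L_{ab}$ through the term $L_{a m} L_{b m}\prod \widetilde D_l^{-1}$, and use Lemma \ref{nocutting} to guarantee that no cancellation destroys the offending monomial. The heart of the argument is combinatorial: because $(a,c)$ and $(b,c)$ are both missing from $E_\sigma$ but present in $D^\sigma(E)$, the vertex $c$ (with smallest index among $a,b,c$) is itself a fill-in neighbour reached via further recursion, and substituting its expansion back in forces a repeated factor — either the same independent edge $L_{rs}$ reappearing through two distinct recursion branches, or the same $\widetilde{D}_k^{-1}$ appearing from nested substitutions over an interval of indices. I would make this precise by induction on the number of elimination steps separating the three vertices, showing that a ``closed'' non-edge triangle in $D^\sigma(E)$ is exactly the minimal obstruction that produces a degree-$\geq 2$ monomial.

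For the other direction of the theorem statement — concluding that $\sigma$ is in fact a Generalized Bartlett ordering, not merely that \emph{some} ordering works — I note that the hypothesis is that $G_\sigma$ (with this particular $\sigma$) satisfies Properties A and B, and the contrapositive above shows directly that this same $\sigma$ then has no closed non-edge triangle in $D^\sigma(E)$, which is precisely the definition of $\sigma$ being a Generalized Bartlett ordering; Lemma \ref{lem2} then upgrades this to $G$ being a Generalized Bartlett graph. The main obstacle I anticipate is the bookkeeping in the combinatorial core: carefully setting up the induction on how fill-in edges are nested, identifying \emph{which} of the three dependent entries to expand, and invoking Lemma \ref{nocutting} in enough generality to rule out that the bad monomial cancels against another. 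A secondary subtlety is handling the interplay between Property-A violations (repeated $L_{rs}$) and Property-B violations (repeated $\widetilde{D}_k^{-1}$) — the cleanest route is probably to argue that a closed non-edge triangle forces \emph{at least one} of the two, possibly splitting into cases according to whether the innermost fill-in vertex $c$ contributes its repeated structure through an $L$-factor or a $\widetilde D$-factor.
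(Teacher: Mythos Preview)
Your contrapositive setup matches the paper's exactly, and your instinct to use Lemma~\ref{nocutting} for non-cancellation is correct. However, you are dramatically overcomplicating the core step. The paper's proof is five lines and requires no induction, no case split between Property-A and Property-B, and no tracking of \emph{which} elimination step first closed the triangle. The observation you are missing is already recorded in \eqref{powerexpan}: whenever $(i,k)\notin E_\sigma$ and $L_{ik}$ is not functionally zero, \emph{every} monomial in the expansion of $L_{ik}$ carries $\widetilde D_k$ to the power exactly $-1$ (this is the ``$d_j=-1$'' statement there, applied with $j=k$). So with your indices $a>b>c$, both $L_{ac}$ and $L_{bc}$ are dependent and nonzero, and every monomial in the product $L_{ac}L_{bc}$ carries $\widetilde D_c^{-2}$; the factor $D_c/D_b=\widetilde D_{c+1}^{-1}\cdots\widetilde D_b^{-1}$ contributes no $\widetilde D_c$. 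Hence every monomial of $L_{ac}L_{bc}D_c/D_b$ in the expansion of $L_{ab}$ already has $\widetilde D_c^{-2}$, and Lemma~\ref{nocutting} forbids cancellation against the other summands $L_{ak'}L_{bk'}D_{k'}/D_b$. Property-B fails immediately.

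In particular, the paper never needs to exhibit a Property-A violation, never needs to recurse into how $(a,c)$ or $(b,c)$ were themselves filled in, and never needs to choose among the three dependent entries: you always expand $L_{ab}$ (the one with largest column index) and look at the $k=c$ summand. Your proposed induction on elimination steps and your worry about ``which dependent entry to expand'' are unnecessary; drop them and the argument collapses to the paper's short computation. Your final paragraph on why the conclusion is about this specific $\sigma$ (not just some ordering) is fine and matches the paper's logic.
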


\begin{proof}
Suppose there exists $i>j>k$ s.t.$(i,j),(i,k),(j,k) \notin E$ but $(i,j),(i,k),(j,k) \in D^{\sigma}(E)$ i.e. $L_{ij}\neq0,L_{ik}\neq0,L_{jk}\neq0$. Hence $L_{ik}L_{jk}\frac{D_k}{D_j}$  is in the expansion of $L_{ij}$. The power of $\widetilde{D}_k$ in the expansion of $L_{ik}$ and $L_{jk}$ is $-1$. $\frac{D_k}{D_j} = \widetilde{D}_{k+1}^{-1} \ldots \widetilde{D}_{j}^{-1}$. Also we know that, no term of $L_{ik}L_{jk}\frac{D_k}{D_j}$ can cancel with any term of $L_{ik'}L_{jk'}\frac{D_k'}{D_j}$. Hence in the expansion of $L_{ij}$ the power of $\widetilde{D}_{k}$ is $-2$. Thus Property-B is violated. The result now follows by Definition \ref{genbarimpliesab}.\\
\end{proof}
\noindent
Theorem \ref{maximality} demonstrates that the class of Generalized Bartlett graphs is maximal, in the sense that the conditional distributions considered in Theorem \ref{megatheorem} are Gaussian/Generalized-Inverse-Gaussian only if the underlying graph is Generalized Bartlett. In other words the above tractability is lost for graphs outside the Generalized Bartlett class.

\subsection{Improving efficiency using decomposable subgraphs}

\noindent
It is generally expected that `blocking' or `grouping' improves the speed of convergence of Gibbs samplers (see \cite{liu:wong:kong:1994}). Suppose $\Omega = LDL^T$ follows a generalized $G$-Wishart distribution. In this section, we will show that under appropriate conditions, the conditional density of a block of variables in $(L_I, D)$ (given the other variables) is multivariate normal. Based on the discussion above, this result can be used to sample more efficiently from the joint density of $(L_I, D)$. 
\begin{lem}
Let $G_\sigma=(V,E_\sigma)$ be a Generalized Bartlett graph with $p$ vertices and $\Omega(=LDL^T)$ 
follows generalized $G$-Wishart with parameters $(U,\boldsymbol{\delta})$. Suppose that for some $1<p_1<p$, the induced subgraph of $G_\sigma$ corresponding to the vertices $\{p_1+1,\ldots,p\}$ is decomposable with a perfect elimination ordering. Then $\{L_{ij}|p_1<j<i \leq p,(i,j) \in E_\sigma\}|(L_I \setminus \{L_{ij}|p_1<j<i \leq p,(i,j) \in E_\sigma\},D)$ follows a multivariate normal distribution. 
\end{lem}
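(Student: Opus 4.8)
The plan is to condition on everything except the block, reduce $\mathrm{tr}(\Omega U)$ to a quadratic form via a Schur-complement decomposition, and read off the Gaussian. First I would write the generalized $G$-Wishart density on $(L_I,D)$ from \eqref{defpistar} as $\bigl(\prod_j D_j^{(\delta_j+2\nu_j)/2}\bigr)\exp(-\tfrac12\mathrm{tr}(\Omega U))$ and set $B=\{L_{ij}:p_1<j<i\le p,\ (i,j)\in E_\sigma\}$. Since $D$ and $L_I\setminus B$ are held fixed, the $\prod_j D_j$-prefactor is constant, so the conditional density of $B$ is proportional to $\exp(-\tfrac12\mathrm{tr}(\Omega U))$ viewed as a function of $B$, and it suffices to show this trace is a constant plus a positive-definite quadratic form in the entries of $B$. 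Next I would partition $V=\beta\cup\tau$ with $\beta=\{1,\dots,p_1\}$, $\tau=\{p_1+1,\dots,p\}$, and exploit the block-lower-triangular form $L=\bigl(\begin{smallmatrix}L_{\beta\beta}&0\\ L_{\tau\beta}&L_{\tau\tau}\end{smallmatrix}\bigr)$, $D=\mathrm{diag}(D_\beta,D_\tau)$. The modified-Cholesky recursion expresses each $L_{ij}$ through entries of column index $<j$ only, so by induction on the column index every entry of $L_{\beta\beta}$ and $L_{\tau\beta}$ is a function of $D$ and of $\{L_{rs}\in L_I:s\le p_1\}\subseteq L_I\setminus B$ alone — hence fixed. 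Therefore $\Omega_{\beta\beta}=L_{\beta\beta}D_\beta L_{\beta\beta}^T$, $\Omega_{\tau\beta}=L_{\tau\beta}D_\beta L_{\beta\beta}^T$ and $F:=L_{\tau\beta}D_\beta L_{\tau\beta}^T$ are constant; writing $\mathrm{tr}(\Omega U)=\mathrm{tr}(\Omega_{\beta\beta}U_{\beta\beta})+2\mathrm{tr}(\Omega_{\tau\beta}U_{\beta\tau})+\mathrm{tr}(\Omega_{\tau\tau}U_{\tau\tau})$ and $\Omega_{\tau\tau}=F+L_{\tau\tau}D_\tau L_{\tau\tau}^T$, all constant terms drop out and $\mathrm{tr}(\Omega U)=(\text{const})+\mathrm{tr}(L_{\tau\tau}D_\tau L_{\tau\tau}^T U_{\tau\tau})$.

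The decomposability hypothesis enters at the next step. The entries of $L_{\tau\tau}$ are the bottom-right block of $L$, so their functional-zero pattern is $D^\sigma(E)\cap(\tau\times\tau)$; since the induced subgraph $G_\sigma[\tau]$ is decomposable with $\sigma|_\tau$ a perfect elimination ordering, the result of Paulsen et al.\ (Section~\ref{matrixtheory}) applied on $\tau$ says the Cholesky sparsity coincides with $E_\sigma\cap(\tau\times\tau)$, i.e.\ there are no fill-in entries inside $\tau$. Consequently the off-diagonal nonzero entries of $L_{\tau\tau}$ are exactly the coordinates of $B$ (the rest functionally zero, the diagonal $1$), so $L_{\tau\tau}$ depends linearly on $B$; every entry of $L_{\tau\tau}D_\tau L_{\tau\tau}^T$ is $\sum_k L_{ak}L_{bk}D_k$ with each factor in $B\cup\{0,1\}$ and each $D_k$ fixed, whence $\mathrm{tr}(L_{\tau\tau}D_\tau L_{\tau\tau}^T U_{\tau\tau})$ is a polynomial of degree at most two in $B$. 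For non-degeneracy I would write $\mathbf{l}_k(B)$ for the $k$-th column of $L_{\tau\tau}$ restricted to its off-diagonal ($B$-)entries; the degree-two part of the exponent is $\sum_k D_k\,\mathbf{l}_k(B)^{T}U_{\tau\tau}\mathbf{l}_k(B)$, which is positive definite because $U_{\tau\tau}\succ0$ (a principal submatrix of $U\succ0$), each $D_k>0$, and $B\mapsto(\mathbf{l}_k(B))_k$ is linear and injective. Hence the conditional density of $B$ is $\exp$ of $-\tfrac12$ times a positive-definite quadratic plus affine terms, which is a non-degenerate multivariate normal density.

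The hard part is the step invoking the decomposable tail: one has to be sure the hypothesis really forces $D^\sigma(E)\cap(\tau\times\tau)=E_\sigma\cap(\tau\times\tau)$ (no fill-in inside $\tau$) and that the appropriate form of Paulsen's theorem governs the sub-block $L_{\tau\tau}$ even though the Schur complement $\Omega/\Omega_{\beta\beta}=L_{\tau\tau}D_\tau L_{\tau\tau}^T$ it factors need not lie in $\mathbb{P}_{G_\sigma[\tau]}$. This is essential, not cosmetic: a fill-in entry inside $\tau$ can fail to be affine in the $B$-variables (e.g.\ it can be a product of two of them, a dependence propagated through the eliminated $\beta$-vertices), which would make $\mathrm{tr}(\Omega_{\tau\tau}U_{\tau\tau})$ of degree greater than two in $B$ and destroy the normal conclusion. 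Everything else is routine bookkeeping once the no-fill-in property on $\tau$ is secured.
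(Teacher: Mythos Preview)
Your overall architecture---partition $L$ into blocks, expand $\mathrm{tr}(LDL^TU)$, and isolate $\mathrm{tr}(L_{\tau\tau}D_\tau L_{\tau\tau}^TU_{\tau\tau})$ as the only piece depending on $B$---is exactly the paper's. The divergence, and the genuine gap, is at the step you yourself flag as the hard part.

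Your claim that decomposability of $G_\sigma[\tau]$ with the induced order a perfect elimination ordering forces $D^\sigma(E)\cap(\tau\times\tau)=E_\sigma\cap(\tau\times\tau)$ (no fill-in inside $\tau$) is false, and Paulsen's theorem does not give it. Fill-in in the block $L_{\tau\tau}$ is governed by the whole graph $G_\sigma$, not by $G_\sigma[\tau]$: it can be created by eliminating vertices of $\beta$. Take $p=4$, $p_1=1$, $E_\sigma=\{(2,1),(3,1),(4,2)\}$. Then $G_\sigma[\tau]$ has the single edge $(4,2)$ and the induced order $2<3<4$ is a perfect elimination ordering, yet $L_{32}=-L_{31}L_{21}D_1/D_2$ and $L_{43}=-L_{42}L_{32}D_2/D_3$ are both nonzero fill-in entries sitting inside $\tau$. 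The point you yourself raise---that $L_{\tau\tau}D_\tau L_{\tau\tau}^T$ is the Schur complement $\Omega/\Omega_{\beta\beta}$, not a matrix in $\mathbb{P}_{G_\sigma[\tau]}$---is precisely why Paulsen's result does not apply here.

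The paper does \emph{not} try to show that these entries vanish. Its route is to argue they are \emph{constant} under the conditioning: for $i>i'>p_1$ with $(i,i')\notin E_\sigma$, the perfect elimination property on $\tau$ forbids any $p_1<j<i'$ with both $(i,j)$ and $(i',j)$ in $E_\sigma$, and from this the paper concludes that the dependent entry $L_{ii'}=-\sum_{j<i'}L_{ij}L_{i'j}D_j/D_{i'}$ is a function of $(L_1,L_2,D)$ alone. Once every entry of $L_3$ is either an element of $B$ or a fixed scalar, $L_3$ is affine in $B$ and $\mathrm{tr}(L_3D_2L_3^TU_3)$ is quadratic in $B$, yielding the multivariate normal conditional. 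So the target statement is ``fill-in entries inside $\tau$ do not depend on $B$,'' not ``fill-in entries inside $\tau$ are zero''; your attempted invocation of Paulsen would need to be replaced by that argument.
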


\begin{proof}
We partition the matrix $L$ as
\[L = \left(\begin{array}{cc}
L_1 & 0\\
L_2 & L_3\\
\end{array} \right)\]

\noindent
where $L_1$ has dimension $p_1 \times p_1$ and correspondingly, 
\[U = \left(\begin{array}{cc}
U_1 & U_2^T\\
U_2 & U_3\\
\end{array} \right), \, D = \left(\begin{array}{cc}
D_1 & 0\\
0 & D_2\\
\end{array} \right).\]

\noindent
Note that the density of $(L_I,D)$ is proportional to 
\[\prod_{i=1}^p D_{ii}^{\nu_i+\delta_i/2} \exp \left(-\frac{1}{2} tr(LDL^T U) \right). \]

\noindent
A sample calculation gives:
\begin{equation} \label{dcmtrgraph}
tr(LDL^T U) = tr(L_1D_1L_1^T U_1)+2 tr(L_1D_1L_2^T U_2) +  tr(L_2D_1L_2^T U_3) + tr(D_2 L_3^T U_3 L_3). 
\end{equation}

\noindent
Consider $i > i' > p_1$ such that $(i,i') \notin E_\sigma$. Since the induced subgraph on 
$\{p_1 + 1, \cdots, p\}$ is a decomposable graph with a perfect elimination ordering, 
there does not exist $p_1 < j < i'$ such that $(i,j), (i',j) \in E_\sigma$. It follows 
that $L_{ii'} = -\sum_{j=1}^{i'-1} L_{ij} L_{i'j} D_{j}/D_{i'}$ is a function of entries 
in $(L_1, L_2, D)$. Hence, all the dependent entries in $L_3$ are functions of 
$(L_1, L_2, D)$. It follows by (\ref{dcmtrgraph}) that given $(L_1, L_2, D)$, $tr \left( 
LDL^T U \right)$ is a quadratic form in the independent entries of $L_3$. Hence, the log 
of the conditional density of $\{L_{ij}|p_1<j<i \leq p, (i,j) \in E_\sigma\}$ given the other entries in $(L_I, D)$ is a quadratic form. This proves the required result.
\end{proof}


\subsection{Closed form expressions for decomposable graphs}
\label{decomposableclosedform}
A closed form expression for the mean of $\Omega$ can be obtained if $G$ is assumed to be a decomposable graph.
For $\{\delta_i>0|i=1,\ldots,p\}$ and $U$ positive definite, the generalized $G$-Wishart density on $\mathbb{P}_{G_\sigma}$ is,
\[\pi_{U,\boldsymbol{\delta}}(\Omega) \propto \prod_{j=1}^p D_j^{\frac{\delta_j}{2}} \exp \left(-\frac{1}{2} tr(\Omega U) \right) \mbox{ where } \Omega \in \mathbb{P}_{G_\sigma}\]

Let us define,

\begin{table}[h]
\begin{tabular}{ll}
$\mathcal{N}^{\succ j} := \{i:i>j,(i,j)\in E_\sigma\}$ & \hspace{1cm} $\nu_j := |\mathcal{N}^{\succ j}|$\\
$\boldsymbol{U^{\succ}_{.j}}:=\{U_{ij}|i \in \mathcal{N}^{\succ j}\}$ &\hspace{1cm} $U^{\succ j} := \{U_{ii'}|i,i' \in \mathcal{N}^{\succ j}\}$ \\
$\boldsymbol{e_j} := -(U^{\succ j})^{-1} \bm{U^{\succ}_{.j}}$ &\hspace{1cm} $c_j := U_{jj}-\bm{U^{\succ T}_{.j}} (U^{\succ j})^{-1} \bm{U^{\succ}_{.j}}>0$ \\
\end{tabular}
\end{table}

Also let $\mathcal{H}$ be the diagonal matrix with $(k,k)$-th element as $\frac{\delta_k + \nu_k +2}{c_k}$ and $e$ is a $p \times p$ matrix whose $(k,j)$-th element is $e_{k,j}$ if $j \in \mathcal{N}^{\succ k}$, is $1$ if $j=k$ and $0$ otherwise. The following theorem provides closed form expectations of the elements of the matrix $\Omega$.

\begin{thm}
\label{thmdecomposableclosedform}
If $G_\sigma$ is decomposable where the vertices have been ordered by an perfect elimination ordering, and $\Omega=LDL^T \in \mathbb{P}_{G_\sigma}$ is generalized $G$-Wishart with parameters $(U,\boldsymbol{\delta})$, then
\begin{equation}
L_{I_j}|D_j \sim  N\left( \boldsymbol{e_j},\frac{(U^{\succ j})^{-1}}{D_j} \right) \, \mbox{ and } \,
D_j \sim Gamma \left( \frac{\nu_j+\delta_j}{2}+1,\frac{c_j}{2} \right)
\end{equation}

where $L_{I_j}$ are the independent entries of the $j$-th column of $L$ and $D=diag(D_1,D_2,\ldots,D_p)$.
Also,
\[ E(\Omega) = \sum_{k<p}[(U^{\succ k})^{-1}]^0 +  e^T \mathcal{H} e \]
\end{thm}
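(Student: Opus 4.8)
The plan is to exploit the absence of fill-in for decomposable graphs. Since $G_\sigma$ is decomposable and $\sigma$ is a perfect elimination ordering, the result of \cite{paulsenetal1989} quoted in Section \ref{matrixtheory} gives $L \in \mathcal{L}_{G_\sigma}$, so the only possibly nonzero entries of the $k$-th column $L_{\cdot k}$ of $L$ are $L_{kk} = 1$ together with the independent entries $L_{I_k} = \{L_{ik} : i \in \mathcal{N}^{\succ k}\}$. The first step is to reorganize the exponent of the density by columns of $L$: $tr(\Omega U) = tr(LDL^T U) = \sum_{k=1}^p D_k\, L_{\cdot k}^T U L_{\cdot k}$, and, restricting to the nonzero pattern of $L_{\cdot k}$, $L_{\cdot k}^T U L_{\cdot k} = U_{kk} + 2\, L_{I_k}^T \boldsymbol{U^{\succ}_{.k}} + L_{I_k}^T U^{\succ k} L_{I_k}$. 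The key point is that this expression involves no column of $L$ other than the $k$-th, so the full exponent decouples across $k$.

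Next I would complete the square in $L_{I_k}$. Positive-definiteness of $U$ guarantees that $U^{\succ k}$ is invertible and that $c_k > 0$ (being a Schur complement within $U$), and a direct computation yields $L_{\cdot k}^T U L_{\cdot k} = (L_{I_k} - \boldsymbol{e_k})^T U^{\succ k} (L_{I_k} - \boldsymbol{e_k}) + c_k$ with $\boldsymbol{e_k}$ and $c_k$ exactly as defined preceding the theorem. Plugging this into the density of $(L_I,D)$, which by \eqref{defpistar} is proportional to $\prod_j D_j^{\nu_j + \delta_j/2}\exp(-\tfrac12 tr(\Omega U))$, displays the joint density as a product over $k$ of $D_k^{\nu_k + \delta_k/2}\exp(-\tfrac{D_k}{2}(L_{I_k}-\boldsymbol{e_k})^T U^{\succ k}(L_{I_k}-\boldsymbol{e_k}))\exp(-\tfrac{D_k c_k}{2})$. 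From this factorization one reads off that the $p$ columns are mutually independent, that $L_{I_k}\mid D_k \sim N(\boldsymbol{e_k},(U^{\succ k})^{-1}/D_k)$, and, after integrating out $L_{I_k}$ (whose Gaussian normalizing constant contributes a factor $D_k^{-\nu_k/2}$), that $D_k \sim Gamma((\nu_k+\delta_k)/2+1, c_k/2)$. Since each shape parameter $(\nu_k+\delta_k)/2+1$ is positive, no analogue of the $\delta_k > 4$ condition of Theorem \ref{thm2} is needed.

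For the formula for $E(\Omega)$, I would write $\Omega = LDL^T = \sum_{k=1}^p D_k\, L_{\cdot k}L_{\cdot k}^T$ and condition on $D_k$. Since $L_{I_k}\mid D_k$ is Gaussian, $E[L_{\cdot k}L_{\cdot k}^T \mid D_k] = \mu_k\mu_k^T + [(U^{\succ k})^{-1}]^0/D_k$, where $\mu_k$ is the transpose of the $k$-th row of the matrix $e$ (equivalently, the zero-padded mean vector of $L_{\cdot k}$) and $[(U^{\succ k})^{-1}]^0$ denotes $(U^{\succ k})^{-1}$ embedded into the $p\times p$ matrix on the rows and columns indexed by $\mathcal{N}^{\succ k}$. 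Taking the expectation over $D_k$ and using $E[D_k] = (\nu_k+\delta_k+2)/c_k = \mathcal{H}_{kk}$ gives $E[D_k L_{\cdot k}L_{\cdot k}^T] = \mathcal{H}_{kk}\,\mu_k\mu_k^T + [(U^{\succ k})^{-1}]^0$. Summing over $k$, the rank-one pieces assemble into $e^T\mathcal{H}e$ by the definitions of $e$ and $\mathcal{H}$, and the $k=p$ term of $\sum_k [(U^{\succ k})^{-1}]^0$ vanishes because $\mathcal{N}^{\succ p} = \emptyset$, giving $E(\Omega) = \sum_{k<p}[(U^{\succ k})^{-1}]^0 + e^T\mathcal{H}e$.

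The square completion and the Gaussian/Gamma bookkeeping are routine; the step to watch is the index bookkeeping in the last paragraph --- tracking which rows and columns of the various zero-padded $p\times p$ matrices correspond to $\mathcal{N}^{\succ k}$, and verifying that $\sum_k \mathcal{H}_{kk}\mu_k\mu_k^T$ genuinely reproduces $e^T\mathcal{H}e$ for the $e$ defined in the theorem (rather than its transpose). Finiteness of all these expectations is automatic here since every $D_k$ is Gamma with positive shape, but it also follows a fortiori from Theorem \ref{thm1}.
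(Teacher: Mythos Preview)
Your proposal is correct and matches the paper's own argument essentially step for step: the column-wise expansion $tr(LDL^TU)=\sum_k D_k L_{\cdot k}^T U L_{\cdot k}$, the square completion yielding $c_k$ and $\boldsymbol{e_k}$, the resulting factorization into independent Normal--Gamma blocks, and the computation of $E(\Omega)$ via iterated expectation. The only cosmetic difference is that the paper computes $E(L_{ik}L_{jk}D_k)$ entrywise (distinguishing the cases $i>j>k$, $i>j=k$, $i=j=k$) whereas you package the same calculation as $E[L_{\cdot k}L_{\cdot k}^T\mid D_k]=\mu_k\mu_k^T+[(U^{\succ k})^{-1}]^0/D_k$; your version is arguably cleaner and your bookkeeping concern about $e^T\mathcal{H}e$ versus its transpose resolves correctly given the paper's convention that the $k$-th \emph{row} of $e$ carries $\boldsymbol{e_k}$.
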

The proof is provided in the Supplemental Section \ref{proofthmdecomposableclosedform}.

\section{Classes of Generalized Bartlett Graphs}
\label{examplegenbar}
As mentioned earlier, the class of Generalized Bartlett graphs contains the class of decomposable graphs. In this section we will consider two naturally occurring examples of non-decomposable Generalized Bartlett graphs. We then provide schemes for combining a group of Generalized Bartlett  graphs to produce a bigger Generalized Bartlett graph. 

\subsection{The $p$-cycle} \label{cycleexmpl}

\noindent
We show that the $p$-cycle (with its standard ordering) satisfies Property-A and Property-B, and is hence  a Generalized Bartlett graph. Let $G_\sigma=(V,\sigma,E_{\sigma})$, where $V=N_p$, $\sigma$ is the identity permutation and 
$$E_{\sigma}=\{(i,j): 1 \leq i,j \leq p, |i-j| \in \{1,p-1\}\}.$$  

\noindent
The independent entries of $L$ are $L_{21},L_{32},\ldots,L_{p(p-1)},L_{p1}$. After some straightforward algebraic 
manipulations, the dependent entries of $L$ can be calculated as follows: 
\begin{eqnarray*}
L_{ij} &=& 0 \hspace{6cm} \mbox{ if } i \neq p,i>j+1 \\
&=&  (-1)^{j-1} L_{p1} \left(\prod_{k=2}^j L_{k(k-1)}\right) \frac{D_1}{D_j}  \hspace{1cm} \mbox{ if } i=p \mbox{ and } 2 \leq j \leq p-2\\
\end{eqnarray*}

\noindent
It is clear from the expressions in the above equation that Property-A and Property-B are satisfied and hence by Theorem 4, the $p$-cycle is Generalized Bartlett. 

\subsection{Grids} \label{gridsexmpl}

A $m \times n$ grid is an undirected graph formed by the intersection of $m$ rows and $n$ columns where the vertices correspond to the $p=mn$ intersection points and as a result $m*(n-1)+n*(m-1)$ edges are formed.  In this section we shall prove that for some particular ordering all $n \times 2$ and $n \times 3$ grid are Generalized Bartlett. We order an $n \times 3$ grid row wise starting from the top as shown in Figure \ref{3kby3}.

Let $G_\sigma = (V,E_\sigma)$ be an ordered graph, $\Omega \in \mathbb{P}_{G_\sigma}$, and $\Omega = LDL^T$ denote the modified Cholesky decomposition of $\Omega$. Note that Property-A and Property-B have been defined for ordered graphs, but we extend these notions to polynomials as follows. For any polynomial $p(L_I,\widetilde{D})$ of $(L_I,\widetilde{D})$, we say that $p()$ satisfies Property-A if the power of any independent $L_{ij}$ can be $\{0,1\}$. Similarly we say $p()$ satisfies Property-B if the power of any $\widetilde{D}_k$ can be $\{-1,0\}$. We note that if $j<i,i'$ and $L_{ij}$ satisfies Property-B then $L_{ij}\frac{D_j}{D_{i'}}$ also satisfies Property-B.

\begin{lem} \label{lemgrid}
An $n \times 3$ grid when ordered as above is Generalized Bartlett.
\end{lem}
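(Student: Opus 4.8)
\textbf{Proof proposal for Lemma \ref{lemgrid}.}

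The plan is to verify, by explicit inspection of the modified Cholesky entries, that the ordered $n \times 3$ grid satisfies both Property-A and Property-B; the conclusion that $G_\sigma$ is Generalized Bartlett then follows immediately from Theorem \ref{maximality}. Label the vertices $1,2,\dots,3n$ row by row from the top, so that vertex $3t+1, 3t+2, 3t+3$ form row $t+1$ (for $t = 0,1,\dots,n-1$), and the edge set $E_\sigma$ consists of the horizontal edges $(3t+1,3t+2), (3t+2,3t+3)$ within each row and the vertical edges $(3t+r, 3(t+1)+r)$ for $r = 1,2,3$ between consecutive rows. The independent entries $L_{ij}$ with $(i,j) \in E_\sigma$ and $i > j$ are precisely these; every other entry $L_{ij}$ with $i > j$ and $(i,j) \notin E_\sigma$ is functionally dependent and must be rewritten, via the recursion $L_{ij} = -\sum_{k=1}^{j-1} L_{ik} L_{jk} \prod_{l=k+1}^{j} \widetilde{D}_l^{-1}$ used in the proof of Lemma \ref{lem:reparametrization}, as a polynomial in $L_I$ and $\widetilde{D}^{-1}$. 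The goal is to show each such polynomial is multilinear in the independent $L$-entries (Property-A) and has every $\widetilde{D}_k$-power in $\{-1,0\}$ (Property-B).

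First I would set up an induction on $i$ (equivalently, process the vertices in elimination order $1,2,\dots,3n$) to obtain a closed-form description of the fill-in structure: I claim that for $i > j$ with $(i,j) \notin E_\sigma$, $L_{ij}$ is functionally non-zero only when $i$ and $j$ lie within two consecutive rows of the grid --- more precisely, only for $j$ in row $t$ and $i$ in row $t$ or $t+1$ --- and in that case $L_{ij}$ is a single monomial (up to sign) that is a product of distinct independent $L$-entries times a product of distinct $\widetilde{D}_k^{-1}$'s, paralleling the explicit formula obtained for the $p$-cycle in Section \ref{cycleexmpl}. Concretely, each fill-in edge in $D^\sigma(E)$ lives inside the $2 \times 3$ block formed by two adjacent rows (this is exactly the statement, established in Section \ref{gridsexmpl}, that the Generalized Bartlett cover of the grid adds only a few edges per pair of adjacent rows, cf.\ Figure \ref{4by4}); the $2 \times 3$ block has bounded width, so the elimination recursion never "folds back" on a variable, and no monomial can acquire a repeated factor. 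I would carry out this bookkeeping by tracking, for each $j$, the set $\{i : i > j, (i,j) \in E^\sigma_{j-1}\}$ produced by Algorithm \ref{triangulation}, and checking that it never contains a "gap" that would force $L_{rs}$ to be expressed twice in terms of the same $L_{ik}$ or $\widetilde{D}_k$.

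Once this monomial description is in hand, Property-A and Property-B are read off: each $c_{rs} \in \{0,1\}$ and each $d_k \in \{0,-1\}$ by the "distinct factors" property, and Lemma \ref{nocutting} guarantees that combining the contributions $L_{ik}L_{jk}D_k$ over different $k$ in the expansion of a higher fill-in entry does not create cancellations that could inflate a power. Invoking Theorem \ref{maximality} (or equivalently Lemma \ref{genbarimpliesab} in reverse) then finishes the proof. The main obstacle is the combinatorial bookkeeping in the inductive step: one must argue carefully that the triangulation $D^\sigma(G)$ of the $n \times 3$ grid introduces fill-in only between adjacent rows and never "propagates" a fill edge across three rows, since it is precisely such long-range fill that would produce a squared $\widetilde{D}_k$ and break Property-B --- this is exactly what fails for the $n \times 4$ grid. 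I would handle this by exploiting the width-$3$ structure: after eliminating an entire row, the remaining graph on rows $t+1,\dots,n$ is again an $(n-t)\times 3$ grid with at most one extra chord per row-pair, and one checks directly that such a chord does not chain forward.
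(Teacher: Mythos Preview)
Your overall strategy matches the paper's: compute the dependent Cholesky entries explicitly, verify Property-A and Property-B directly, and then invoke Theorem \ref{maximality}. The paper carries this out by induction on the number of rows $n$ (rather than on the vertex index $i$), but that difference is cosmetic. Your claim that fill-in occurs only within a row or between two consecutive rows is also correct and is exactly what the paper establishes by showing $L_{3k+1,s}=0$ for $s\le 3k-3$, $L_{3k+2,s}=0$ for $s\le 3k-2$, and $L_{3k+3,s}=0$ for $s\le 3k-1$.

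There is, however, a genuine error in the heart of your argument: the assertion that every nonzero dependent entry $L_{ij}$ is a \emph{single} monomial in the independent $L$-entries and the $\widetilde D_k^{-1}$'s is false for the $n\times 3$ grid. Already for $k\ge 2$ one has
\[
L_{3k+1,3k}\;=\;-\,L_{3k+1,3k-2}\,L_{3k,3k-2}\,\frac{D_{3k-2}}{D_{3k}}\;+\;L_{3k+1,3k-2}\,L_{3k-1,3k-2}\,L_{3k,3k-1}\,\frac{D_{3k-2}}{D_{3k}},
\]
and $L_{3k,3k-2}$ is itself a nonzero dependent entry (e.g.\ $L_{6,4}=-L_{63}L_{43}D_3/D_4\neq 0$), so after full expansion $L_{3k+1,3k}$ is a genuine sum of several monomials. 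Consequently your ``distinct factors'' shortcut does not by itself yield Property-A: the common factor $L_{3k+1,3k-2}$ appears in both terms, and one must argue that it appears only to the first power in each and that the remaining factor $L_{3k,3k-2}$ (known inductively to satisfy Property-A and Property-B) contains no copy of $L_{3k+1,3k-2}$. This is precisely the check the paper performs, annotating the computation with remarks like ``$L_{3k,3k-2}$ does not contain $L_{3k+1,3k-2}$''. Your $p$-cycle analogy breaks down here because the cycle has only one fill-in path per dependent entry, whereas the width-$3$ grid produces two. The fix is straightforward --- drop the single-monomial claim and instead verify multilinearity term-by-term using the inductive hypothesis on the lower-indexed dependent entries --- but as written the argument has a gap at exactly the step where Property-A is ``read off''.
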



\subsection{Expansion property of Generalized Bartlett graphs}
In this section we develop two methods, which combine an arbitrary number of Generalized Bartlett graphs in a suitable manner to produce a larger Generalized Bartlett graph.

\subsubsection{Maximum vertex based expansion}
We start by proving a lemma which will be useful for further analysis.

\begin{lem}
Let $G=(V,E)$ be a Generalized Bartlett graph. If $V'$ is a subset of $V$ and $G'=(V',E')$ is the corresponding subgraph then $G'$ is also a Generalized Bartlett graph.
\end{lem}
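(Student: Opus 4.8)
The plan is to work entirely through the decomposable‑cover characterization of Generalized Bartlett graphs established in Lemma \ref{lem2}, rather than through orderings. Since $G = (V,E)$ is Generalized Bartlett, Lemma \ref{lem2} gives a decomposable cover $\widetilde{G} = (V,\widetilde{E})$ of $G$ such that every triangle in $\widetilde{E}$ contains an edge of $E$. I would then take $\widetilde{G}'$ to be the subgraph of $\widetilde{G}$ induced on $V'$, with edge set $\widetilde{E}' := \widetilde{E} \cap (V' \times V')$, and argue that $\widetilde{G}'$ is a decomposable cover of $G'$ witnessing the Generalized Bartlett property of $G'$ via Lemma \ref{lem2}.

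The verification breaks into three short steps. First, $\widetilde{G}'$ is decomposable: this uses the standard fact that an induced subgraph of a chordal (triangulated) graph is chordal. Indeed, if $\widetilde{G}'$ had an induced cycle on a vertex set $W \subseteq V'$ of length $\geq 4$, then because $\widetilde{G}'$ is the subgraph of $\widetilde{G}$ induced on $V'$, the subgraph of $\widetilde{G}$ induced on $W$ would be exactly that same cycle, contradicting decomposability of $\widetilde{G}$. Second, $\widetilde{G}'$ is a cover of $G'$: since $E' = E \cap (V' \times V')$ and $E \subseteq \widetilde{E}$, we get $E' \subseteq \widetilde{E} \cap (V' \times V') = \widetilde{E}'$. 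Third, every triangle in $\widetilde{E}'$ contains an edge of $E'$: given $u,v,w \in V'$ with $(u,v),(v,w),(u,w) \in \widetilde{E}'$, these three edges lie in $\widetilde{E}$, so $\{u,v,w\}$ is a triangle in $\widetilde{G}$ and hence contains an edge of $E$; since both its endpoints lie in $V'$, that edge lies in $E \cap (V' \times V') = E'$. Applying Lemma \ref{lem2} to $G'$ and the cover $\widetilde{G}'$ then shows $G'$ is Generalized Bartlett.

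There is essentially no hard step here — the only ``nontrivial'' ingredient is the closure of decomposability under taking induced subgraphs, and even that is immediate from the induced‑cycle definition of decomposability used in the paper. I would just need to be careful that all three objects in play ($G'$, $\widetilde{G}'$, and the triangle) are taken as \emph{induced} subgraphs on the relevant vertex sets, so that edge sets intersect cleanly; no ordering‑level or Cholesky‑level argument is required.
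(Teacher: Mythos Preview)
Your proposal is correct and follows essentially the same approach as the paper: take the decomposable cover $\widetilde{G}$ of $G$ furnished by Lemma~\ref{lem2}, restrict it to the induced subgraph on $V'$, and verify that this restriction is a decomposable cover of $G'$ in which every triangle contains an edge of $E'$. If anything, you are slightly more explicit than the paper in checking the ``cover'' inclusion $E' \subseteq \widetilde{E}'$ and in spelling out why induced subgraphs of chordal graphs are chordal; the paper additionally remarks that a Generalized Bartlett ordering for $G$ restricts to one for $G'$, but that is not needed for the lemma as stated.
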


\begin{proof}
Since $G$ is Generalized Bartlett, let $\underline{G}$ be the decomposable cover of $G$ such that any triangle in $\underline{G}$ has at least one edge in $G$. Let $\underline{G'}$ be the induced subgraph of $\underline{G}$ for $V'$. Then $\underline{G'}$ is decomposable since it is a induced subgraph of a decomposable graph. Also any triangle in $\underline{G'}$ is a triangle in $\underline{G}$ and thus has atleast one edge in $G$ and hence in $G'$. Thus by Lemma \ref{lem2}, $G'$ is Generalized Bartlett. Moreover from the proof of Lemma \ref{lem2} we can observe that if $\sigma$ is the Generalized Bartlett ordering for $G$ then the same ordering works for $G'$. 
\end{proof}

\noindent Let $G=(V,E)$ be a Generalized Bartlett graph with, $V=\{1,2, \ldots, r\}$. Suppose we  replace each vertex $i$ of $G$, by a Generalized Bartlett graph $G_i=(V_i,E_i)$, where for $i=1,2,\ldots,r$, $V_i = \{p_{i-1}+1, \ldots ,p_i\}$. Here $p_0=0$ and $p_1,p_2,\ldots,p_r$ are the sizes of the $r$ graphs. Note that the graphs being considered here are already ordered. For ease of exposition, we will suppress the ordered graph notation, and refer to the graphs as just $G,G_1,G_2,\ldots$.

\begin{Def}
The expanded graph $\widetilde{G}=(\widetilde{V},\widetilde{E})$ is constructed from $G$ using $G_1,G_2,\ldots,G_r$  as follows,
\begin{itemize}
\item $\widetilde{V} = V_1 \cup V_2 \cup \ldots \cup V_r = \{1,2,\ldots,p_r\}$
\item $(k,l) \in \widetilde{E}$ iff \textbf{either} $(k,l) \in E_i$ for some $G_i$, \textbf{or} $k=p_i,l=p_j$  for some $1 \leq i \neq j \leq r$ and $(i,j) \in E$.
\end{itemize}
\end{Def}

\noindent
Hence $\widetilde{G}$ is constructed from $G$ by replacing the $i$-th vertex of $G$ by $G_i$. An edge between $i$ and $j$ in $G$ translates to an edge between the maximal vertices of $G_i$ and $G_j$ namely $p_i$ and $p_j$.  For any $i=1,2, \ldots, p_r$, if $p_{k-1}+1 \leq i \leq p_k$, the notation  $Graph(i)$ shall denote $G_k$, and $Graph\_before(i)$ shall denote $\cup_{s<k} G_s = (\cup_{s<k}V_s, \cup_{s<k}E_s)$ 

\begin{thm}
The expanded graph $\widetilde{G}$ defined as above is Generalized Bartlett.
\label{expansion}
\end{thm}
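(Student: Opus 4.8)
The plan is to use the ordering-free characterization of Generalized Bartlett graphs from Lemma \ref{lem2}: a graph is Generalized Bartlett iff it admits a decomposable cover in which every triangle contains an edge of the original graph. So the goal reduces to (i) building an explicit decomposable cover $\widehat{G}$ of $\widetilde{G}$, and (ii) checking that every triangle in $\widehat{G}$ uses at least one edge of $\widetilde{E}$.

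First I would construct the candidate cover. For each $i$, since $G_i$ is Generalized Bartlett, Lemma \ref{lem2} gives a decomposable cover $\underline{G}_i = (V_i, \underline{E}_i)$ in which every triangle contains an $E_i$-edge; by the remark after the subgraph lemma, the given orderings may be taken compatible. Likewise let $\underline{G} = (V, \underline{E})$ be such a cover of the ``outer'' graph $G$. Now define $\widehat{E}$ by the same recipe as $\widetilde{E}$ but with $\underline{E}_i$ in place of $E_i$ and $\underline{E}$ in place of $E$: that is, $(k,l) \in \widehat{E}$ iff either $k,l \in V_i$ with $(k,l) \in \underline{E}_i$ for some $i$, or $k = p_i$, $l = p_j$ with $(i,j) \in \underline{E}$. (One may also want to make each $V_i$ into a clique attached to $p_i$, or equivalently ensure $\underline{G}_i$ is chosen so that $p_i$ is simplicial-last; this is where I would be careful.) I claim $\widehat{G} = (\widetilde{V}, \widehat{E})$ is decomposable. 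The cleanest way is to exhibit a perfect elimination ordering: eliminate the vertices of $\underline{G}_1$ in its perfect elimination order, then those of $\underline{G}_2$, and so on, ending each block with its maximal vertex $p_i$. When a non-maximal vertex $v \in V_i$ is eliminated, its higher neighbors all lie in $V_i$ (non-maximal vertices have no cross edges), so they form a clique because $\underline{G}_i$ is decomposable with this elimination order. When $p_i$ is eliminated, its remaining higher neighbors are some $p_j$'s with $j > i$ and $(i,j) \in \underline{E}$, together with possibly leftover vertices of $V_i$; these $p_j$'s form a clique because $\underline{G}$ is decomposable and $i$ precedes them in its elimination order, and the cross edges $(p_i,p_j)$ are present by construction — so the relevant set is a clique. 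This establishes decomposability.

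Next I would verify the triangle condition. Take a triangle $\{k,l,m\} \subseteq \widehat{E}$. If all three vertices lie in a single $V_i$, then it is a triangle of $\underline{G}_i$, hence contains an $\underline{E}_i$-edge, hence an $E_i \subseteq \widetilde{E}$ edge — done. If the three vertices lie in at least two distinct blocks, then by the edge recipe, any edge between two different blocks must join their maximal vertices; so if, say, $k \in V_i$, $l \in V_j$ with $i \neq j$ and $(k,l) \in \widehat{E}$, then $k = p_i$ and $l = p_j$ and $(i,j) \in \underline{E}$. Pushing this through for all cross-block edges of the triangle: the vertices involved across blocks must be maximal vertices, and the image $\{i,j,\dots\}$ in $G$ forms a triangle (or edge) in $\underline{G}$. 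A triangle in $\underline{G}$ contains an $E$-edge, say $(i,j) \in E$; then by the recipe $(p_i,p_j) \in \widetilde{E}$, and $p_i, p_j$ are among our triangle's vertices, so we have our required $\widetilde{E}$-edge. The only remaining case is a triangle spanning exactly two blocks $V_i, V_j$: then at most one of its vertices is outside $\{p_i,p_j\}$, but a vertex in $V_i \setminus \{p_i\}$ has no neighbor in $V_j$, contradiction — so in fact a two-block triangle already forces us into the maximal-vertex situation with $(i,j) \in \underline{E}$, hence (since an edge of $\underline{G}$ that is ``just an edge'' need not be in $E$) I need the slightly finer point that $\underline{G}$ restricted to $\{i,j\}$ being an edge is not enough; but here the triangle has a third vertex, and chasing which block it is in shows we are back to a genuine triangle in $\underline{G}$. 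Concluding via Lemma \ref{lem2} gives that $\widetilde{G}$ is Generalized Bartlett.

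\textbf{Main obstacle.} The delicate part is getting the decomposable cover $\widehat{G}$ exactly right so that it is genuinely decomposable \emph{and} no spurious triangle is created that avoids $\widetilde{E}$. The subtlety is the interaction at the interface vertices $p_i$: I must ensure that edges incident to $p_i$ inside $V_i$ (from $\underline{E}_i$) and edges from $p_i$ to other $p_j$'s (from $\underline{E}$) do not combine into a triangle $\{v, p_i, p_j\}$ with $v \in V_i \setminus \{p_i\}$ — but such a triangle cannot exist since $(v, p_j) \notin \widehat{E}$ by the recipe, so this is automatically fine. The real care is in (a) choosing the per-block covers $\underline{G}_i$ and their elimination orders so that $p_i$ can be made the last eliminated vertex of its block without destroying the ``every triangle has an $E_i$-edge'' property — one may need to add all edges from $p_i$ to $V_i$, which is legitimate only if every resulting triangle $\{p_i, a, b\}$ with $a,b \in V_i$ still contains an $E_i$-edge, which may fail, so a better route is to not force $p_i$ to be simplicial but instead argue the elimination ordering directly as sketched — and (b) making sure the interleaved elimination order is genuinely perfect, which is where I would spend the most effort.
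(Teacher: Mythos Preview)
Your approach is correct and genuinely different from the paper's. The paper does \emph{not} use Lemma~\ref{lem2}; instead it works entirely with the Cholesky parametrization and shows, by a five-case analysis on which blocks $i$ and $j$ belong to, that every dependent entry $L_{ij}$ satisfies Property-A and Property-B (and then appeals to Theorem~\ref{maximality}). The key computation there is that $L_{ij}=0$ whenever $i$ and $j$ lie in different blocks and at least one of them is a non-maximal vertex, so the only nontrivial entries are either internal to some $G_k$ or involve only the maximal vertices $p_1,\dots,p_r$ --- and those two situations reduce respectively to the GB property of $G_k$ and of $G$.

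Your route via Lemma~\ref{lem2} is cleaner and purely graph-theoretic, avoiding the Cholesky algebra altogether. A couple of points to tighten: (i) the fact you need, that every chordal graph has a perfect elimination ordering ending at any prescribed vertex, is standard (every chordal graph on $\geq 2$ vertices has at least two simplicial vertices, so you can always postpone $p_i$); you should state it rather than leave it implicit. (ii) For the outer cover, simply take $\underline{G}=D^{\mathrm{id}}(G)$ from Algorithm~1; by construction the identity ordering is then a perfect elimination ordering for $\underline{G}$, which is exactly what you need when you eliminate $p_i$. (iii) Your two-block triangle case is already settled by your own observation that a non-maximal vertex of $V_i$ has no neighbor in $V_j$ --- there is nothing more to chase, and your later hedging about ``a genuine triangle in $\underline{G}$'' is unnecessary. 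What your approach buys is a self-contained combinatorial argument independent of the Gibbs-sampler machinery; what the paper's approach buys is that it simultaneously identifies exactly which $L_{ij}$ vanish, information that feeds directly into the sampler's implementation.
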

\noindent
The proof this theorem is given in the Supplemental Section \ref{proofexpansion}.

\subsubsection{Tree based expansion}
Consider a tree $T$ with $r$ vertices $\{v_1,v_2,\ldots,v_r\}$. For each $v_i$ consider an arbitrary number of GB graphs say $G_{v_i}^{(1)},\ldots,G_{v_i}^{(n_i)}$. We add an edge from $v_i$ to each vertex in $G_{v_i}^{(j)}$ for every $1 \leq j \leq n_i$. Denote the resulting graph by $G=(V,E)$. Next the vertices in $V$ are labeled in the following order. 
\[G_1^{(1)},\ldots,G_1^{(n_1)},G_2^{(1)},\ldots,G_2^{(n_2)},\ldots,G_r^{(1)},\ldots,G_r^{(n_r)},T\]
The labeling is done in such a way that the induced ordering on each $G_{i}^{(j)}$ is a GB ordering and every parent vertex in $T$ gets a higher label than any of its children in $T$. Again for ease of exposition we will suppress the ordering notation and refer to the resulting ordered graph as $G=(V,E)$.

\begin{thm}
\label{tree}
The graph $G$ defined and ordered as above is Generalized Bartlett.
\end{thm}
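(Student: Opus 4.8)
\textbf{Proof proposal for Theorem \ref{tree}.}

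The plan is to exhibit an explicit decomposable cover $\widetilde{G}$ of the ordered graph $G$ and verify, via Lemma \ref{lem2}, that every triangle in $\widetilde{G}$ contains an edge of $G$. First I would fix, for each $G_i^{(j)}$, a decomposable cover $\underline{G}_i^{(j)}$ in which every triangle has an edge in $G_i^{(j)}$ (available since each $G_i^{(j)}$ is GB). I then define $\widetilde{G}$ to consist of: (a) all the edges of these covers, (b) all the edges from a vertex $v_i$ of $T$ to the vertices of its attached blocks $G_i^{(1)},\dots,G_i^{(n_i)}$ (these are already present in $G$, so they can only help), (c) the tree edges of $T$, and (d) the fill-in edges forced by running Algorithm \ref{triangulation} on $G$ with the given labeling. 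The key structural point is that, because every parent in $T$ has a higher label than its children and the blocks attached to $v_i$ are labeled before $v_i$, when the elimination reaches a block vertex its only higher-labeled neighbors are within the same cover $\underline{G}_i^{(j)}$ together with the single tree vertex $v_i$; eliminating such a vertex can therefore only create an edge inside $\underline{G}_i^{(j)}$ (already present) or an edge from $v_i$ to $\underline{G}_i^{(j)}$, i.e. a ``$v_i$–block'' edge. Eliminating the tree vertices last reproduces a triangulation of $T$, but $T$ is a tree so no fill-in occurs there beyond what connects blocks through their common tree neighbor.

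The heart of the argument is then the triangle analysis. Take any triangle $\{a,b,c\}$ in $\widetilde{G}$. If all three vertices lie in a single block $G_i^{(j)}$, the triangle is a triangle of $\underline{G}_i^{(j)}$ and hence contains a $G_i^{(j)}$-edge, which is a $G$-edge. If exactly two, say $a,b$, lie in a common block $G_i^{(j)}$ and $c$ is a tree vertex, then $c = v_i$ (the only tree vertex adjacent to block vertices, after the fill-in described above), and both $(a,v_i)$ and $(b,v_i)$ are $G$-edges by construction (b). If the two block vertices lie in different blocks, then by the separation structure — the only path between two different attached blocks in $\widetilde{G}$ passes through tree vertices — they cannot be adjacent in $\widetilde{G}$, so no such triangle exists. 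Finally, if two or more of $a,b,c$ are tree vertices, then since $T$ is a tree and the only fill-in among/near tree vertices links a block vertex to its tree neighbor, a triangle with two tree vertices $v_i,v_k$ would need $(v_i,v_k)\in\widetilde{E}$ and a common neighbor; but a common neighbor that is a block vertex is impossible (a block vertex has a unique tree neighbor), and a triangle of three tree vertices is impossible since $T$ is acyclic and the triangulation of a tree adds no edges. In every case the triangle has a $G$-edge.

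I would therefore conclude by Lemma \ref{lem2} that $G$ is Generalized Bartlett, with the stated labeling serving as a Generalized Bartlett ordering. The main obstacle I anticipate is item (d): carefully controlling which fill-in edges Algorithm \ref{triangulation} actually produces under the prescribed labeling, so as to be sure that no fill-in edge runs directly between two distinct attached blocks or between two tree vertices that are non-adjacent in $T$. This requires an induction on the elimination step tracking the invariant ``the higher-labeled neighborhood of any not-yet-eliminated block vertex of $G_i^{(j)}$ is contained in $V(\underline{G}_i^{(j)})\cup\{v_i\}$,'' which follows from the child-before-parent labeling and the fact that distinct blocks are only connected through $T$. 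Once this invariant is in place the decomposability of $\widetilde{G}$ and the triangle-edge property are both routine, and one could alternatively phrase the decomposability half using the clique-sum Lemma (prime components are the individual $\underline{G}_i^{(j)}\cup\{v_i\}$ glued along the complete — indeed singleton — separators $\{v_i\}$, together with $T$ itself) to shorten the write-up.
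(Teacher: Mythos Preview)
Your approach is correct but takes a different route from the paper. The paper does not use Lemma \ref{lem2}; it instead verifies Property-A and Property-B directly by a case analysis on the Cholesky entries $L_{kk'}$, showing that each such entry is either independent, functionally zero, or (when $k,k'$ lie in a common attached block $G_i^{(j)}$) expressible solely through entries indexed within that block, at which point Lemma \ref{genbarimpliesab} applied to $G_i^{(j)}$ finishes. Your argument via an explicit decomposable cover and the triangle criterion of Lemma \ref{lem2} is cleaner and purely combinatorial; the clique-sum phrasing you sketch at the end is in fact the whole proof (each $\underline{G}_i^{(j)}\cup\{v_i\}$ is chordal since $v_i$ is universal there, $T$ is chordal, and gluing along the singleton separators $\{v_i\}$ yields a decomposable cover in which every triangle lives in a single piece and therefore contains either a $G_i^{(j)}$-edge or a $(v_i,\text{block})$-edge), so item (d) and its fill-in invariant are unnecessary. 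One caveat: Lemma \ref{lem2} only guarantees that \emph{some} GB ordering exists; if you also want the \emph{stated} labeling to be a GB ordering, take $\underline{G}_i^{(j)} = D^{\sigma}(G_i^{(j)})$ under the induced order and check that the global labeling is a perfect elimination ordering for your cover, which is immediate from the child-before-parent convention on $T$ and the universality of each $v_i$ over its blocks. The paper's Cholesky-based route has the complementary virtue of reusing the $L_{kk'}$ calculus already set up for Theorems \ref{megatheorem} and \ref{expansion}.
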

\noindent
The proof of this theorem is provided in the Supplemental Section \ref{prooftree}.

\section{Illustrations and Applications}
\label{illusandapp}
\noindent
We now illustrate the advantages of our Generalized Bartlett approach on both simulated and real data and  demonstrate that the proposed GB method is scalable to significantly higher dimensions. In Section  \ref{gwishvsgengwish}, we illustrate the advantage of having multiple shape parameters in the generalized $G$-Wishart distribution. In Section \ref{accept-reject} and Section \ref{metropolis-hastings}, we undertake a comparison of our algorithm with the accept-reject and Metropolis-Hastings approaches. Section \ref{climate} contains a real data analysis using data from a temperature study. Although the main focus of this paper is development of the flexible class of $G$-Wishart distributions, and tractable methods to sample from these distributions, we also illustrate that the methods developed in this paper can be used for high-dimensional graphical model selection in conjunction with existing penalized likelihood methods  (see Supplemental Sections \ref{modelselectionexp} and \ref{breastcancerdata}).

\subsection{Comparing $G$-Wishart with generalized $G$-Wisharts}
\label{gwishvsgengwish}

In this section, we present a simulation experiment to demonstrate that the multiple shape parameters in the generalized  $G$-Wishart distribution can yield differential shrinkage and improved estimation as compared to the single parameter $G$-Wishart in higher dimensional setting.

For the purposes of this experiment we consider a Generalized Bartlett graph $G$ with $p=1000$ vertices, defined as follows. Let,
\begin{eqnarray*}
b_1 &=& 50, b_2 = 150, b_3 = 450, b_4 = 1000 \\
B_1 &=& \{1,\ldots, 49\}, B_2 = \{51, \ldots ,149\}, B_3 = \{151, \ldots, 449\},B_4 = \{451,\ldots, 999\} 
\end{eqnarray*}
\noindent
A graph $G$ is constructed by forming the $4$-cycle $\{b_1,b_2,b_3,b_4\}$ and then connecting $b_i$ with all elements of $B_i$ for $i=1,2,3,4$. An inverse covariance matrix $\Omega_0 \in \mathbb{P}_G$ is then constructed by taking $\Omega_0=L_0 D_0 L_0^T$, where $(D_0)_{jj}=b_i-b_{i-1}$ if $j \in \{b_i\} \cup B_i$. Here $L_0$ is a lower triangular matrix with independent entries equal to $0.5$, and dependent entries chosen such that $\Omega_0 \in \mathbb{P}_G$. 

We then generate $n=100$ samples from a $N(\boldsymbol{0},\Sigma_0=\Omega_0^{-1})$ distribution. Let $S$ denote the corresponding sample covariance matrix. Let $c$ denote the mean of the diagonal entries of $n*S$. We first consider a $G$-Wishart prior for $\Omega$ with $U=c I_p$ and different choices of $\delta$. Using the Gibbs sampler proposed in Section \ref{gibbssampler}, the posterior mean for $\Omega$ (and $\Sigma$) is then computed for each choice of $\delta$. Figure \ref{toolcover} depicts the performance of these posterior mean estimators in terms of the Steins loss function (denote by $L_1$). It can be seen from Figure \ref{toolcover} that, $L_1(\hat{\Omega},\Omega_0)$ and $L_1(\hat{\Sigma},\Sigma_0)$ are  minimized at $\delta=262$ and $\delta=353$ respectively.

\begin{figure}
\includegraphics[scale=0.05]{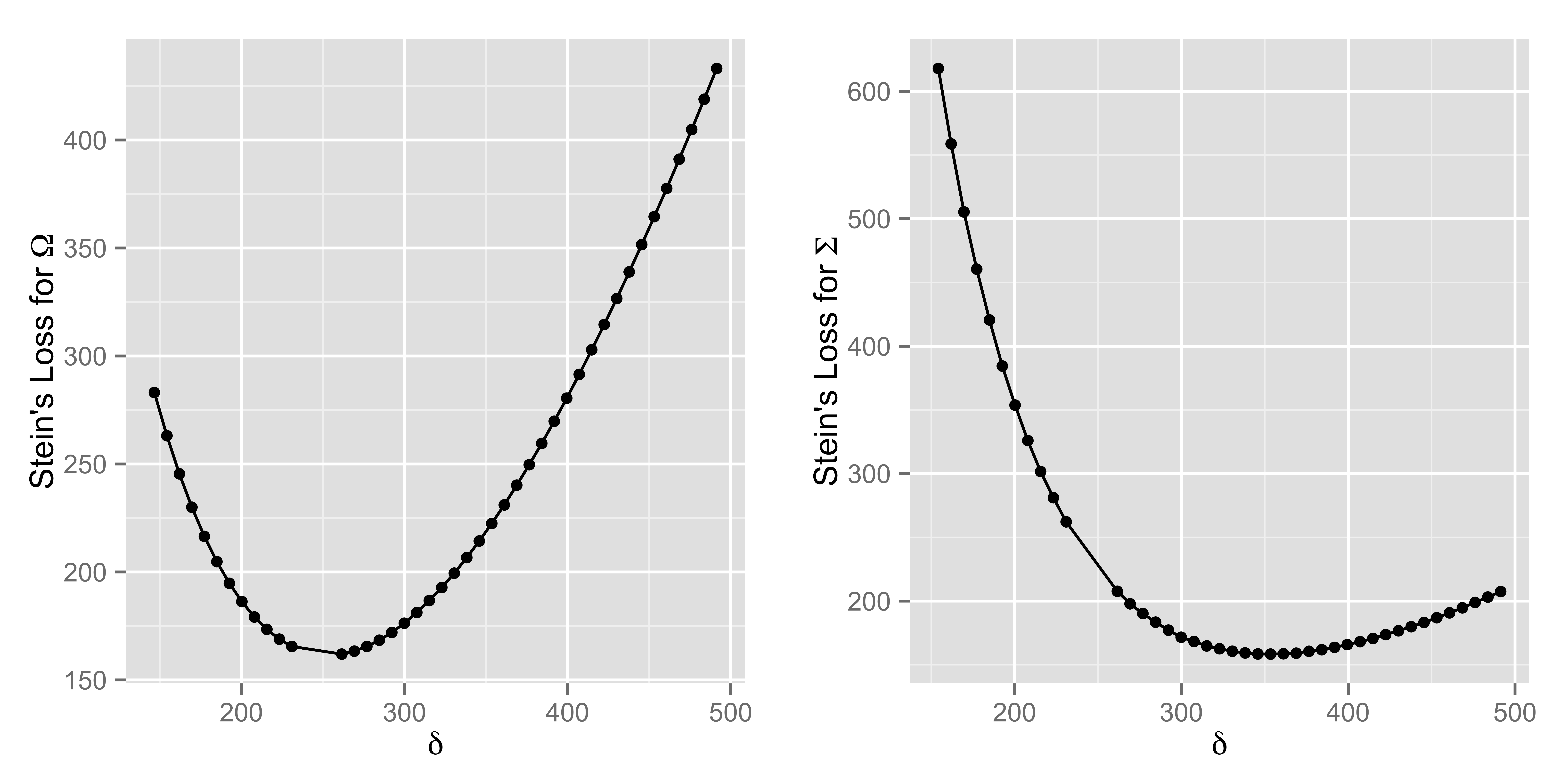}
\caption{Performance of $G$-Wishart prior for different values of $\delta$. The x-axis represents the chosen value of $\delta$. The y-axis represents Stein's loss between the estimated posterior and the true value for $\Omega$ and $\Sigma$ respectively.}
\label{toolcover}
\end{figure}

We now illustrate the improved performance of the posterior mean estimators when using our generalized $G$-Wishart priors endowed with multiple shape parameters. If $\Omega = \Sigma^{-1}$ follows a $G$-Wishart distribution with parameters $U$ and $\delta > 0$, then for every $i = j$ and $(i,j) \in E$, $E[\Sigma_{ij}] = U_{ij}/\delta$ see \cite[see][Corollary 2]{roverato2002}. Borrowing intuition from this result, we first choose a generalized $G$-Wishart empirical prior for $\Omega$ with $U=cI_p$ and $\boldsymbol{\delta}=diag(U + nS)/diag(S)$. Here $diag(\cdot)$ denotes the vector of diagonal entries of a given matrix. It can be seen from Table \ref{4values} that even with this empirical choice of $\boldsymbol{\delta}$ we observe a $30.2\%$ decrease in Stein's loss for $\Omega$ and $13\%$ for $\Sigma$ compared to the best performance in the single shape parameter case. Next, we perform a restricted grid search to check if the performance can be further improved. In particular, a $4$-dimensional grid search is performed on $(m_1,m_2,m_3,m_4)$ where for we assign $\delta_l=m_i$ for all $l \in B_i \cup \{b_i\}$. 
As shown in Table \ref{4values}, the best posterior mean estimator obtained via this search improves the Stein's loss for $\Omega$ and $\Sigma$ by $35\%$ and $27.7\%$ respectively (compared to the best estimator in the single parameter case).  

\begin{table}[h]
\centering
\begin{tabular}{|l|c|c|}
\hline
$\boldsymbol{\delta}$ & $L_1(\hat{\Omega},\Omega_0)$ & $L_1(\hat{\Sigma},\Sigma_0)$\\ \hline
$262*1_p$ & 161.9 & 207.6 \\
$353*1_p$ & 222.4 & 158.3 \\ \hline
$diag(U^*)/diag(S)$ &  113.0  & 137.7 \\
$m_1=140,m_2=140,m_3=264,m_4=348$ &  105.2  &  114.4\\ \hline
\end{tabular}
\caption{The first two rows indicate the best possible performance using single shape parameter. The next two rows indicate the performance for an objective choices with multiple shape parameters and an attractive choices obtained by doing a grid search over certain 4-valued $\boldsymbol{\delta}$ vector.}
\label{4values}
\end{table}

\subsection{Comparison with other Monte Carlo based approaches}

\noindent
We shall show in this section that two other approaches, namely the accept-reject algorithm and the Metropolis algorithm, can also be used to sample from the generalized $G$-Wishart distribution. We demonstrate however that both these algorithms can have specific scalability issues, but the proposed Gibbs sampler can overcome these challenges.  

\subsubsection{Comparison with the accept-reject algorithm}
\label{accept-reject}
A useful accept-reject algorithm to simulate $\Omega$ from a $G$-Wishart distribution for a general graph $G$ is provided in \cite{carvalhowang}. This algorithm can be easily generalized to simulate from our generalized $G$-Wishart distributions. Below, we present this generalized version of the accept-reject algorithm for a graph $G$, where $G$ cannot be decomposed into any prime components. Many graphs, such as cycles, $m \times n$ grids (with $m,n \geq 3$) satisfy this property. 

Let $\Omega \in \mathbb{P}_{G_\sigma}$ follow a generalized $G$-Wishart distribution with parameters $U$ and $\boldsymbol{\delta}$ for some positive definite matrix $U$ and $\boldsymbol{\delta} \in \mathbb{R}_+^p$. Let $U^{-1}=T'T$ be the Cholesky decompositions of $U^{-1}$. Also for $i = \{1,\ldots,p\}$, define $\nu_i := |\{s|s>i,\,(s,i) \in E_{\sigma} \}|$. The accept-reject algorithm can now be specified as follows. 
\begin{description}
\item[\underline{Step 1}] Simulate $\Psi$ as follows;
For $i \in \sigma(V)$, $\psi_{ii}^2 \sim \chi^2_{\delta_i+\nu_i+2}$ and for $i<j,\,(i,j)\in E_{\sigma}$, $\psi_{ij} \sim N(0,1)$. For $i<j,\,(i,j) \notin E_{\sigma}$, we calculate $\psi_{ij}$ as,
\begin{eqnarray*}
\psi_{ij} &=& - \sum_{k=1}^{j-1} \psi_{1k} T_{kj}/T_{jj} \mbox{\hspace{10cm} if } i=1 \\
&=& -\sum_{k=i}^{j-1} \psi_{ik} T_{kj}/T_{jj} - \sum_{k=1}^{i-1} \left( \frac{\psi_{ki}+\sum_{l=k}^{i-1} \psi_{kl}T_{l,i-1}/T_{i-1,i-1}}{\psi_{ii}}\right) \left(\psi_{kj}+\sum_{l=k}^{j-1} \psi_{kl}T_{l,j-1}/T_{j-1,j-1} \right) \\
&& \hspace{13cm} \mbox{ if } 1<i<j\\
\end{eqnarray*}

\item[\underline{Step 2}] Simulate $u \sim U(0,1)$ and check whether 
\[u < \exp \left(-\frac{1}{2}\sum_{i<j,\, (i,j)\notin E_{\sigma}} \psi_{ij}^2 \right).\]
If this holds, then accept this value of $\Psi$, else go to \textbf{Step 1}.

\item[\underline{Step 3}] Set $\Phi = \Psi T$ and $\Omega = \Phi'\Phi$.Then $\Omega$ has the required 
generalized Wishart distribution. 
\end{description}


\noindent
A common problem with the vanilla application of accept-reject algorithm even in moderate dimensional settings is that the average acceptance probability can be extremely small. This issue can make the accept-reject algorithm computationally infeasible. We find that the same phenomenon happens with the accept-reject algorithm in the generalized $G$-Wishart distribution setting. The algorithm works well for small dimensional examples, such as the simulation example in \cite{carvalhowang} for a $7$-vertex graph (where the largest prime component has order $4$). We find however, that the low acceptance probability issue mentioned above, surfaces as we increase the size of the largest prime component. To illustrate this, let $G$ be a $12$ cycle (which cannot be decomposed into prime components) with an ordering $\sigma$ as specified in Section 4. Consider a generalized $G$-Wishart distribution with parameters $U^0$ and $\boldsymbol{\delta}$. Here $U^0_{ii} = 100$ and for $|i-j|=1,11$ $U^0_{ij}=40$ and $0$ otherwise. The shape parameters, $\delta_i=60$ for $1 \leq i \leq 6$, and $\delta_i = 70$ for $7 \leq i \leq 12$. The average time taken by the accept-reject algorithm to complete one iteration is more than $5$ hours on a 2.4 Ghz processor with 4 GB RAM. Clearly, this happens due to low acceptance probabilities. However, the Gibbs sampler does not suffer from these issues, since no acceptance/rejection step is involved. The results obtained by using $10000$ iterations of the Gibbs sampler (which take approximately $4$ minutes) are provided in Table \ref{performance}. In particular, this table demonstrates that the difference between mean of $\Sigma^*_{ij}$ from the Gibbs sampler (for $i=j$ or $(i,j) \in E_\sigma$) and its theoretical expectation $U^0 [i,j]$ (as given by Theorem \ref{thm2}) is very small.
 
\begin{table}
\begin{tabular}{|c|c|c|c|c|c|c|c|}
\hline
i & j & Simulated mean   & True Mean & i & j & Simulated mean  & True Mean \\
\hline
$ 1 $ & $ 1 $ & $ 100.6 $ & $ 100 $ & $ 2 $ & $ 1 $ & $ 39.7 $ & $ 40 $ \\ 
$ 2 $ & $ 2 $ & $ 98.99 $ & $ 100 $ & $ 3 $ & $ 2 $ & $ 40.14 $ & $ 40 $ \\ 
$ 3 $ & $ 3 $ & $ 100.4 $ & $ 100 $ & $ 4 $ & $ 3 $ & $ 40.25 $ & $ 40 $ \\ 
$ 4 $ & $ 4 $ & $ 100.3 $ & $ 100 $ & $ 5 $ & $ 4 $ & $ 39.95 $ & $ 40 $ \\ 
$ 5 $ & $ 5 $ & $ 100 $ & $ 100 $ & $ 6 $ & $ 5 $ & $ 39.79 $ & $ 40 $ \\ 
$ 6 $ & $ 6 $ & $ 99.82 $ & $ 100 $ & $ 7 $ & $ 6 $ & $ 39.75 $ & $ 40 $ \\ 
$ 7 $ & $ 7 $ & $ 99.3 $ & $ 100 $ & $ 8 $ & $ 7 $ & $ 40.23 $ & $ 40 $ \\ 
$ 8 $ & $ 8 $ & $ 100.6 $ & $ 100 $ & $ 9 $ & $ 8 $ & $ 39.97 $ & $ 40 $ \\ 
$ 9 $ & $ 9 $ & $ 100.1 $ & $ 100 $ & $ 10 $ & $ 9 $ & $ 39.9 $ & $ 40 $ \\ 
$ 10 $ & $ 10 $ & $ 99.73 $ & $ 100 $ & $ 11 $ & $ 10 $ & $ 39.87 $ & $ 40 $ \\ 
$ 11 $ & $ 11 $ & $ 99.86 $ & $ 100 $ & $ 12 $ & $ 1 $ & $ 40.01 $ & $ 40 $ \\ 
$ 12 $ & $ 11 $ & $ 39.98 $ & $ 40 $ & $ 12 $ & $ 12 $ & $ 99.95 $ & $ 100 $ \\ 
\hline
\end{tabular}
\caption{Comparison between simulated mean of $\Sigma^*_{ij}$ and its theoretical mean}
\label{performance}
\end{table}

\subsubsection{Comparison with the Metropolis-Hastings Algorithm}
\label{metropolis-hastings}
\noindent
A useful Metropolis-Hastings algorithm to sample from the $G$-Wishart distribution has been developed in \cite{mitsakakis2011}.  This approach can also be conveniently adapted to the setting of our generalized $G$-Wishart distribution. Suppose we want to simulate $\Omega$ from a generalized $G$-Wishart distribution with parameters $U$ and $\delta_1, \cdots, \delta_p>0$ corresponding to an undirected graph $G = (V,E)$, with a specified ordering $\sigma$. Let $\Omega = \Phi'\Phi$ and $U^{-1}=T'T$ be the Cholesky decompositions of $\Omega$ and $U^{-1}$. \cite{mitsakakis2011} propose the following algorithm to simulate $\Psi = \Phi T^{-1}$, and thereby $\Omega$ from the required distribution. 

For $i<j$ and $(i,j) \notin E_\sigma$ they call $\Psi_{ij}$ to be a `dependent' entry, while 
$\Psi_I=\{\Psi_{ij}: i=j \mbox{ or } i<j \mbox{ and } (i,j) \in E_\sigma\}$ are referred to as the independent entries of $\Psi$. Also for $1 \leq j \leq p$, they define $\nu_j := |\{s|s>j,\,(s,j) \in E_{\sigma} \}|$. Given the independent entries, the dependent entries can be calculated exactly as in Section \ref{accept-reject}. Let $h(\Psi_I)=\prod_{i=1}^p \chi_{\delta_i+\nu_i} \times N_{|E_\sigma|} \left(0_{|E_\sigma|},I_{|E_\sigma|} \right)$ denote the distribution on $\Psi_I$, where for $i=1,\ldots,p$, $\Psi_{ii}^2 \sim \chi^2_{\delta_i+\nu_i}$ and for $i<j,(i,j) \in E_\sigma$, $\Psi_{ij}$ are independent standard normal. For a given positive integer $N$, the procedure to generate $N$ iterations of the Metropolis-Hastings algorithm is given as follows. 
\begin{itemize}
\item Initialize $\Psi_I^{(0)}$ by sampling $\Psi_I^{(0)}$ from $h(\cdot)$ and set $\Psi_I^{cur} = 
\Psi_I^{(0)}$. 
\item For $i$ in $1,2,\ldots,N$ do ::
\begin{enumerate}
\item Sample $\Psi_I^{prop}$ from $h(\cdot)$.
\item Set $\log \alpha = \frac{1}{2} \sum_{(i,j) \notin E_\sigma} \{(\Psi^{cur}_{ij})^2 - 
(\Psi^{prop}_{ij})^2\}$. 
\item Sample $b$ from Bernoulli$(min(\alpha,1))$.
If $b=1$, set $\Psi_I^{(i)} = \Psi_I^{prop}$, else set $\Psi_I^{(i)} = \Psi_I^{cur}$.
\end{enumerate}
\end{itemize}

\begin{figure}[h]
\begin{minipage}[b]{0.45\linewidth}
\includegraphics[scale=0.45]{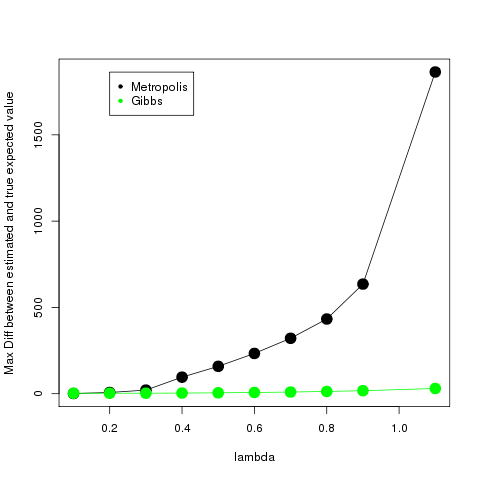}
\end{minipage}
\quad
\begin{minipage}[b]{0.45\linewidth}
\includegraphics[scale=0.45]{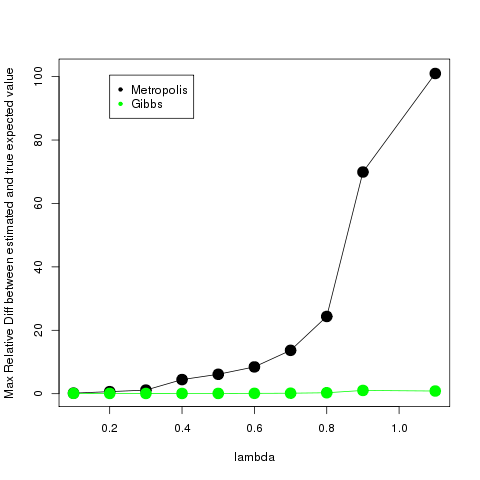}
\end{minipage}
\caption{Plots comparing the maximum entry wise difference and the maximum entry wise relative 
difference between the estimated and true expected values for the Metropolis-Hastings algorithm 
versus the Gibbs sampling algorithm.}
\label{mitsakakis}
\end{figure}

In the above algorithm at each stage the acceptance probability depends on $\alpha$, which then depends on the dependent entries of $\Psi^{cur}$ and $\Psi^{prop}$. The dependent entries in turn depend on the matrix $T$ via terms of the form $\frac{t_{ij}}{t_{jj}}$ for $1 \leq i<j \leq p$. If the terms $\frac{t_{ij}}{t_{jj}}$ are large in magnitude then we expect the terms $\sum_{(i,j) \notin E_\sigma}(\Psi^{cur}_{ij})^2$ and $\sum_{(i,j) \notin E_\sigma} (\Psi^{prop}_{ij})^2$ to be large in magnitude as well. This typically makes $\log \alpha$ either a large positive number or a large negative number which makes the acceptance probability close to $0$ or $1$, thereby making the process potentially expensive in terms of timing. To illustrate this fact lets consider the $5 \times 3$ grid (which satisfies the Generalized Bartlett property) and take all values of $\{t_{ij}:i<j\}$ equal to $\lambda$ for some $\lambda>0$ and the diagonal entries of $T$ as $1$. We illustrate below that as the value of $\lambda$ increases, the performance of the Metropolis-Hastings algorithm can deteriorate. In comparison, this change in $\lambda$ has negligible effect on the performance of the proposed Gibbs sampling algorithm. If $\Omega=\Sigma^{-1}$ is a generalized $G$-Wishart with parameters $U$ and $\boldsymbol{\delta}$, then for $i=j$ or $(i,j) \in E_\sigma$, the  expected value of $\Sigma^*_{ij}$ is $U_{ij}$ (see Theorem \ref{thm2}). Thus, to compare the performance of the two algorithms, we check the difference between the estimated values of $\Sigma^*$ and $U$ (the independent parts only) using the sup norm and also the relative error. Figure \ref{mitsakakis} shows the comparison between the Gibbs algorithm and the MH algorithm (both of which are algorithms for the generalized Wishart class) for varying values of $\lambda$ (choosing the entries of $\boldsymbol{\delta}$ chosen uniformly on a grid from $70$ to $100$ for all cases). The running time for both algorithms for each $\lambda$ is approximately 10 mins on an AMD-V 2.4 GHz processor.

\subsection{Application to temperature data}
\label{climate}

In this section, we provide an illustration of our methods on the (\cite{hadcrut}) dataset. HadCRUT3 dataset consists of monthly temperature data provided on a grid over the globe starting from 1850 till 2012. The spatial resolution is at a $5^{\circ}$ latitude and $5^{\circ}$ longitude. Here, we consider $28$ locations from the US map (out $1732$ locations worldwide) as shown in Figure \ref{usa_nby3}. Thus we have $n=157$ samples for $p=28$ variables. Our goal is to estimate the precision matrix $\Omega$ of these $28$ temperature variables. Given the spatial nature of the data, it is natural to impose sparsity on the precision matrix, with the underlying graph $G$ as shown in Figure \ref{usa_nby3}. We use an ordering $\sigma$ of the variables specified as follows: label the vertex in the bottom of the leftmost column of the grid as $1$, and then move up the columns from south to north, and the rows from east to west. We proceed to fit a concentration graph model, which assumes that $\Omega \in \mathbb{P}_{G_\sigma}$, with $G$ and $\sigma$ defined as above.  

\begin{figure}[h]
\begin{minipage}[b]{0.45\linewidth}
\includegraphics[scale=0.45]{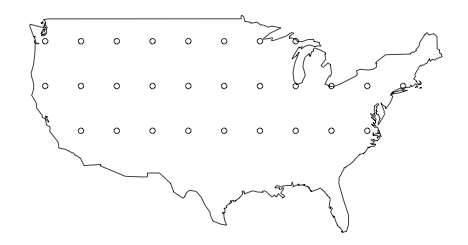}
\end{minipage}
\quad
\begin{minipage}[b]{0.45\linewidth}
\includegraphics[scale=0.45]{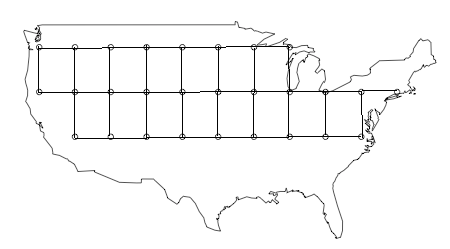}
\end{minipage}
\caption{(left) 28 US grid locations for the HadCRUT3 temperature data. (right) The underlying graph 
$G$ for the concentration graph model.}
\label{usa_nby3}
\end{figure}

The graph $G$ is not decomposable, but can be shown to be a Generalized Bartlett graph (it is an induced subgraph of an $11 \times 3$ grid). Hence, the Bayesian framework developed in this paper can be used to obtain an estimate of $\Omega$. We use two different empirical/objective priors for our analysis. In particular, our first choice is a generalized Wishart prior with scale parameter $U^1 = I_{28}$ and shape parameter ${\bf \delta^1}$ with $\delta^1_i = 1/S_{ii}$ for $1 \leq i \leq 28$. As a second choice, we use a generalized Wishart prior with scale parameter $U^2 = I_{28}$ and shape parameter ${\bf \delta^2}$ with $\delta^2_i = (S^{-1})_{ii}$ for $1 \leq i \leq p$. The Gibbs sampling procedure specified in Section 5.2 was used to generate samples from the two corresponding posterior distributions. The burn-in period was chosen to be $2000$ iterations, and the subsequent $1000$ iterations were used to compute the posterior means and credible intervals. Increasing the burn-in to more than $2000$ iterations did not lead to significant changes in the estimates, thus indicating that the chosen burn-in period is appropriate. The posterior mean estimates for both the priors are provided in Table \ref{paleo1}. The MLE for $\Omega \in \mathbb{P}_{G_\sigma}$ was also computed using the glasso function in $R$, and is provided in Table \ref{paleo1} as well. As noted in the introduction, an inherent advantage of Bayesian methods is ability to easily provide uncertainty quantification using the posterior distribution. The estimated $95 \%$ posterior credible intervals for both prior choices are provided in Table \ref{paleo2}.

{\scriptsize
\begin{table}
\scriptsize
\centering
\caption{Posterior expectations for the two priors and mle estimate of the precision matrix for the temperature data in Section \ref{climate}.} 
\begin{tabular}{ccc}
\begin{tabular}{|r|r|r|r|}
\hline
(i,j) & Bayes 1 & Bayes 2 & Glasso\\ \hline
(2,1) & -2.52 & -2.76 & -2.59 \\ 
(4,1) & -1.51 & -1.64 & -1.52 \\ 
(4,3) & -1.20 & -1.25 & -1.17 \\ 
(5,2) & -2.75 & -2.86 & -2.81 \\ 
(5,4) & -1.32 & -1.40 & -1.35 \\ 
(6,3) & -2.37 & -2.40 & -2.38 \\ 
(7,4) & -3.37 & -3.57 & -3.53 \\ 
(7,6) & -0.72 & -0.82 & -0.71 \\ 
(8,5) & -3.04 & -3.20 & -3.08 \\ 
(8,7) & 1.12 & 1.21 & 1.18 \\ 
(9,6) & -3.58 & -4.01 & -3.73 \\ 
(10,7) & -1.23 & -1.34 & -1.26 \\ 
(10,9) & -1.44 & -1.39 & -1.37 \\ 
(11,8) & -5.38 & -6.26 & -5.82 \\ 
(11,10) & -1.52 & -1.66 & -1.54 \\ 
(12,9) & -1.13 & -1.10 & -1.05 \\ 
(13,10) & -2.21 & -2.44 & -2.29 \\ 
(13,12) & -2.62 & -2.83 & -2.65 \\ 
(14,11) & -0.49 & -0.55 & -0.50 \\ 
(14,13) & -0.38 & -0.43 & -0.37 \\ 
(15,12) & -3.19 & -3.44 & -3.25 \\ 
(16,13) & -0.48 & -0.54 & -0.45 \\ 
(16,15) & -0.67 & -0.75 & -0.67 \\ 
(17,14) & -0.42 & -0.40 & -0.40 \\ 
\hline
\end{tabular} & 
\begin{tabular}{|r|r|r|r|}
\hline
(i,j) & Bayes 1 & Bayes 2 & Glasso\\ \hline
(17,16) & -0.10 & -0.09 & -0.07 \\ 
(18,15) & -2.48 & -2.83 & -2.57 \\ 
(19,16) & -0.68 & -0.71 & -0.66 \\ 
(19,18) & -1.46 & -1.50 & -1.43 \\ 
(20,17) & -3.19 & -3.50 & -3.36 \\ 
(20,19) & -2.45 & -2.89 & -2.54 \\ 
(21,18) & -2.87 & -3.01 & -2.96 \\ 
(22,19) & -1.88 & -2.10 & -1.83 \\ 
(22,21) & -2.37 & -2.63 & -2.40 \\ 
(23,20) & -3.27 & -3.62 & -3.40 \\ 
(23,22) & -2.18 & -2.29 & -2.20 \\ 
(24,21) & -0.84 & -0.96 & -0.84 \\ 
(25,22) & -5.16 & -5.34 & -5.36 \\ 
(25,24) & -0.71 & -0.66 & -0.70 \\ 
(26,24) & -0.68 & -0.69 & -0.66 \\ 
(27,25) & -7.53 & -9.25 & -8.19 \\ 
(27,26) & -1.65 & -1.75 & -1.68 \\ 
(28,27) & -3.77 & -4.25 & -3.87 \\
(1,1) & 4.62 & 5.04 & 4.65 \\ 
(2,2) & 7.26 & 7.65 & 7.39 \\ 
(3,3) & 4.88 & 4.99 & 4.80 \\ 
(4,4) & 6.79 & 7.22 & 6.98 \\ 
(5,5) & 6.77 & 7.12 & 6.84 \\ 
\hline
\end{tabular} & 
\begin{tabular}{|r|r|r|r|}
\hline
(i,j) & Bayes 1 & Bayes 2 & Glasso\\ \hline
(6,6) & 8.40 & 9.25 & 8.53 \\ 
(7,7) & 4.15 & 4.46 & 4.26 \\ 
(8,8) & 8.65 & 9.90 & 9.20 \\ 
(9,9) & 6.41 & 6.54 & 6.26 \\ 
(10,10) & 6.15 & 6.60 & 6.18 \\ 
(11,11) & 6.54 & 7.47 & 6.93 \\ 
(12,12) & 8.01 & 8.52 & 8.02 \\ 
(13,13) & 6.31 & 7.04 & 6.33 \\ 
(14,14) & 0.77 & 0.79 & 0.75 \\ 
(15,15) & 6.74 & 7.55 & 6.89 \\ 
(16,16) & 1.23 & 1.32 & 1.18 \\ 
(17,17) & 3.53 & 3.79 & 3.61 \\ 
(18,18) & 6.05 & 6.45 & 6.14 \\ 
(19,19) & 6.54 & 7.31 & 6.52 \\ 
(20,20) & 9.91 & 11.10 & 10.34 \\ 
(21,21) & 7.23 & 7.89 & 7.36 \\ 
(22,22) & 10.81 & 11.47 & 10.99 \\ 
(23,23) & 5.03 & 5.46 & 5.14 \\ 
(24,24) & 2.31 & 2.36 & 2.27 \\ 
(25,25) & 14.69 & 16.93 & 15.62 \\ 
(26,26) & 4.71 & 4.99 & 4.72 \\ 
(27,27) & 11.53 & 13.52 & 12.20 \\ 
(28,28) & 5.46 & 6.04 & 5.55 \\ \hline
\end{tabular}
\end{tabular}
\label{paleo1}
\end{table}

\begin{table}
\scriptsize
\centering
\caption{$95\%$ credible interval for elements of the precision matrix corresponding to the two priors for the temperature data in Section \ref{climate}.}
\begin{tabular}{ccc}
\begin{tabular}{|r|r|r|}
\hline 
 & \multicolumn{2}{c|}{95\% CI for the posterior mean}\\ \hline
(i,j) & Bayes 1 & Bayes 2\\ \hline
(2,1) & (-2.59,-2.46) & (-2.83,-2.69)  \\ 
(4,1) & (-1.56,-1.47) & (-1.69,-1.59)  \\ 
(4,3) & (-1.24,-1.16) & (-1.29,-1.20)  \\ 
(5,2) & (-2.82,-2.68) & (-2.92,-2.79)  \\ 
(5,4) & (-1.36,-1.28) & (-1.44,-1.36)  \\ 
(6,3) & (-2.43,-2.30) & (-2.46,-2.33)  \\ 
(7,4) & (-3.43,-3.31) & (-3.64,-3.50)  \\ 
(7,6) & (-0.76,-0.68) & (-0.86,-0.78)  \\ 
(8,5) & (-3.10,-2.98) & (-3.26,-3.15)  \\ 
(8,7) & (1.10,1.15) & (1.18,1.24)  \\ 
(9,6) & (-3.67,-3.50) & (-4.09,-3.92)  \\ 
(10,7) & (-1.27,-1.20) & (-1.37,-1.30)  \\ 
(10,9) & (-1.49,-1.39) & (-1.44,-1.33)  \\ 
(11,8) & (-5.47,-5.28) & (-6.37,-6.16)  \\ 
(11,10) & (-1.56,-1.48) & (-1.69,-1.62)  \\ 
(12,9) & (-1.18,-1.08) & (-1.15,-1.05)  \\ 
(13,10) & (-2.27,-2.16) & (-2.50,-2.38)  \\ 
(13,12) & (-2.69,-2.55) & (-2.90,-2.75)  \\ 
(14,11) & (-0.51,-0.48) & (-0.57,-0.54)  \\ 
(14,13) & (-0.40,-0.37) & (-0.45,-0.41)  \\ 
(15,12) & (-3.26,-3.12) & (-3.52,-3.37)  \\ 
(16,13) & (-0.50,-0.45) & (-0.57,-0.52)  \\ 
(16,15) & (-0.70,-0.64) & (-0.78,-0.72)  \\ 
(17,14) & (-0.43,-0.40) & (-0.41,-0.39)  \\ 
 \hline
\end{tabular} &
\begin{tabular}{|r|r|r|}
\hline 
 & \multicolumn{2}{c|}{95\% CI for the posterior mean}\\ \hline
(i,j) & Bayes 1 & Bayes 2\\ \hline
(17,16) & (-0.11,-0.08) & (-0.11,-0.08)  \\ 
(18,15) & (-2.54,-2.42) & (-2.89,-2.77)  \\ 
(19,16) & (-0.70,-0.65) & (-0.74,-0.68)  \\ 
(19,18) & (-1.51,-1.41) & (-1.55,-1.45)  \\ 
(20,17) & (-3.26,-3.12) & (-3.56,-3.43)  \\ 
(20,19) & (-2.51,-2.38) & (-2.96,-2.82)  \\ 
(21,18) & (-2.94,-2.80) & (-3.08,-2.94)  \\ 
(22,19) & (-1.95,-1.82) & (-2.17,-2.03)  \\ 
(22,21) & (-2.44,-2.30) & (-2.70,-2.55)  \\ 
(23,20) & (-3.34,-3.20) & (-3.69,-3.54)  \\ 
(23,22) & (-2.24,-2.12) & (-2.35,-2.23)  \\ 
(24,21) & (-0.88,-0.81) & (-0.99,-0.92)  \\ 
(25,22) & (-5.26,-5.06) & (-5.45,-5.23)  \\ 
(25,24) & (-0.75,-0.67) & (-0.71,-0.62)  \\ 
(26,24) & (-0.72,-0.64) & (-0.73,-0.65)  \\ 
(27,25) & (-7.67,-7.39) & (-9.41,-9.09)  \\ 
(27,26) & (-1.70,-1.60) & (-1.81,-1.70)  \\ 
(28,27) & (-3.85,-3.70) & (-4.33,-4.16)  \\ 
(1,1) & (4.54,4.70) & (4.97,5.12)  \\ 
(2,2) & (7.14,7.38) & (7.53,7.77)  \\ 
(3,3) & (4.81,4.96) & (4.92,5.05)  \\ 
(4,4) & (6.70,6.88) & (7.11,7.33)  \\ 
(5,5) & (6.67,6.87) & (7.02,7.21)  \\ 
 \hline
\end{tabular} & 
\begin{tabular}{|r|r|r|}
\hline 
 & \multicolumn{2}{c|}{95\% CI for the posterior mean}\\ \hline
(i,j) & Bayes 1 & Bayes 2\\ \hline
(6,6) & (8.26,8.54) & (9.12,9.38)  \\ 
(7,7) & (4.09,4.21) & (4.39,4.53)  \\ 
(8,8) & (8.52,8.78) & (9.76,10.04)  \\ 
(9,9) & (6.31,6.51) & (6.43,6.64)  \\ 
(10,10) & (6.05,6.24) & (6.51,6.69)  \\ 
(11,11) & (6.44,6.64) & (7.37,7.57)  \\ 
(12,12) & (7.89,8.14) & (8.40,8.65)  \\ 
(13,13) & (6.22,6.40) & (6.94,7.13)  \\ 
(14,14) & (0.75,0.78) & (0.78,0.80)  \\ 
(15,15) & (6.64,6.84) & (7.45,7.65)  \\ 
(16,16) & (1.21,1.25) & (1.30,1.34)  \\ 
(17,17) & (3.47,3.59) & (3.73,3.84)  \\ 
(18,18) & (5.96,6.13) & (6.37,6.53)  \\ 
(19,19) & (6.45,6.64) & (7.21,7.41)  \\ 
(20,20) & (9.78,10.05) & (10.96,11.24)  \\ 
(21,21) & (7.11,7.36) & (7.76,8.02)  \\ 
(22,22) & (10.67,10.96) & (11.32,11.62)  \\ 
(23,23) & (4.95,5.12) & (5.38,5.55)  \\ 
(24,24) & (2.27,2.35) & (2.32,2.40)  \\ 
(25,25) & (14.48,14.89) & (16.71,17.15)  \\ 
(26,26) & (4.63,4.80) & (4.90,5.07)  \\ 
(27,27) & (11.37,11.70) & (13.34,13.70)  \\ 
(28,28) & (5.37,5.56) & (5.94,6.14)  \\ \hline
\end{tabular}
\end{tabular}
\label{paleo2}
\end{table}}

\newpage
\newpage

\pagenumbering{gobble}
\appendix

\bibliographystyle{rss}
\bibliography{nondecom_bibliography}

\begin{thebibliography}{26}
\expandafter\ifx\csname natexlab\endcsname\relax\def\natexlab#1{#1}\fi
\expandafter\ifx\csname url\endcsname\relax
  \def\url#1{\texttt{#1}}\fi
\expandafter\ifx\csname urlprefix\endcsname\relax\def\urlprefix{URL}\fi

\bibitem[{Asci and Piccioni(2007)}]{ascipiccioni2007}
Asci, C. and Piccioni, M. (2007) Functionally compatible local characteristics
  for the local specification of priors in graphical models.
\newblock \textit{Scand. J. Statist.}, \textbf{34}, 829--840.

\bibitem[{Asmussen and Glynn(2011)}]{asmussenglynn}
Asmussen, S. and Glynn, P.~W. (2011) A new proof of convergence of mcmc via the
  ergodic theorem.
\newblock \textit{Statistics \& Probability Letters}, \textbf{81}, 1482--1485.

\bibitem[{Banerjee et~al.(2008)Banerjee, Ghaoui and d’Aspremont}]{banerjee}
Banerjee, O., Ghaoui, L.~E. and d’Aspremont, A. (2008) Model selection
  through sparse maximum likelihood estimation for multivariate gaussian or
  binary data.
\newblock \textit{Journal of Machine Learning Research}, \textbf{9}, 485--516.

\bibitem[{Ben-David et~al.(2015)Ben-David, Li, Massam and
  Rajaratnam}]{bendavidrajaratnam}
Ben-David, E., Li, T., Massam, H. and Rajaratnam, B. (2015) High dimensional
  bayesian inference for gaussian directed acyclic graph models.

\bibitem[{Brohan et~al.(2006)Brohan, Kennedy, Harris, Tett and Jones}]{hadcrut}
Brohan, P., Kennedy, J.~J., Harris, I., Tett, S. F.~B. and Jones, P.~D. (2006)
  Uncertainty estimates in regional and global observed temperature changes: A
  new data set from 1850.
\newblock \textit{Journal of Geophysical Research: Atmospheres},
  \textbf{111(D12)}, 1984--2012.

\bibitem[{Chang et~al.(2005)Chang, Nuyten, Sneddon, Hastie, R., Dai, He,
  van’t Veer, Bartelink, van~de Rijn, Brown and van~de
  Vijver}]{changetal2005}
Chang, H.~Y., Nuyten, D. S.~A., Sneddon, J.~B., Hastie, T., T., R., Sorlie, T.,
  Dai, H., He, Y.~D., van’t Veer, L.~J., Bartelink, H., van~de Rijn, M.,
  Brown, P.~O. and van~de Vijver, M.~J. (2005) Robustness, scalability, and
  integration of a wound-response gene expression signature in predicting
  breast cancer survival.
\newblock \textit{Proceedings of the National Academy of Sciences of the United
  States of America}, \textbf{102(10)}, 3738--3743.

\bibitem[{Daniels and Pourahmadi(2002)}]{danielspourahmadi}
Daniels, M.~J. and Pourahmadi, M. (2002) Bayesian analysis of covariance
  matrices and dynamic models for longitudinal data.
\newblock \textit{Biometrika}, \textbf{89}, 553--566.

\bibitem[{Davis(2006)}]{Davis}
Davis, T. (2006) \textit{Direct Methods for Sparse Linear Systems}.
\newblock SIAM.

\bibitem[{Dawid and Lauritzen(1993)}]{dawidlauritzen1993}
Dawid, A.~P. and Lauritzen, S.~L. (1993) Hyper markov laws in the statistical
  analysis of decomposable graphical models.
\newblock \textit{Ann. Statist.}, \textbf{21}, 1272--1317.

\bibitem[{Friedman et~al.(2007)Friedman, Hastie and Tibshirani}]{glasso}
Friedman, J., Hastie, T. and Tibshirani, R. (2007) Sparse inverse covariance
  estimation with the graphical lasso.
\newblock \textit{Biostatistics}, \textbf{9}, 432--441.

\bibitem[{Friedman et~al.(2010)Friedman, Hastie and
  Tibshirani}]{friedmanetal2010}
Friedman, J.~H., Hastie, T. and Tibshirani, R. (2010) Applications of the lasso
  and grouped lasso to the estimation of sparse graphical models.
\newblock \textit{Tech. rep.}, Department of Statistics, Stanford University.

\bibitem[{Khare et~al.(2015)Khare, Oh and Rajaratnam}]{kharesangraja2013}
Khare, K., Oh, S.~Y. and Rajaratnam, B. (2015) A convex pseudo-likelihood
  framework for high dimensional partial correlation estimation with
  convergence guarantees.
\newblock \textit{to appear in Journal of the Royal Statistical Society B}.

\bibitem[{Khare and Rajaratnam(2011)}]{khareraja}
Khare, K. and Rajaratnam, B. (2011) Wishart distributions for decomposable
  covariance graph models.
\newblock \textit{Annals of Statistics}, \textbf{39}, 514--555.

\bibitem[{Lauritzen(1996)}]{lauritzen}
Lauritzen, S.~L. (1996) \textit{Graphical Models}.
\newblock Oxford Univ. Press, New York.

\bibitem[{Lenkoski(2013)}]{lenkoski2013}
Lenkoski, A. (2013) A direct sampler for g-wishart variates.
\newblock \textit{Stat}, \textbf{2}, 119--128.

\bibitem[{Letac and Massam(2007)}]{letacmassam}
Letac, G. and Massam, H. (2007) Wishart distributions for decomposable graphs.
\newblock \textit{Ann.Statist.}, \textbf{35}, 1278--1323.

\bibitem[{Liu et~al.(1994)Liu, Wong and Kong}]{liu:wong:kong:1994}
Liu, J.~S., Wong, W. and Kong, A. (1994) Covariance structure of the gibbs
  sampler with applications to the comparisons of estimators and augmentation
  schemes.
\newblock \textit{Biometrika}, \textbf{81}, 27--40.

\bibitem[{Meyn and Tweedie(1993)}]{meyntweedie}
Meyn, S.~P. and Tweedie, R.~L. (1993) \textit{Markov Chains and Stochastic
  Stability}.
\newblock Springer-Verlag, London.

\bibitem[{Mitsakakis et~al.(2011)Mitsakakis, Massam and
  Escobar}]{mitsakakis2011}
Mitsakakis, N., Massam, H. and Escobar, M.~D. (2011) A metropolis-hastings
  based method for sampling from the g-wishart distribution in gaussian
  graphical models.
\newblock \textit{Electronic Journal of Statistics}, \textbf{5}, 18–30.

\bibitem[{Parraal and Schefflerb(1997)}]{andreas}
Parraal, A. and Schefflerb, P. (1997) Characterizations and algorithmic
  applications of chordal graph embedding.
\newblock \textit{Discrete Applied Mathematics}, \textbf{79}, 171--188.

\bibitem[{Paulsen et~al.(1989)Paulsen, Power and Smith}]{paulsenetal1989}
Paulsen, V.~I., Power, S.~C. and Smith, R.~R. (1989) Schur products and matrix
  completions.
\newblock \textit{Journal of Functional Analysis}, \textbf{85}, 151--178.

\bibitem[{Peng et~al.(2009)Peng, Wang, Zhou and Zhu}]{pengetal2009}
Peng, J., Wang, P., Zhou, N. and Zhu, J. (2009) Partial correlation estimation
  by joint sparse regression models.
\newblock \textit{J. of Amer. Statist. Assoc.}, \textbf{104}, 735--746.

\bibitem[{Rajaratnam et~al.(2008)Rajaratnam, Massam and
  Carvalho}]{rajamassamcarv}
Rajaratnam, B., Massam, H. and Carvalho, C. (2008) Flexible covariance
  estimation in graphical models.
\newblock \textit{Ann. Statist.}, \textbf{36}, 2818--2849.

\bibitem[{Roverato(2000)}]{roverato2000}
Roverato, A. (2000) Cholesky decomposition of a hyper inverse wishart matrix.
\newblock \textit{Biometrika}, \textbf{87}, 99--112.

\bibitem[{Roverato(2002)}]{roverato2002}
Roverato, A. (2002) Hyper inverse wishart distribution for non-decomposable
  graphs and its application to bayesian inference for gaussian graphical
  models.
\newblock \textit{Scandinavian Journal of Statistics}, \textbf{29}, 391--411.

\bibitem[{Wang and Carvalho(2010)}]{carvalhowang}
Wang, H. and Carvalho, C.~M. (2010) Simulation of hyper-inverse wishart
  distributions for non-decomposable graphs.
\newblock \textit{Electronic Journal of Statistics}, \textbf{4}, 1470--1475.

\end{thebibliography}

\newpage
\pagenumbering{arabic}
\section*{Supplemental Document }
\label{suppsec}
\section{Proofs}
\subsection{Proof of Theorem \ref{thm1}}
\label{proofthm1}

Note that,
\begin{eqnarray*}
tr \left(LDL^T U \right)&=& tr \left(DL^TUL \right) \\
&=& \sum_{j=1}^p D_j L_{.j}^T U L_{.j} \mbox{ [$L_{.j}$ is the $j$-th column of $L$]} \\
\end{eqnarray*}
where $L_{.j}$ is the $j$-th column of $L$. Since $U$ is positive definite, let $\lambda>0$ be its minimum eigenvalue. Hence,
\begin{eqnarray*}
tr \left(LDL^T U \right) &\geq & \lambda \sum_{j=1}^p D_j L_{.j}^T  L_{.j} \geq  \lambda \sum_{j=1}^p \left(D_j + D_j L_{I_j}^T  L_{I_j} \right)	 \\
\end{eqnarray*}
where $L_{I_j}$ denote the vector of independent entries of $L_{.j}$ and the length of $L_{I_j}$ is $\nu_j$.
Hence, there exists $c_1>0$ such that,
\[\pi^*_{U,\boldsymbol{\delta}}(L_I,D) \leq c_1 \prod_{j=1}^p D_j^{\frac{\delta_j}{2}+\nu_j} \exp \left(-\frac{\lambda D_j}{2}\right) \exp \left(-\frac{\lambda D_j}{2} L_{I_j}^T  L_{I_j}\right).\]
Now, 
\[\int_{L_{I_j} \in \mathbb{R}^{\nu_j}} \exp \left(-\frac{\lambda D_j}{2} L_{I_j}^T  L_{I_j} \right) dL_{I_j}=(2\pi)^{\frac{\nu_j}{2}} \lambda^{-\frac{\nu_j}{2}} D_j^{-\frac{\nu_j}{2}}\]
and we know that
\begin{eqnarray*}
&& \int D_j^{\frac{\delta_j}{2}+\nu_j-\frac{\nu_j}{2}} \exp \left(-\frac{\lambda D_j}{2}\right) dD_j \\
&=& \int D_j^{\frac{\delta_j+\nu_j}{2}} \exp \left(-\frac{\lambda D_j}{2}\right) dD_j \\
&<& \infty,\\
\end{eqnarray*}
since $\delta_j>0$. Hence $\pi^*_{U,\boldsymbol{\delta}}$ can be normalized to a proper density $\pi_{U,\boldsymbol{\delta}}$. Next we prove that $\forall i \geq j$, $\Omega_{ij}$ has finite expectation under this density. Since $\Omega_{ij}=\sum_{k \leq j} L_{ik}L_{jk}D_k$, it is enough to show that $\forall k \leq j$, the expectation of $L_{ik}L_{jk}D_k$ exists. Let us consider the following cases,
\begin{description}
\item[Case 1] $i>j>k$\\
\[L_{ik}L_{jk}D_k = (L_{ik}\sqrt{D_k})(L_{jk}\sqrt{D_k})\]
\item[Case 2] $i=j>k$\\
\[L_{ik}L_{jk}D_k = (|L_{ik}|\sqrt{D_k})^2\]
\item[Case 3] $i=j=k$\\
\[L_{ik}L_{jk}D_k = D_k\]
\item[Case 4] $i>j=k$\\
\[L_{ik}L_{jk}D_k = \sqrt{D_k}(L_{ik}\sqrt{D_k})\]
\end{description}
It follows from equation \ref{defpistar}, that
\begin{eqnarray*}
|L_{ik}L_{jk}D_k|\pi^*_{U,\boldsymbol{\delta}}(L_I,D) &\leq & c_1 |L_{ik}L_{jk}D_k| \prod_{l=1}^p D_l^{\frac{\delta_l}{2}+\nu_l} \exp \left(-\frac{\lambda D_l}{2}\right) \exp \left(-\frac{\lambda D_l}{2} L_{.l}^T  L_{.l}\right)\\
&=& |L_{ik}L_{jk}D_k| \prod_{l=1}^p  \exp \left(-\frac{\lambda D_l}{4}\right) \exp \left(-\frac{\lambda D_l}{4} L_{.l}^T  L_{.l}\right)\\
&\times & c_1 \prod_{l=1}^p D_l^{\frac{\delta_l}{2}+\nu_l} \exp \left(-\frac{\lambda D_l}{4}\right) \exp \left(-\frac{\lambda D_l}{4} L_{.l}^T  L_{.l}\right) \hspace{2cm} (*)
\end{eqnarray*}
Note that both $x^2\exp \left(- \frac{\lambda x^2}{4} \right)$ and $|x|\exp \left(- \frac{\lambda x^2}{4} \right)$ are uniformly bounded above in $x$. It follows by $(*)$ that in all the cases considered above, $|L_{ik}L_{jk} D_k| \exp \left(-\frac{\lambda D_k}{4}\right) \exp \left(-\frac{\lambda D_k}{4} L_{.k}^T  L_{.k}\right)$ is uniformly bounded in $(L_I,D)$. Since we have already established that 
\[\prod_{j=1}^p D_j^{\frac{\delta_j}{2}+\nu_j} \exp \left(-\frac{\lambda D_j}{4}\right) \exp \left(-\frac{\lambda D_j}{4} L_{.j}^T  L_{.j}\right)\] is integrable, it follows that $L_{ik}L_{jk}D_k$ has finite expectation under $\pi_{U,\boldsymbol{\delta}}$.

\subsection{Proof of Theorem \ref{thm2}}
\label{proofthm2}
Let 
\[z_G(U,\boldsymbol{\delta}) := \int \exp \left(-\frac{1}{2}tr(\Omega U) + \sum_{k \leq p} \frac{\delta_k}{2} \log(D_k) \right) d\Omega\]

Let $\Omega_k$ denote the principal submatrix corresponding to the first $k$ rows and columns of $\Omega$. Then $D_k = \frac{|\Omega_k|}{|\Omega_{k-1}|}$. Thus 
\[\sum_{k \leq p} \frac{\delta_k}{2} \log(D_k) = \sum_{k \leq p} \log(|\Omega_k|) \left(\frac{\delta_k-\delta_{k+1}}{2} \right) \]
where $\delta_{p+1}=0$. Hence,
\[\int \frac{1}{z_G(U,\boldsymbol{\delta})} \exp \left(-\frac{1}{2}tr(\Omega U) + \sum_{k \leq p} \log(|\Omega_k|)\left(\frac{\delta_k-\delta_{k+1}}{2} \right) \right) d\Omega = 1\]
Differentiating both sides w.r.t. $\Omega_{ij}$ and assuming that we can take the derivative inside the integral, we get
\[E\left(-U_{ij} + \sum_{k \leq p} \frac{\partial \log(|\Omega_k|)}{\partial \Omega_{ij}} \left(\frac{\delta_k-\delta_{k+1}}{2} \right) \right) = 0 \mbox{ \hspace{5cm}}(\bigstar)\]
Since $\frac{\partial \log(|\Omega_k|)}{\partial \Omega_{ij}} = 2(\Omega_k^{-1})_{ij}$ for $k \geq \max(i,j)$ and $0$ otherwise, we observe that
\[U_{ij} = E \left( \sum_{\max(i,j) \leq k \leq p} (\Omega_k^{-1})_{ij} (\delta_k-\delta_{k+1}) \right)\]

\noindent 
We now rigorously establish the validity of exchanging the derivative and the integral mentioned above.
Define, 
\[I_{G_\sigma} = \{\Omega \in \mathbb{M}_p| \Omega_{ii} \geq 0, 1\leq i \leq p, \mbox{ and } \Omega_{ij}=0 \mbox{ if } (i,j) \notin \widetilde{E}\} \]
\noindent
We see that $\mathbb{P}_{G_\sigma}$ is an open subset of $I_{G_\sigma}$ and has a positive measure (with respect to induced Lebesgue measure on $I_{G_\sigma}$). Thus $\pi_{U,\boldsymbol{\delta}}()$ can also be seen as a density on $I_{G_\sigma}$ taking value $0$ on $I_{G_\sigma}-\mathbb{P}_{G_\sigma}$.
\noindent
We claim that $\pi_{U,\boldsymbol{\delta}}()$ is continuous on $I_{G_\sigma}$. Clearly, it is enough to prove this claim on the boundary $\mathbb{B}_{G_\sigma}$ between $\mathbb{P}_{G_\sigma}$ and $I_{G_\sigma}-\mathbb{P}_{G_\sigma}$. Since $\mathbb{P}_{G_\sigma}$ is open, $\mathbb{B}_{G_\sigma} \subset I_{G_\sigma}-\mathbb{P}_{G_\sigma}$. Thus for $\Omega \in \mathbb{B}_{G_\sigma}$, $\pi_{U,\boldsymbol{\delta}}(\Omega)=0$. 

Let $\Omega \in \mathbb{B}_{G_\sigma}$ and $\{\Omega^{(n)} \}$ be a sequence in $\mathbb{P}_{G_\sigma}$ such that $\Omega^{(n)} \rightarrow \Omega$. We want to show $\pi_{U,\boldsymbol{\delta}}(\Omega^{(n)}) \rightarrow \pi_{U,\boldsymbol{\delta}}(\Omega)=0$. If, $\Omega^{(n)}_k$ denote the submatrix of $\Omega^{(n)}$ corresponding to the first $k$ rows and columns, then
\[\Omega^{(n)}_{k+1} = \left( \begin{array}{cc}
\Omega^{(n)}_k & (\Omega^{(n)}_{k+1,1:k})^T \\
\Omega^{(n)}_{k+1,1:k} & \Omega^{(n)}_{k+1,k+1} \\
\end{array} \right)\] 
and \[\frac{|\Omega^{(n)}_{k+1}|}{|\Omega^{(n)}_{k}|} = \left( \Omega^{(n)}_{k+1,k+1} - (\Omega^{(n)}_{k+1,1:k})^T (\Omega^{(n)}_k)^{-1} \Omega^{(n)}_{k+1,1:k} \right) .\]
\noindent
Since $\Omega^{(n)}$ is positive definite, the above ratio is postie and less than equal to $\Omega^{(n)}_{k+1,k+1}$. But $\Omega^{(n)}_{k+1,k+1} \rightarrow \Omega_{k+1,k+1}$, meaning that the sequence $\{|\Omega^{(n)}_{k+1}|/|\Omega^{(n)}_{k}|\}$  is positive and bounded above. Since $\Omega$ is not positive definite, either $\Omega_{11}=0$, or there exists $k_0$ such that $|\Omega_{k_0+1,k_0+1}| = 0$ and $|\Omega_{k_0,k_0}| > 0$.

The exponential term in $\pi_{U,\boldsymbol{\delta}}(\Omega^{(n)})$  is $\exp \left(-tr(\Omega^{(n)}U)/2 \right) < 1$ and in both cases mentioned above we have atleast one term outside the exponential converging to $0$, while the rest are bounded above; which proves that  $\pi_{U,\boldsymbol{\delta}}(\Omega^{(n)}) \rightarrow 0$.

Next for $\Omega \in \mathbb{P}_{G_\sigma}$ we find the partial derivatives and double derivatives of $\pi_{U,\boldsymbol{\delta}}(\Omega)$. For notational convenience we replace $\pi_{U,\boldsymbol{\delta}}$ by $\pi()$ for the rest of this proof. We have already seen that for $\Omega \in \mathbb{P}_{G_\sigma}$,
\[\frac{\partial \pi(\Omega)}{\partial \Omega_{ij}} = \pi(\Omega) \left[-U_{ij} + \sum_{1 \leq k \leq p} (\Omega_k^{-1})_{ij} (\delta_k-\delta_{k+1})\right]\]
Note that for $k < \max(i,j)$, $(\Omega_k^{-1})_{ij}=0$. Differentiating the above equation again we get,
\[\frac{\partial^2 \pi(\Omega)}{\partial \Omega_{ij}^2} = \pi(\Omega) \left[-U_{ij} +\sum_{1 \leq k \leq p} (\Omega_k^{-1})_{ij} (\delta_k-\delta_{k+1})\right]^2 + \pi(\Omega) \left[\sum_{1 \leq k \leq p}  (\delta_k-\delta_{k+1}) \frac{\partial (\Omega_k^{-1})_{ij}}{\partial \Omega_{ij}}\right] \ldots (*)\]
\noindent
For a symmetric non-singular matrix $A$,
\begin{eqnarray*}
\frac{\partial (A^{-1})_{ij}}{\partial A_{ij}} &=& \frac{\partial}{\partial A_{ij}} \left[\frac{|A^{(ij)}|}{|A|} \right] \\
&=& \left[ -\frac{2|A^{(ij)}|^2}{|A|^2} + \frac{1}{|A|} \frac{\partial |A^{(ij)}|}{\partial A_{ij}}\right] \hspace{2cm} \ldots (**)
\end{eqnarray*}

\noindent
For $\Omega$ in the interior of $I_{G_\sigma}-\mathbb{P}_{G_\sigma}$, above partial derivatives and double derivatives exists and equals $0$. Now, fix $i>j$ and consider $\Omega$ in $\mathbb{B}_{G_\sigma}$. Let us define $\Omega^{h}$ to be a $p \times p$ matrix such that, $\Omega^{h}_{rs} = \Omega_{rs}$ if $(r,s) \neq (i,j)$ and $\Omega^{h}_{ij} = \Omega_{ij}+h$.
If $\Omega^{h} \in I_{G_\sigma}-\mathbb{P}_{G_\sigma}$ then $\frac{\pi(\Omega^{h}) - \pi(\Omega)}{h} = 0$. We now show that for $h_n \rightarrow 0$, such that  $\Omega^{h_n} \in \mathbb{P}_{G_\sigma},\forall n$
\[\frac{\pi(\Omega^{h_n}) - \pi(\Omega)}{h_n} = \frac{\pi(\Omega^{h_n})}{h_n} \rightarrow 0.\] 

\noindent
This will show that for $\Omega \in \mathbb{B}_{G_\sigma}$, $\frac{\partial \pi(\Omega)}{\partial \Omega_{ij}}$ exist and equal $0$. Since $\Omega^{h_n}$ is positive definite $\Omega_{11}>0$. Hence $\exists \, k_0$ such that $|\Omega_{k_0-1}|>0$ and $|\Omega_{k_0}|=0$. Consider $\frac{|\Omega^{h_n}_{k_0}|}{|\Omega^{h_n}_{k_0-1}|}$. Note that $|\Omega^{h_n}_{k_0-1}| \rightarrow |\Omega_{k_0-1}|>0$, $|\Omega^{h_n}_{k_0}| \rightarrow |\Omega_{k_0}|=0$ and $|\Omega^{h}_{k_0}|$ is a quadratic in $h$, which equals $0$ at $h=0$. Hence $|\Omega^{h}_{k_0}|$ can be written as $ah^2+bh$. If $\delta_{k_0}>2$, then
\begin{eqnarray*}
\frac{1}{h_n} \left(\frac{|\Omega^{h_n}_{k_0}|}{|\Omega^{h_n}_{k_0-1}|} \right)^{(\delta_{k_0}/2)} &=& 
\frac{1}{|\Omega^{h_n}_{k_0-1}|^{(\delta_{k_0}/2)}} \frac{(ah_n^2+bh_n)^{(\delta_{k_0}/2)}}{h_n} \\
&=& \frac{1}{|\Omega^{h_n}_{k_0-1}|^{(\delta_{k_0}/2)}} h_n^{(\delta_{k_0}/2-1)} (ah_n+b)^{(\delta_{k_0}/2)}\\
&\rightarrow & 0 \\  
\end{eqnarray*}
\noindent
Now note that the other terms in $\frac{\pi(\Omega^{h_n})}{h_n}$ are either  $\left( \frac{|\Omega^{h_n}_{k}|}{|\Omega^{h_n}_{k-1}|} \right)^{(\delta_k/2)}$ or $\exp(-tr(\Omega^{h_n}U)/2)$; which are all positive and bounded above. Thus if $\delta_k>2 ,\, \forall k$ the partial derivatives exist and equal $0$ at $\mathbb{B}_{G_\sigma}$. Before exploring the partial double derivatives we state the following fact, which is straightforward to establish.
\begin{fact}
If $p(\Omega)$ is a polynomial in elements of $\Omega \in \mathbb{P}_{G_\sigma}$ and $U$ is p.d. then $|p(\Omega)|\exp(-tr(U\Omega))$
as a function of $\Omega$ is uniformly bounded above.
\end{fact}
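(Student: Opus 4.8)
The plan is to bound $|p(\Omega)|$ by a polynomial in the single scalar quantity $t := tr(\Omega)$, and then let the exponential factor $\exp(-tr(\Omega U))$ absorb this polynomial growth. First I would let $\lambda > 0$ denote the smallest eigenvalue of $U$. For any positive semidefinite matrix $\Omega$, writing its spectral decomposition $\Omega = \sum_i s_i v_i v_i^T$ with $s_i \ge 0$ and $\{v_i\}$ orthonormal, one gets $tr(\Omega U) = \sum_i s_i\, v_i^T U v_i \ge \lambda \sum_i s_i \|v_i\|^2 = \lambda\, tr(\Omega)$; this holds in particular for every $\Omega \in \mathbb{P}_{G_\sigma}$.

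Next I would control the entries of $\Omega$ by $t$. Every $\Omega \in \mathbb{P}_{G_\sigma}$ is positive definite, so $\Omega_{ii} > 0$ for all $i$; and by the Cauchy--Schwarz inequality for the inner product $(x,y) \mapsto x^T \Omega y$ we have $|\Omega_{ij}| \le \sqrt{\Omega_{ii}\Omega_{jj}} \le \max_k \Omega_{kk} \le tr(\Omega) = t$. Hence every entry of $\Omega$ lies in $[-t,t]$, so if $p$ has total degree $d$ then $|p(\Omega)| \le q(t)$, where $q(t) := C(1+t)^d$ and $C$ is the sum of the absolute values of the coefficients of $p$ (each monomial of degree at most $d$ in quantities bounded by $t \ge 0$ is at most $(1+t)^d$ in absolute value); note $q$ has nonnegative coefficients.

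Combining the two estimates, for every $\Omega \in \mathbb{P}_{G_\sigma}$ we obtain
\[ |p(\Omega)|\,\exp\!\left(-tr(\Omega U)\right) \;\le\; q(t)\,e^{-\lambda t}, \qquad t = tr(\Omega) \ge 0. \]
The function $t \mapsto q(t)e^{-\lambda t}$ is continuous on $[0,\infty)$ and tends to $0$ as $t \to \infty$, hence attains a finite supremum $M$ on $[0,\infty)$; therefore $|p(\Omega)|\exp(-tr(\Omega U)) \le M$ uniformly over $\mathbb{P}_{G_\sigma}$, which is the assertion.

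The result is elementary, so there is no real obstacle; the only point requiring any care is that $\mathbb{P}_{G_\sigma}$ is unbounded and non-compact, so one cannot simply invoke continuity together with compactness — the argument genuinely uses the exponential dominating the polynomial, and the reduction to the single scalar $t = tr(\Omega)$ is what makes this transparent. A secondary, also minor, subtlety is the bound $|\Omega_{ij}| \le tr(\Omega)$, which relies on positive definiteness of $\Omega$ rather than merely on $\Omega$ being symmetric.
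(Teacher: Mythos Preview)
Your argument is correct. The paper itself does not prove this fact; it merely states it as ``straightforward to establish'' and uses it as a tool inside the proof of Theorem~\ref{thm2}. Your reduction to the single scalar $t = tr(\Omega)$ via the two inequalities $tr(\Omega U) \ge \lambda\, tr(\Omega)$ and $|\Omega_{ij}| \le tr(\Omega)$ (the latter from positive definiteness) is exactly the kind of elementary argument the paper is gesturing at, and it fills the gap cleanly.
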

\noindent
If $\delta_k >4,\forall k$ the existence of partial double derivatives on $\mathbb{B}_{G_\sigma}$ can proved with the the help of the above fact and  similar arguments as in the case of single partial derivatives. 

If $\pi_{ij}(\Omega_{ij})$ is the marginal density of $\Omega_{ij}$ then,
\[\pi_{ij}(\Omega_{ij}) = \int \pi(\Omega) d (\Omega\setminus \Omega_{ij})\]

Note that $\pi_{ij}$ has support over whole real line. For arbitrary $\Omega_{ij}^0 \in \mathbb{R}$,
\begin{eqnarray*}
&& \left|\frac{\pi_{ij}(\Omega_{ij}^0+h)-\pi_{ij}(\Omega_{ij}^0)}{h} - \int \frac{\partial \pi(\Omega)}{\partial \Omega_{ij}} d (\Omega\setminus \Omega_{ij})\right| \\
&\leq & \int \left| \frac{\pi(\Omega\setminus \Omega_{ij},\Omega_{ij}^0+h)-\pi(\Omega\setminus \Omega_{ij},\Omega_{ij}^0)}{h} - \frac{\partial \pi(\Omega)}{\partial \Omega_{ij}}|_{\Omega_{ij}=\Omega_{ij}^0} \right| d (\Omega\setminus \Omega_{ij}) \\
& \leq & h \int \left| \frac{\partial^2 \pi(\Omega)}{\partial \Omega_{ij}^2} |_{\Omega_{ij}=\Omega_{ij}^0+z}\right|  d (\Omega\setminus \Omega_{ij}) \mbox{ [for some $z \in [0,h]$]}\\
&\leq & h \int \sup_{\{\Omega|\Omega_{ij} \in (\Omega_{ij}^0,\Omega_{ij}^0+h)\}} \left| \frac{\partial^2 \pi(\Omega)}{\partial \Omega_{ij}^2} \right|  d (\Omega\setminus \Omega_{ij}) \\
&=& h \int \sup_{\{\Omega \in \mathbb{P}_{G_\sigma}|\Omega_{ij} \in (\Omega_{ij}^0,\Omega_{ij}^0+h)\}} \left| \frac{\partial^2 \pi(\Omega)}{\partial \Omega_{ij}^2} \right|  d (\Omega\setminus \Omega_{ij}) \hspace{2cm} \ldots (***)\\
\end{eqnarray*}
\noindent
If,
\begin{eqnarray*} 
f_1(\Omega) &=& \exp(-tr(\Omega U)/4) \prod_{j=1}^p |\Omega_k|^{(\delta_k-\delta_{k+1})/2} \times \\
&& \left[ \left(-U_{ij} + \sum_{k=1}^p(\delta_k-\delta_{k+1})(\Omega_k^{-1})_{ij} \right)^2 + \sum_{k=1}^p (\delta_k-\delta_{k+1}) \left( -2\frac{|\Omega_k^(ij)|^2}{|\Omega_k|^2} + \frac{1}{|\Omega_k|} \frac{\partial |\Omega_k^(ij)|}{\partial \Omega_{ij}}\right) \right]
\end{eqnarray*}
and $f_2(\Omega)=\exp(-tr(\Omega U)/4)$ then $\frac{\partial^2 \pi(\Omega)}{\partial \Omega_{ij}^2} = f_1(\Omega) f_2(\Omega)$. If $\delta_k>4, \forall k$ then by the fact above, $f_1(\Omega)$ is uniformly bounded above on $\mathbb{P}_{G_\sigma}$ by a constant $M>0$. 
Thus,
\begin{eqnarray*}
&&\int \sup_{\{\Omega \in \mathbb{P}_{G_\sigma}|\Omega_{ij} \in (\Omega_{ij}^0,\Omega_{ij}^0+h)\}} \left| \frac{\partial^2 \pi(\Omega)}{\partial \Omega_{ij}^2} \right|  d (\Omega\setminus \Omega_{ij}) \\
& \leq &\int \sup_{\{\Omega \in \mathbb{P}_{G_\sigma}|\Omega_{ij} \in (\Omega_{ij}^0,\Omega_{ij}^0+h)\}}   M \exp(-tr(\Omega U)/4)d (\Omega\setminus \Omega_{ij}) \\
\end{eqnarray*}

The above integral is finite, which means we can take limit as $h \rightarrow 0$, in $(***)$ to obtain,
\[\frac{d \pi_{ij}(\Omega_{ij})}{d \Omega_{ij}} = \int \frac{\partial \pi(\Omega)}{\partial \Omega_{ij}} d (\Omega\setminus \Omega_{ij})\]
Since the support of $\pi_{ij}()$ is the entire real line, $\int \frac{d \pi_{ij}(\Omega_{ij})}{d \Omega_{ij}}d \Omega_{ij} = 0$. Thus $\int \frac{\partial \pi(\Omega)}{\partial \Omega_{ij}} d\Omega = 0$ which establishes $\bigstar$.

\subsection{Proof of Theorem \ref{thmdecomposableclosedform}}
\label{proofthmdecomposableclosedform}
If $L_I$ are the independent entries in $L$ then the Jacobian of the transformation from $\Omega \rightarrow (L_I,D)$, is $\prod_{j=1}^p D_j^{\nu_j}$.  Thus the unnormalized density of $(L_I,D)$ is,
 \[\pi^*_{U,\boldsymbol{\delta}}(L_I,D) = \prod_{j=1}^p D_j^{\frac{\delta_j}{2}+\nu_j} \exp \left(-\frac{1}{2} tr(LDL^T U) \right)\]
Since $G_\sigma$ is decomposable, and the vertices have been ordered by a perfect elimination ordering,
\[\Omega \in \mathbb{P}_{G_\sigma} \Leftrightarrow L \in \mathcal{L}_{G_\sigma}\] 
Now,
\begin{eqnarray*}
tr(LDL^TU) &=& tr(DL^TUL) = \sum_{j=1}^p D_{j} (L^TUL)_{jj} = \sum_{j=1}^{p-1} D_{j} L_{.j}^TUL_{.j} + D_pU_{pp} \\
\end{eqnarray*}
where $L_{.j}$ is the $j$-th column of $L$. Now $L_{ij}=0$ for $i<j$, $L_{jj}=1$ and since $G_{\sigma}$ is decomposable, for $i>j$, $L_{ij}=0$ if $(i,j) \notin E_{\sigma}$. Thus, if $L_{I_j}$ denotes the set of independent entries in the $j$-th column of $L$, then,
\begin{eqnarray*}
L_{.j}^TUL_{.j} &=& \left(1,L_{I_j}^T \right) \left( \begin{array}{cc}
U_{jj} & \bm{U^{\succ T}_{.j}} \\
\bm{U^{\succ}_{.j}} & U^{\succ j} \\
\end{array} \right)  \left(\begin{array}{c}
1 \\
L_{I_j}\\
\end{array} \right) \\
&=& (L_{I_j}-\bm{e_j})^T U^{\succ j} (L_{I_j}-\bm{e_j}) + c_j\\
\end{eqnarray*}

Thus,
\begin{eqnarray*}
\pi^*_{U,\boldsymbol{\delta}}(L_I,D) &=& \prod_{j=1}^{p-1}\left[ D_j^{\frac{\delta_j}{2}+\nu_j} \exp \left( -\frac{c_j D_j}{2} \right) \exp \left(-\frac{D_j}{2} (L_{I_j}-\bm{e_j})^T U^{\succ j} (L_{I_j}-\bm{e_j}) \right) \right] \\
&& \times D_p^{\frac{\delta_p}{2}} \exp \left(-\frac{c_p D_p}{2} \right) 
\end{eqnarray*}
which implies that $\{ (L_{I_j},D_j)_{j=1}^{p-1},D_p \}$ are mutually independent. After straightforward calculations we find that,

\begin{eqnarray*}
L_{I_j}|D_j &\sim & N\left( \bm{e_j},\frac{(U^{\succ j})^{-1}}{D_j} \right) \\
&and &\\
D_j &\sim & Gamma \left( \frac{\nu_j+\delta_j}{2}+1,\frac{c_j}{2} \right)\\
\end{eqnarray*}

\noindent
Now for $i \geq j$, we calculate the expectation of $\Omega_{ij} = \sum_{k \leq j} L_{ik} L_{jk} D_k$. Note that,
\[E(D_k) = \frac{\delta_k + \nu_k +2}{c_k}, \]
and for $i \geq j >k$,
\begin{eqnarray*}
E(L_{ik}L_{jk}D_k) &=& E\left( E(L_{ik} L_{jk}|D_k)D_k\right) \\
&=& E\left( \left[\frac{(U^{\succ j})^{-1}_{ij}}{D_k} + e_{ki} e_{kj} \right]  D_k\right) \mbox{ where $e_{ki}$ is the $i$-th entry of }\\
&=& (U^{\succ k})^{-1}_{ij} + e_{ki} e_{kj} \frac{\delta_k + \nu_k +2}{c_k} \\
\end{eqnarray*}
Similarly for $i>j$,
\[E(L_{ij} D_j) = e_{ji} \frac{\delta_j + \nu_j +2}{c_j}\]
Adding the expectations above gives us the required result, i.e.,
\[ E(\Omega) = \sum_{k<p}[(U^{\succ k})^{-1}]^0 +  e^T \mathcal{H} e \]

\subsection{Proof of Lemma \ref{lem4.0}}
\label{proof4.0}
(a) $\Rightarrow$ (b): It follows that for $r \leq k$ the power of any $\widetilde{D}_r$ in $L_{ik}$ and $L_{jk}$ can be 
$0,-1$ only. Hence the power of $\widetilde{D}_r$ in $L_{ik} L_{jk}$ can be $-2,-1,0$ only and hence the power of $
\widetilde{D}_r$ in $L_{ik} L_{jk}D_k = L_{ik} L_{jk}\widetilde{D}_1 \times \widetilde{D}_2 \times \ldots \times 
\widetilde{D}_k$ can be $-1,0,1$ only.

\smallskip

\noindent
(b) $\Rightarrow$ (a): Suppose to the contrary that Property-B does not hold. Then there exist $i > j$ with $(i,j) \notin E_\sigma$ and  $r<j$ such that the maximal negative power of $\widetilde{D}_r$ in the expansion of $L_{ij}$ is at least $-2$. Let $-k$ denote this power. Then the expansion of $L_{ij}^2D_j$ has at least one term where the power of $\widetilde{D}_r$ is $-2k + 1$. Since $-2k+1 \leq -3$, this contradicts (b).

\subsection{Proof of Lemma \ref{nocutting}} 
\label{proofnocutting}
Firstly, note that if either of $L_{ik}$ or $L_{jk}$ is functionally zero, then by assumption $L_{ik'} L_{jk'} D_{k'}$ is functionally non-zero, and we are done. Hence, without loss of generality, we consider a situation where both $L_{ik}$ and $L_{jk}$ are functionally non-zero. 
\begin{description}
\item[Case 1:] Suppose $(i,k) \in E_\sigma$ or $(j,k) \in E_\sigma$. Then $L_{ik} L_{jk} D_k$ 
is functionally dependent on atleast one of $L_{ik}$ or $L_{jk}$, whereas $L_{ik'} L_{jk'} D_{k'}$ is functionally 
independent of $L_{ik}$ and $L_{jk}$ by (\ref{powerexpan}). 
\item[Case 2:] Suppose $(i,k) \notin E_\sigma$ and $(j,k) \notin E_\sigma$. Then by (\ref{powerexpan}), each term in the expansion of $L_{ik}L_{jk}D_{k}$ is functionally dependent on $\widetilde{D}_k$. However, since $k' < k$, every term in the expansion of $L_{ik'} L_{jk'} D_{k'}$ is functionally independent of $\widetilde{D}_k$ (by (\ref{powerexpan})). 
\end{description}

\subsection{Proof of Lemma \ref{genbarimpliesab}} 
\label{proofotherone}

Suppose to the contrary that either Property-A or Property-B is violated.
\begin{description}
\item[Case 1::] Property-A does not hold.\\
Let $L_{ij}$ be the first (dependent) entry where it is violated. Thus $\exists$ $k<j$, s.t. $L_{ik}L_{jk}\frac{D_k}{D_j}$ has a square term. Now if any one of them is independent, say $L_{ik}$, then $L_{jk}$ has the square term because $L_{jk}$ can't have $L_{ik}$ in it's expansion. But that violates the assumption that $L_{ij}$ is the first term. Hence we have, $i>j>k$ such that,
\begin{eqnarray*}
(i,j) \notin E &\&\hspace{2mm} (i,k) \notin E &\& \hspace{2mm} (j,k) \notin E \\
\mbox{and     } L_{ij} \neq 0 &\&\hspace{2mm} L_{ik} \neq 0 &\& \hspace{2mm} L_{jk} \neq 0 \\
\end{eqnarray*}
Thus we have violated Generalized Bartlett Property.
\item[Case 2::] Property-B does not hold.\\
Let $L_{ij}$ be the first (dependent) entry where it is violated. Then  $\exists$ $i>j>k \geq s$, s.t. the power of $\widetilde{D}_s$ in $L_{ik}L_{jk}\frac{D_k}{D_j}$ is $\leq -2$. Now if any one (or both) of $L_{ik}$ and $L_{jk}$ is independent, we will get a contradiction since $L_{ij}$ is the first term. Thus we get $i>j>k$ s.t.,
\begin{eqnarray*}
(i,j) \notin E &\&\hspace{2mm} (i,k) \notin E &\& \hspace{2mm} (j,k) \notin E \\
\mbox{and     } L_{ij} \neq 0 &\&\hspace{2mm} L_{ik} \neq 0 &\& \hspace{2mm} L_{jk} \neq 0 \\
\end{eqnarray*}
contradicting Generalized Bartlett Property.
\end{description}

\subsection{Proof of Theorem \ref{expansion}}
\label{proofexpansion}
We prove that $\widetilde{G}$ satisfies Property-A and Property-B. In particular we will show that for every $i>j$, $L_{ij}$ satisfies Property A and Property B. Consider the following cases.
\begin{description}
\item[Case 1] 
Suppose
$ i \notin \{p_1,p_2,\ldots ,p_r\} $ and $j \in Graph\_before(i)$.

Here $(i,1),(i,2), \ldots ,(i,j) \notin \tilde{E}$, which implies that,
\[L_{i,1} = L_{i,2} = \ldots = L_{i,j} = 0 \]

\item[Case 2] 
Suppose $i \notin \{p_1,p_2,\ldots ,p_r\} $ and $j \in Graph(i), \mbox{ and } j<i$.

If $(i,j) \in \widetilde{E}$, then $L_{i,j}$ is an independent parameter. Otherwise using Case 1 we get that,
\begin{eqnarray*}
L_{i,j} &=& -\sum_{s < j}L_{i,s} L_{j,s} \frac{D_s}{D_j} = -\sum_{s < j \mbox{, } s \in Graph(j)} L_{i,s} L_{j,s} \frac{D_s}{D_j} \\
\end{eqnarray*}
Since $i>j>s$ and $Graph(i)=Graph(j)$ is a Generalized Bartlett graph, it follows from Lemma \ref{genbarimpliesab} that $L_{ij}$ has Property A and Property B. 

\item[Case 3]
Suppose $i \in \{p_1,p_2, \ldots, p_r \}$ and $j \in Graph\_before(i) \mbox{ and } j \notin \{p_1,p_2, \ldots, p_r \}$.

Since $(i,j) \notin \widetilde{E}$ again using Case 1, we get
\begin{eqnarray*}
L_{i,j} &=& -\sum_{s<j} L_{i,s}L_{j,s} \frac{D_s}{D_j} = -\sum_{s<j \mbox{ and } s \in Graph(j)} L_{i,s}L_{j,s} \frac{D_s}{D_j} \\
\end{eqnarray*}
Now suppose $Graph(j) = G_k$. Since $(i,p_{k-1}+1) \notin \widetilde{E}$, it follows that 
\[L_{i,p_{k-1}+1} = -\sum_{s < p_{k-1}+1} L_{i,s}L_{p_{k-1}+1,s} \frac{D_s}{D_{p_{k-1}+1}} = 0  \]
since by Case 1, $L_{p_{k-1}+1,s} = 0$.
\noindent
Now, we can show inductively for $p_{k-1}+2,\ldots,j$;  
\[L_{i,p_{k-1}+2}=0, \ldots, L_{i,j}=0\].

\item[Case 4]
Suppose $i \in \{p_1,p_2, \ldots, p_r \}$ and $j \in Graph(i)$.
If $(i,j) \in \widetilde{E}$ then $L_{i,j}$ is an independent parameter. Otherwise by Case 1, 
\begin{eqnarray*}
L_{i,j} &=& -\sum_{s<j} L_{i,s}L_{j,s} \frac{D_s}{D_j} = -\sum_{s<j \mbox{, } s \in Graph(j)} L_{i,s}L_{j,s} \frac{D_s}{D_j}\\
\end{eqnarray*}
Similar arguments as that in Case 2 can now be used to show that $L_{i,j}$  satisfies Property A and Property B.

\item[Case 5]
Suppose $i>j \mbox{ and } i,j \in \{p_1,p_2, \ldots ,p_r\}$.

If $(i,j) \in \widetilde{E}$ then $L_{i,j}$ is an independent parameter.If not,
\[L_{i,j} = -\sum_{s<j}L_{i,s}L_{j,s}\frac{D_s}{D_j}\]
Note that every $s<j$ belongs to $Graph\_before(i)$. By Case 3, for $s \notin \{p_1,p_2, \ldots, p_r\}$,  $L_{i,s}=0$.
Hence, 
\[ L_{i,j} = -\sum_{s<j \mbox{ and } s \in \{p_1,p_2, \ldots ,p_r\}}L_{i,s}L_{j,s}\frac{D_s}{D_j}\]
Since $i>j>s$ belongs to $\{p_1,p_2, \ldots ,p_r\}$ and $G$ is a Generalized 
 Bartlett graph it follows from Lemma \ref{genbarimpliesab} $L_{ij}$ has Property A and Property B.\end{description}

\subsection{Proof of Theorem \ref{tree}}
\label{prooftree}
We will prove Property-A and Property-B by considering the following cases. In particular we will show that for every $k>k'$, $L_{kk'}$ satisfies Property A and Property B. 
\begin{description}
\item[Case 1]
Suppose $k \in G_i^{(j)},k' \in G_{i'}^{(j')}$ for $(i,j)\neq(i',j')$ and $k>k'$.Here $k$ is not a neighbor of any of $\{1,2,\ldots,k'\}$. Thus $L_{kk'}=0$.

\item[Case 2]
Suppose $k>k'\in G_i^{(j)}$. If $(k,k') \in \widetilde{E}$, then $L_{kk'}$ is an independent parameter. Otherwise,
\begin{eqnarray*}
L_{kk'} &=& -\frac{1}{D_{k'}} \sum_{k'' < k'} L_{kk''} L_{k'k''} D_{k''} = -\frac{1}{D_{k'}} \sum_{k'' < k',\, k'' \in G_i^{(j)}} L_{kk''} L_{k'k''} D_{k''}.
\end{eqnarray*}
Since $k>k'>k''$ belong to the same Generalized Bartlett graph it follows from Lemma \ref{genbarimpliesab}, $L_{kk'}$ satisfies Property A and Property B.

\item[Case 3]
Suppose $k\in T$ and $k' \in G_k^{(j)}$. Then $L_{kk'}$ is independent.

\item[Case 4]
Suppose $k\in T$ and $k' \in G_{i'}^{(j')}$ where $k \neq i'$. Note that $(k,k') \notin \widetilde{E}$, and hence by Case 1, 
\begin{eqnarray*}
L_{kk'} &=& -\frac{1}{D_{k'}} \sum_{k'' < k'} L_{kk''} L_{k'k''} D_{k''} = -\frac{1}{D_{k'}} \sum_{k'' < k',\, k'' \in G_{i'}^{(j')}} L_{kk''} L_{k'k''} D_{k''} 
\end{eqnarray*}
If $k_1$ is the smallest labeled element of $G_{i'}^{(j')}$, $L_{kk_1}=0$. From here by induction in can be proved that $L_{kk'}=0$.
\end{description}
Cases 1 through 4 prove that for all $k>k'$, $L_{kk'}$ satisfies Property A and Property B.

\subsection{Proof of Lemma \ref{lemgrid}}

We shall prove the Generalized Bartlett property by induction on $n$. Suppose $n=2$. We will now express the dependent entries in terms of the independent entries and see whether any quadratic terms appear or not.
\begin{eqnarray*}
L_{31} &=& L_{51} = L_{61} = L_{62} = 0\\
L_{42} &=& -L_{41}L_{21} \frac{D_1}{D_2}\\
L_{43} &=& -L_{41}L_{31} \frac{D_1}{D_3}-L_{42}L_{32} \frac{D_2}{D_3} = L_{41} L_{21} L_{32} \frac{D_1 D_2}{D_3} \\
L_{53} &=& -L_{52}L_{32} \frac{D_2}{D_3}, \hspace{.5cm} L_{64} = -L_{63}L_{43} \frac{D_3}{D_4} \\
\end{eqnarray*}
Thus for $(n=2) \times 3$ grid, Property-A and Property-B is satisfied. Now suppose it is satisfied for $(n=k) \times 3$ grid, we want to prove for a $(n=k+1) \times 3$ grid.\\

\noindent
Again we will express the dependent entries in terms of the independent entries and see whether any quadratic terms appear or not. The induction hypothesis implies that Property A and Property B are satisfied for $L_{ij}$ with $i \leq 3k$. Note that,
\begin{eqnarray*}
L_{3k+1,1} &=& L_{3k+1,2} = \ldots = L_{3k+1,3k-3} =0 \\
L_{3k+1,3k-1} &=& -L_{3k+1,3k-2}L_{3k-1,3k-2} \frac{D_{3k-2}}{D_{3k-1}} \\
L_{3k+1,3k} &=& -L_{3k+1,3k-2} L_{3k,3k-2} \frac{D_{3k-2}}{D_{3k}} -L_{3k+1,3k-1} L_{3k,3k-1} \frac{D_{3k-1}}{D_{3k}}\\
&=& -L_{3k+1,3k-2} L_{3k,3k-2} \frac{D_{3k-2}}{D_{3k}} +L_{3k+1,3k-2} L_{3k-1,3k-2} L_{3k,3k-1} \frac{D_{3k-2}}{D_{3k}} \\
&& \hspace{3cm} \mbox{[$L_{3k,3k-2}$ does not contain $L_{3k+1,3k-2}$]}\\ 
L_{3k+2,1} &=& \ldots=L_{3k+2,3k-3} = L_{3k+2,3k-2}=0 \\
L_{3k+2,3k} &=& -L_{3k+2,3k-1}L_{3k,3k-1} \frac{D_{3k-1}}{D_{3k}} \\
L_{3k+3,1} &=& L_{3k+3,2}= \ldots=L_{3k+3,3k-3} = L_{3k+3,3k-2}=L_{3k+3,3k-1}=0 \\
L_{3k+3,3k+1} &=& -L_{3k+3,3k}L_{3k+1,3k} \frac{D_{3k}}{D_{3k+1}} \hspace{1cm} \mbox{[ $L_{3k+1,3k}$ does not contain $L_{3k+3,3k}$]} \\
\end{eqnarray*}
Thus a $(n=k+1) \times 3$ grid satisfies Property-A. We now proceed to check Property-B. Since $L_{3k,3k-2}$ and $L_{3k+1,3k}$ satisfy Property-B, so does $L_{3k,3k-2}\frac{D_{3k-2}}{D_{3k}}$ and $L_{3k+1,3k} \frac{D_{3k}}{D_{3k+1}}$. Hence by induction, and Theorem 4, for all $n \geq 2$, $n \times 3$ grid satisfies the Generalized Bartlett property.

\section{Comparison with Letac-Massam distributions}
\noindent
\label{completacmassam}
For a decomposable graph $G=(V,E)$, \cite{letacmassam} generalized the Wishart distribution, by defining two classes of distributions with multiple shape parameters on the cones $\mathcal{Q}_G$ and $\mathbb{P}_G$. They are called Type I and Type II Wishart distributions and have proved to be useful in high-dimensional Bayesian inference, as shown in \cite{rajamassamcarv}. Let $I_G$ denote the set of incomplete $p \times p$ matrices $X$, where $X_{ij}$ is missing iff $(i,j) \notin E$. Recall that $\mathcal{Q}_G$ is defined as 
\[\mathcal{Q}_G = \{X \in I_G| X_C \mbox{ is positive definite for all cliques } C \mbox{ in } G \}.\]
For $U \in \mathcal{Q}_G$, let $\hat{U}$ be the unique $p \times p$ matrix such that, $\hat{U}_{ij}=U_{ij}$ for $i=j$ and $(i,j) \in E$, and $\hat{U}^{-1} \in \mathbb{P}_G$. Also for a $p \times p$ matrix $X$, let $\kappa(X)$ be symmetric incomplete matrix such that $(i,j)$-th entry is missing if $(i,j) \notin E$, and $\kappa(X)_{ij}=X_{ij}$ otherwise. 

It is natural to compare and contrast our generalized $G$-Wishart distributions with the Letac-Massam distributions when $G$ is decomposable.

First, we consider the $W_{\mathbb{P}_G}$ family of Type II Wishart distributions (defined on the space $\mathbb{P}_G$) in \cite{letacmassam}. In particular, the $W_{\mathbb{P}_G}^{U,\bm{\alpha,\beta}}$ density on $\mathbb{P}_G$ is proportional to 
\begin{eqnarray*}
W_{\mathbb{P}_G}^{U,\bm{\alpha,\beta}} \left(\Omega \right) &\propto & \exp \left(-tr(\Omega U)/2 \right)  \times \frac{\prod_{C \in \mathcal{C}} |(\Omega^{-1})_C|^{\alpha(C)+(c+1)/2}}{\prod_{S \in \mathcal{S}} |(\Omega^{-1})_S|^{\nu(S)(\beta(S)+(s+1)/2)}}. 
\end{eqnarray*}
 
\noindent
Clearly, the exponential term in the above density is the same as the exponential term in the generalized 
$G$-Wishart density in (\ref{defggwshrt}). Now let us compare terms outside the exponential. For the 
generalized $G$-Wishart density in (\ref{defggwshrt}), the non-exponential term is 
\[ D_{11}^{\delta_1/2} D_{22}^{\delta_2/2} D_{33}^{\delta_3/2} D_{44}^{\delta_4/2}, \]

\noindent
where $\Omega = LDL^T$ is the modified Cholesky decomposition of $\Omega$. The corresponding term for 
$W_{\mathbb{P}_G}$ is \[\frac{\prod_{C \in \mathcal{C}} |(\Omega^{-1})_C|^{\alpha(C)+(c+1)/2}}{\prod_{S 
\in \mathcal{S}} |(\Omega^{-1})_S|^{\nu(S)(\beta(S)+(s+1)/2)}}.\]

\noindent
To contrast these two terms, we consider the case when the graph $G$ is the $4$-chain, $A_4$, given by 
$\bullet-\bullet-\bullet-\bullet$. Hence,  $C_1=\{1,2\},C_2=\{2,3\},C_3=\{3,4\}$ and $S_2=\{3\},S_3=\{4\}$. It follows that 
\begin{eqnarray*}
\frac{\prod_{i=1}^3 |(\Omega^{-1})_{C_i}|^{\alpha_i}}{\prod_{i=2}^3 |(\Omega^{-1})_{S_i}|^{\beta_i}} &=& \left(\frac{1}{D_{11}} \right)^{\alpha_1} \left(\frac{1}{D_{22}} +\frac{L_{32}^2}{D_{33}} +\frac{L_{32}^2 L_{43}^2}{D_{44}} \right)^{\alpha_1-\beta_1} \\
&\times & \left(\frac{1}{D_{22}} \right)^{\alpha_2} \left(\frac{1}{D_{33}} +\frac{L_{43}^2}{D_{44}} \right)^{\alpha_2-\beta_2} \left(\frac{1}{D_{33}D_{44}} \right)^{\alpha_3}. 
\end{eqnarray*}

\noindent
Thus even for this simple graph the non-exponential term for $W_{\mathbb{P}_G}$ is very different than 
the corresponding non-exponential term for the generalized $G$-Wishart. However, if the graph $G$ is homogeneous, then \cite{letacmassam} shows that for any clique $C$ and separator $S$,
\[|(\Omega^{-1})_C|=\prod_{i \in C} \frac{1}{D_{ii}},|(\Omega^{-1})_S|=\prod_{i \in S} \frac{1}{D_{ii}}.\] 

\noindent
In the homogeneous setting we see that the term outside the exponential is similar to that of the generalized $G$-Wishart. The family of generalized $G$-Wishart distributions introduced in the paper are therefore in general structurally different than the family of Type I and Type II Wishart distributions introduced in \cite{letacmassam}. In the special case of homogeneous graphs, the family of generalized $G$-Wishart distributions coincides with the family of Type II Wishart distributions in \cite{letacmassam}.

Next we consider the family of Type I Wishart distributions (defined on the space $\mathcal{Q}_G$), which is  refereed to as $W_{\mathcal{Q}_G}$ in \cite{letacmassam}. The family of inverse Wishart distributions induced by $W_{\mathcal{Q}_G}$ on the space $\mathbb{P}_G$ is referred to as $IW_{\mathcal{Q}_G}$. In particular, the 
$IW_{\mathcal{Q}_G}^{U,\bm{\alpha,\beta}}$ density on $\mathbb{P}_G$ is proportional to 
\begin{eqnarray*}
\exp \left(-tr(\Omega^{-1} U)/2 \right)  \times \frac{\prod_{C \in \mathcal{C}} |(\Omega^{-1})_C|^{\alpha(C)+(c+1)/2}}{\prod_{S \in \mathcal{S}} |(\Omega^{-1})_S|^{\nu(S)(\beta(S)+(s+1)/2)}}, 
\end{eqnarray*}

\noindent
where $U \in \mathcal{Q}_G$ and $\alpha(C), C \in \mathcal{C}$, $\beta(S), S \in \mathcal{S}$ are real numbers, $c=|C|,s=|S|$ and $\nu(S)$ is the multiplicity of the minimal separator $S$ which is positive and independent of the perfect order of the cliques considered (as proved by \cite{lauritzen}). Note that, as 
expected, the exponential term in the above density is $\exp \left(-tr(\Omega^{-1}U)/2 \right)$, whereas 
the exponential term in the generalized $G$-Wishart density in (\ref{defggwshrt}) is $\exp \left( - 
tr(\Omega U)/2 \right)$. Hence the difference between the two classes is fundamental.

\section{Model Selection Example} \label{modelselectionexp}
We now demonstrate through a simulation experiment that the methodology proposed in the paper is competitive with standard methods for high-dimensional graphical model selection. 

For a given dimension $p$, a ``true" sparse graph $G = (V,E)$ with $p$ vertices is chosen by taking a simple random sample (without replacement) of size $\frac{p*(p-1)}{2}*0.01$ from the total number of possible edges. We consider five different values for the number of variables $p$, ranging from $p=50$ to $p=1000$. The sample size is chosen to be $100$, $200$ or $300$. Then, a ``true" precision matrix $\Omega_0 \in \mathbb{P}_G$ is generated by taking $\Omega_0=LDL^T$, where $D$ is the identity matrix and $L_{ij}$ is a some constant (depending on $p$) if $\{i>j, \, (i,j) \in E\}$ else $L_{ij}=-\frac{1}{D_j} \sum_{k<j}L_{ik}L_{jk}D_{k}$. Then, $n$ {\it i.i.d.} samples from a $N(0,\Omega_0^{-1})$ distribution are generated. Let $S$ denote the sample covariance matrix of these $n$ samples. 

The goal now is to estimate the original graph $G$. Our approach is as follows. We first obtain a collection of ``good" models (equivalently, graphs) by using the popular penalized graphical model selection method \cite{glasso}, and then use our Bayesian approach to select the best model out of this collection. The Glasso method takes a penalty parameter $\rho$ as an input, and for a given value of $\rho$ provides a sparse estimate of the inverse covariance matrix $\Omega$. The sparsity pattern in the estimate of $\Omega$ in turn leads to an estimate of the underlying graph/model. \cite{banerjee} propose a simple and popular method for choosing the penalty parameter $\rho$ for Glasso, and thereby choosing a graph. 

Our model selection algorithm works in conjunction with Glasso, the penalized likelihood algorithm introduced in \cite{glasso}. We shall consider a grid of penalty parameters for Glasso and consider models with various levels of sparsity. Before applying the Glasso algorithm we standardize the covariance matrix $S$. In this case, it is known that $\rho=1$ produces extremely sparse models and $\rho$ close to $0$ produces extremely dense models. Our penalty parameter grid starts with $\rho=1$ and ends at $0.01$ and decreases by steps of $0.02$. For each value of $\rho$ in this grid, the Glasso algorithm is run to obtain a graph estimate with adjacency matrix $M^{\rho}$. Graphs with edge density from $0\%$ to $10\%$ are considered.

For each $M^{\rho}$ thus obtained, we use the Deviance Information Criterion (DIC) as a measure of how well the estimated  graph/model fits the data. Recall from \cite{mitsakakis2011} that $DIC = 2\bar{D}-D(\bar{\Omega})$, where $D(\Omega) = n*(tr(\Omega S)-\log(|\Omega|))$, $\bar{D}$ is the posterior expectation of $D(\Omega)$, and $\bar{\Omega}$ is the posterior expectation of $\Omega$. Ideally, for each value of $\rho$, we would like to compute the DIC for the model corresponding to $M^{\rho}$. Note however, that the graph corresponding to $M^{\rho}$ may not in general be Generalized Bartlett. Thus we generate a Generalized Bartlett cover $M^{cover,\rho}$ for $M^{\rho}$ as described in Algorithm \ref{genbarcover}. If $p$ is large, and $M^{cover, \rho}$ is quite dense, then for computational reasons, we choose a decomposable cover $M^{dc,\rho}$ using the R package ``igraph" . Once the appropriate cover has been computed, we compute the DIC score corresponding to this cover using hyperparameter values $U=cI_p$ (where $c$ is the mean of the diagonal entries of $nS$) and $\delta_j=(U_{jj}+nS_{jj})/S_{jj}$ for $1 \leq j \leq p$. This DIC score is treated as a measure of goodness of fit for the model corresponding to $M^\rho$. Finally, we choose the $M^\rho$ with the best goodness of fit score. For each value of $p$, the whole process is repeated $20$ times, and the average sensitivity and specificity is reported in Table \ref{banerjee}. For comparison purposes we also report the average sensitivity and specificity of the model obtained by using the $l_1$ penalized likelihood method proposed in \cite{banerjee}. 

In order to make sure that we are searching for models in a range whose edge density includes the true density ($1\%$) we use the starting value of $\rho=1$ (edge density almost $0\%$) and the algorithm ends at value of $\rho$ with edge density around $7\%$. Thus the true edge density of $1\%$ lies in the range. 

We compare the model selection performance of Glasso (using the approach in \cite{banerjee}) and the generalized $G$-Wishart based Bayesian approach outlined above, in Table \ref{banerjee}. Both approaches have very high specificity, with the Glasso based approach performing slightly better. On the contrary, the generalized $G$-Wishart approach shows an immense improvement in terms of sensitivity as compared to the Glasso approach. This is particularly useful in high dimensional biological applications where discovery of an important gene is much more important than exclusion of a  non-important one.

\begin{table}[h]
\begin{tabular}{|c|c|c|c|c|c|c|c|}
\hline
$p$ & $n$ & \multicolumn{2}{c|}{Specificity} & \multicolumn{2}{c|}{Sensitivity}\\ \hline
 & &  Glasso-Ban & gen. G-Wishart &   Glasso-Ban & gen. G-Wishart\\ \hline
50 & 100 &   1 & 0.9830 &  0.5833 & 1\\ \hline
100 & 100 &  1 & 0.9714 &  0 & 0.8663\\ \hline
200 & 100 &  1 & 0.9316 & 0.0007538 &	0.7781\\ \hline
500 & 200 &  0.9999 & 0.9166 &  0.0041 & 0.5570\\ \hline
1000 & 300 & 0.9899 & 0.9214 &  0.0023 & 0.2772\\ \hline
\end{tabular}
\caption{Model selection comparison of Glasso (with penalty parameter chosen by \cite{banerjee}) and generalized $G$-Wishart based Bayesian approach}
\label{banerjee}
\end{table}

\section{Application to Breast cancer data}
\label{breastcancerdata}
In this section, we use the methodology developed in this paper to analyze a dataset from a breast cancer study in \cite{changetal2005}. This study is based on $n=248$ patients, whose expression level of $24481$ genes are recorded. As in \cite{kharesangraja2013}, we focus on the reduced dataset of $p=1107$ genes closely associated with breast cancer. The objective is to obtain a sparse partial correlation graph, i.e., a sparse estimate of the inverse covariance matrix for the $1107$ genes, to identify the hub genes. As in Section \ref{modelselectionexp}, we shall choose candidate partial correlation graphs by using penalized likelihood/pseudo-likelihood methods, and then choose the best graph by computing the DIC score using the Bayesian methodology developed in this paper. The idea is to reduce our search space to a handful of graphs and then use the generalized Bartlett methodology developed in this paper for Bayesian model selection. To obtain the candidate graphs, we shall use four standard penalized algorithms: SYMLASSO (\cite{friedmanetal2010}), CONCORD (\cite{kharesangraja2013}), GLASSO (\cite{glasso}) and SPACE (\cite{pengetal2009}). For each of these algorithms, the respective penalty parameters are chosen so that the resulting partial correlation graph has 100 edges. All the four graphs thus obtained are not Generalized Bartlett, and we obtain Generalized Bartlett covers for each of them using Algorithm \ref{genbarcover}. All of these covers have at most $3$ extra edges as compared to the original graph. 

Note that each of these four partial correlation (cover) graphs represents a concentration graph model. Let $S$ denote the sample covariance matrix. Note that, as in \cite{kharesangraja2013}, each of the $p = 1107$ data columns were centered and scaled (with respect to mean absolute deviation) prior to computing $S$. For each of the four models, we choose a generalized Wishart prior with parameters $U$ and 
$\bm{\delta}$ as \[U = mean(diag(n*S))*I_p+n*S\] and \[\bm{\delta} = \mbox{mean}(diag(U)/diag(S)) {\bf 1}_p,\]

\noindent
where ${\bf 1}$ denotes the $p$-dimensional vector of all ones. Next, we run the Gibbs sampling algorithm for each of these four scenarios for $1000$ steps. The resulting Markov chains are used to compute the DIC score for each of the four partial correlation graphs (using the procedure from \cite{mitsakakis2011} outlined in Section \ref{modelselectionexp}). The DIC scores are provided in the Table \ref{4algorithms}, and show that the graph chosen using the CONCORD algorithm performs the best. This example illustrates that the methodology developed in this paper can be used in conjunction with DIC for high-dimensional graphical model selection in applied settings. 

\begin{table}[h]
\centering
\begin{tabular}{|p{2.5cm}|p{2cm}|}
\hline
Algorithm &  DIC \\ \hline
SYMLASSO &  298816.6 \\
CONCORD & 295766.6 \\
GLASSO & 299601.1 \\
SPACE  &  299302.8 \\ \hline
\end{tabular}
\caption{Comparison of 4 algorithms}
\label{4algorithms}
\end{table}

\end{document}